\pdfoutput=1
\documentclass[final]{article}
\usepackage[numbers]{natbib}

\usepackage{neurips_2024}

\usepackage[utf8]{inputenc} 
\usepackage[T1]{fontenc}    
\usepackage{hyperref}       
\usepackage{url}            
\usepackage{booktabs}       
\usepackage{amsfonts}       
\usepackage{nicefrac}       
\usepackage{microtype}      
\usepackage{xcolor}         
\usepackage{amsmath}
\usepackage{bm}
\usepackage{amsthm}
\usepackage{amssymb}
\usepackage{enumitem}
\usepackage{wrapfig}
\usepackage{graphicx} 
\usepackage{threeparttable}
\usepackage{multirow}
\usepackage{subfigure}
\usepackage{algpseudocode}
\usepackage{algorithmicx,algorithm}
\usepackage{mathtools}

\newtheorem{definition}{Definition}
\newtheorem{theorem}{Theorem}

\newtheorem{proposition}{Proposition}
\newtheorem{corollary}{Corollary}
\newtheorem{fact}{Fact}

\title{Understanding and Improving Adversarial Collaborative Filtering for Robust Recommendation}

\author{%
  Kaike Zhang$^{1,2}$, Qi Cao$^{1}$\thanks{Corresponding author.}, Yunfan Wu$^{1,2}$, Fei Sun$^1$, Huawei Shen$^1$, Xueqi Cheng$^1$\\
  $^1$ CAS Key Laboratory of AI Safety, Institute of Computing Technology, \\
  Chinese Academy of Sciences\\
  $^2$ University of Chinese Academy of Sciences \\
  Beijing, China\\
  \texttt{\{zhangkaike21s, caoqi, wuyunfan19b, sunfei, shenhuawei, cxq\}@ict.ac.cn} \\
}

\begin{document}

\maketitle

\begin{abstract}

Adversarial Collaborative Filtering (ACF), which typically applies adversarial perturbations at user and item embeddings through adversarial training, is widely recognized as an effective strategy for enhancing the robustness of Collaborative Filtering (CF) recommender systems against poisoning attacks. Besides, numerous studies have empirically shown that ACF can also improve recommendation performance compared to traditional CF. Despite these empirical successes, the theoretical understanding of ACF's effectiveness in terms of both performance and robustness remains unclear. To bridge this gap, in this paper, we first theoretically show that ACF can achieve a lower recommendation error compared to traditional CF with the same training epochs in both clean and poisoned data contexts. Furthermore, by establishing bounds for reductions in recommendation error during ACF's optimization process, we find that applying personalized magnitudes of perturbation for different users based on their embedding scales can further improve ACF's effectiveness. Building on these theoretical understandings, we propose \textbf{P}erson\textbf{a}lized \textbf{M}agnitude \textbf{A}dversarial \textbf{C}ollaborative \textbf{F}iltering (\textbf{PamaCF}). Extensive experiments demonstrate that PamaCF effectively defends against various types of poisoning attacks while significantly enhancing recommendation performance.

\end{abstract}

\section{Introduction}
\label{sec:intro}

Collaborative Filtering (CF) is widely recognized as a powerful tool for providing personalized recommendations~\cite{su2009survey, koren2009matrix, he2020lightgcn} across various domains~\cite{smith2017two, gomez2015netflix}. However, the inherent openness of recommender systems allows attackers to inject fake users into the training data, aiming to manipulate recommendations, also known as poisoning attacks~\cite{huang2021data, tang2020revisiting}. Such manipulations can skew the distribution of item exposure, degrading the overall quality of the recommender system, thus harming the user experience and hindering the long-term development of the recommender system~\cite{zhang2023robust}.

Existing methods for defending against poisoning attacks in CF can be categorized into two types~\cite{zhang2023robust}: (1) detecting and mitigating the influence of fake users~\cite{chung2013beta, zhang2014hhtsvm, yang2016rescale, zhang2020gcnbased, zhang2024lorec, liu2020recommending}, and (2) developing robust models via adversarial training, also known as Adversarial Collaborative Filtering (ACF)~\cite{he2018adversarial, chen2019adversarial, li2020adversarial, wu2021fight, ye2023towards, chen2023adversarial}. The first strategy focuses on detecting and removing fake users from the dataset before training~\cite{chung2013beta, zhang2014hhtsvm, yang2016rescale, liu2020recommending} or mitigating their impact during the training phase~\cite{zhang2020gcnbased, zhang2024lorec}. These methods often rely on predefined assumptions about attacks~\cite{chung2013beta, zhang2020gcnbased} or require labeled data related to attacks~\cite{zhang2014hhtsvm, yang2016rescale, zhang2020gcnbased, zhang2024lorec}. Consequently, deviations from predefined attack patterns may lead to misclassification, failing to resist attacks while potentially harming genuine users' experience~\cite{zhang2024lorec}.

In contrast, ACF provides a more general defense paradigm without prior knowledge~\cite{he2018adversarial, chen2019adversarial, li2020adversarial, wu2021fight, ye2023towards, chen2023adversarial}. Poisoning attacks in recommender systems mainly affect the learned embeddings of users and items, i.e., the system’s parameters~\cite{huang2021data, tang2020revisiting}. Predominant ACF methods, particularly those aligned with Adversarial Personalized Ranking (APR) framework~\cite{he2018adversarial}, heuristically incorporate adversarial perturbations at the parameter level during the training phase to mitigate these attacks~\cite{he2018adversarial, li2020adversarial, ye2023towards, chen2023adversarial}. This approach employs a ``min-max'' paradigm, designed to minimize the recommendation error while contending with parameter perturbations aimed at maximizing this error within a specified magnitude~\cite{he2018adversarial}, thus enhancing the robustness of CF.

It is interesting to note that adversarial training in the Computer Vision (CV) domain~\cite{singh2024revisiting, tramer2017ensemble, NEURIPS2023_b589d927} has been observed to degrade model performance on clean samples~\cite{tramer2019adversarial, weng2020trade}. Several studies have also theoretically demonstrated a trade-off between robustness against evasion attacks and the performance of adversarial training in CV~\cite{zhang2019theoretically}. In contrast, ACF in recommender systems has been shown in numerous studies not only to enhance the robustness against poisoning attacks~\cite{zhang2023robust, zhang2024lorec, wu2021fight} but also to improve recommendation performance~\cite{he2018adversarial, chen2023adversarial, zhang2024improving}. Despite the empirical evidence highlighting ACF's advantages, it still lacks a comprehensive theoretical understanding, which limits the ability to fully exploit the benefits and potential of ACF. To bridge this gap, in this paper, we propose the following research questions for further investigation:
\begin{enumerate}[label=\roman*., leftmargin=*]
    \item \textit{Why does ACF enhance both robustness and performance compared to traditional CF?}
    \item \textit{How can we further improve ACF?}
\end{enumerate}

To answer these questions, we delve into a theoretical analysis of a simplified CF scenario. This analysis confirms that ACF can achieve a lower recommendation error at the same training epoch in both clean and poisoned data contexts, showing better performance and robustness compared to traditional CF. To investigate potential improvements to ACF, we establish upper and lower bounds for reductions in recommendation error during ACF's optimization process. Our findings indicate that (1) Users have varying constraints for perturbation magnitudes, i.e., different maximum perturbation magnitudes; (2) Within these constraints, applying personalized perturbation magnitudes as much as possible for each user can increase the error reduction bounds, further improving ACF's effectiveness. 

Extending our theoretical results to practical CF scenarios, we establish a positive correlation between users' maximum perturbation magnitudes and their embedding scales. Building on these theoretical understandings, we introduce \textbf{P}ersonal\textbf{a}lized \textbf{M}agnitude \textbf{A}dversarial \textbf{C}ollaborative \textbf{F}iltering (\textbf{PamaCF}). PamaCF dynamically and personally assigns perturbation magnitudes based on users' embedding scales. Extensive experiments confirm that PamaCF outperforms baselines in both performance and robustness. Notably, PamaCF increases the average recommendation performance of the backbone model by 13.84\% and reduces the average success ratio of attacks by 44.92\% compared to the best baseline defense method. The main contributions of our study are summarized as follows:
\begin{itemize}[leftmargin=*]
    \item We provide theoretical evidence that ACF can achieve better performance and robustness compared to traditional CF in both clean and poisoned data contexts.
    \item We further identify upper and lower bounds of reduction in recommendation error for ACF during its optimization and demonstrate that applying personalized magnitudes of perturbation for each user can further improve ACF.
    \item Based on the above theoretical understandings, we propose Personalized Magnitude Adversarial Collaborative Filtering (PamaCF), with extensive experiments confirming that PamaCF further improves both performance and robustness compared to state-of-the-art defense methods.
\end{itemize}

The code of our experiments is available at \url{https://github.com/Kaike-Zhang/PamaCF}.

\section{Preliminary}
\label{sec:pre}
\textbf{Collaborative Filtering} (CF) methods are widely employed in recommender systems. Following~\cite{su2009survey}, we define a set of users $\mathcal{U} = \{u\}$ and a set of items $\mathcal{V} = \{v\}$. Using data from user-item interactions, our objective is to learn latent embeddings $\bm{U} = [\bm{u} \in \mathbb{R}^d]_{u \in \mathcal{U}}$ for users and $\bm{V} = [\bm{v} \in \mathbb{R}^d]_{v \in \mathcal{V}}$ for items. Then, we employ a preference function $f: \mathbb{R}^d \times \mathbb{R}^d \rightarrow \mathbb{R}$, which predicts user-item preference scores, denoted as $\hat{r}_{u,v} = f(\bm{u}, \bm{v})$.

\textbf{Adversarial Collaborative Filtering} (ACF) is acknowledged as an effective approach for enhancing both the performance and the robustness of CF recommender systems in the face of poisoning attacks. ACF methods, particularly within the framework of Adversarial Personalized Ranking (APR)~\cite{he2018adversarial}, integrate adversarial perturbations at the parameter level (i.e., the latent embeddings $\bm{U}$ and $\bm{V}$) during the training phase. Let $\mathcal{L}(\Theta)$ denote the loss function of the CF recommender system, where $\Theta = (\bm{U}, \bm{V})$ represents the recommender system's parameters. ACF methods apply perturbations $\Delta$ directly to the parameters as:
\begin{equation}
\label{eq:adv_delta}
    \begin{aligned}
    \mathcal{L}_{\mathrm{ACF}}(\Theta) = & \mathcal{L}(\Theta) + \lambda \mathcal{L}(\Theta + \Delta^{\mathrm{adv}}), \\
    \mathrm{where} \quad \Delta^{\mathrm{adv}} = & \arg \max_{\Delta,\, \Vert \Delta \Vert \leq \epsilon} \mathcal{L}(\Theta + \Delta),
    \end{aligned}
\end{equation}
where $\epsilon > 0$ defines the maximum magnitude of perturbations, and $\lambda$ is the adversarial training weight.
Due to constraints on space, a detailed discussion of related works is provided in Appendix~\ref{ap:related}.

\section{Theoretical Understanding of ACF}
\label{sec:Theo}

In this section, we provide a theoretical analysis of why ACF achieves superior performance and robustness compared to traditional CF from the perspective of recommendation error. Then, we explore mechanisms to further improve ACF's effectiveness based on its error reduction bounds. For clarity and simplicity, we initially focus on a Gaussian Single-item Recommender System, aligning with the frameworks presented in~\cite{wu2021fight, NEURIPS2018_f708f064}. It's important to \textbf{note that} the insights and analytical frameworks developed here are also applicable to more practical scenarios, as discussed in Section~\ref{sec:method}.

\begin{definition}[Gaussian Recommender System]
    \label{def:rs}
    Given a rating set $\mathcal{R} = \{r_1, r_2, \dots, r_n\}$ corresponding to $n$ users, where each rating $r$ is randomly selected from $\{\pm 1\}$, an average embedding vector $\bar{\bm{u}} \in \mathbb{R}^d$, and $\sigma > 0$, the Gaussian Recommender System initializes each user's embedding $\bm{u}$ from the normal distribution $\mathcal{N}(r \bar{\bm{u}}, \sigma^2 \bm{I})$. The item embedding $\bm{v}$ is initialized as the average vector derived from these users: $\bm{v} = \frac{1}{n} \sum_{i = 1}^{n} r_i \bm{u}_i$. Then, a preference function $f: \mathbb{R}^d \times \mathbb{R}^d \rightarrow \{\pm 1\}$ is employed to predict user preferences: $f(\bm{u}, \bm{v}) = \mathrm{sgn}(\langle \bm{v}, \bm{u} \rangle)$, where $\mathrm{sgn}(\cdot)$ denotes the sign function, returning 1 if $\langle \bm{u}, \bm{v} \rangle > 0$ and -1 otherwise.
\end{definition}

Based on Definition~\ref{def:rs}, we obtain $\mathcal{I} = \{(\bm{u}_1, r_1), \ldots, (\bm{u}_n, r_n)\}$, where $\bm{u}$ represents the system-learned user embedding. With continued training, both each user embedding $\bm{u}$ and item embedding $\bm{v}$ are iteratively updated. Let $\bm{u}_{(t)}$ and $\bm{v}_{(t)}$ denote user and item embeddings at the $t^{\mathrm{th}}$ epoch, respectively. For analytical simplicity and without loss of generality, we define the standard loss function $\mathcal{L}(\Theta)$ (as traditional CF) used in the Gaussian Recommender System as follows~\cite{wu2021fight}:
\begin{equation}
    \label{eq:loss}
    \mathcal{L}(\Theta_{(t)}) = - \sum_{(\bm{u}, r) \in \mathcal{I}} \left[r \cdot \langle  \bm{u}_{(t)}, \bm{v}_{(t)} \rangle \right],
\end{equation}
where the model parameters $\Theta_{(t)} = \left(\bm{v}_{(t)}, \left[\bm{u}_{1, (t)}, \bm{u}_{2, (t)}, \ldots, \bm{u}_{n, (t)}\right]\right)$.
To integrate ACF into the Gaussian Recommender System, we introduce the adversarial loss~\cite{he2018adversarial}, $\mathcal{L}_{\mathrm{adv}}(\Theta)$, defined as:
\begin{equation}
    \label{eq:adv_loss}
    \mathcal{L}_{\mathrm{adv}}(\Theta_{(t)}) = \mathcal{L}(\Theta_{(t)}) - \lambda \sum_{(\bm{u}, r) \in \mathcal{I}} \left[ r \cdot \langle  \bm{u}_{(t)} + \Delta_{\bm{u}}, \bm{v}_{(t)} + \Delta_{\bm{v}} \rangle  \right],
\end{equation}
where $\lambda$ is the adversarial training weight. The perturbations $\Delta_{\bm{u}}$ and $\Delta_{\bm{v}}$ are applied to the user and item embeddings, respectively, as computed based on Equation~\ref{eq:adv_delta}.

\subsection{Why Does Adversarial Collaborative Filtering Benefit Recommender Systems?}
\label{sec:res_er}

To analyze the performance and robustness of traditional CF and ACF within the Gaussian Recommender System, we evaluate them from the perspective of recommendation error during the training process. For each user, both performance and robustness are reflected by the user's recommendation error. Specifically, attacks---whether item promotion attacks~\cite{huang2021data, li2022revisiting} or performance damage attacks~\cite{zhang2023robust}---inevitably increase the user's recommendation error. Meanwhile, a smaller recommendation error means a higher recommendation performance. For a given user $\bm{u}$, the initial item embedding $\bm{v}_{(0)}$ in the Gaussian Recommender System can be approximately modeled\footnote{The precise form is $\mathcal{N}(\Bar{\bm{u}}, \frac{(n-1)\sigma^2}{n^2}\bm{I})$, but we make this approximation for the sake of clarity and brevity. The approximation does not impact the subsequent theoretical results.} as a sample from $\mathcal{N}(\Bar{\bm{u}}, \frac{\sigma^2}{n-1}\bm{I})$. Here, we provide the definition of recommendation error for user $\bm{u}$.
\begin{definition}[Recommendation Error]
Given a Gaussian Recommender System $f_{(t)}$ that has been trained for $t$ epochs, the recommendation error for the user $\bm{u}$ with rating $r$ at the $t^{\mathrm{th}}$ epoch is defined as the probability that the system's prediction does not align with the user's actual rating, as:
\begin{equation*} 
    \mathbb{P}_{\bm{v}_{(0)} \sim \mathcal{N}(\Bar{\bm{u}}, \frac{\sigma^2}{n-1}I)} \left[f_{(t)}(\bm{u}, \bm{v}) \neq r \mid (\bm{u}, r) \right] \coloneqq \mathbb{E}_{\bm{v}_{(0)} \sim \mathcal{N}(\Bar{\bm{u}}, \frac{\sigma^2}{n-1}I)} \left[\mathbb{I}\left(f_{(t)}(\bm{u}, \bm{v}) \neq r \mid (\bm{u}, r), \bm{v}_{(0)}\right)\right],
\end{equation*}
where $\mathbb{I}(\cdot)$ is an indicator function that returns 1 if the condition is true and 0 otherwise.
\end{definition}

Based on the framework of ACF \cite{he2018adversarial, chen2023adversarial}, which includes $t$ epochs of pre-training with standard loss before adversarial training, we derive a theorem that identifies the difference in recommendation error between standard and adversarial loss at the $(t+1)^{\mathrm{th}}$ epoch. To distinguish between the recommendation error of traditional CF and ACF, we define $\mathbb{P}_{\bm{v}_{(0)} \sim \mathcal{N}(\Bar{\bm{u}}, \frac{\sigma^2}{n-1}I)} \left[f_{(t+1)}(\bm{u}, \bm{v}) \neq r \mid (\bm{u}, r) \right]$ as the recommendation error following standard training (Equation~\ref{eq:loss}) at the $(t+1)^{\mathrm{th}}$ epoch, and $\mathbb{P}_{\bm{v}_{(0)} \sim \mathcal{N}(\Bar{\bm{u}}, \frac{\sigma^2}{n-1}I)}^{~\mathrm{adv}} \left[f_{(t+1)}(\bm{u}, \bm{v}) \neq r \mid (\bm{u}, r) \right]$ as the recommendation error following adversarial training (Equation~\ref{eq:adv_loss}) at the $(t+1)^{\mathrm{th}}$ epoch.
\begin{theorem}
    \label{the:train}
    Consider a Gaussian Recommender System $f_{(t)}$, pre-trained for $t$ epochs using the standard loss function (Equation~\ref{eq:loss}). Given a learning rate $\eta$, an adversarial training weight $\lambda$, and a perturbation magnitude $\epsilon$, when $\epsilon < \frac{\min(\Vert \bm{u}_{(t)} \Vert, \Vert \Bar{\bm{u}} \Vert)}{\eta \lambda}$, and $\Vert \bar{\bm{u}} \Vert \gg \sigma$\footnote{Unless otherwise specified, $\Vert \cdot \Vert$ denotes the L2 norm $\Vert \cdot \Vert_2$ in this paper.}, the recommendation error for a user $\bm{u}$ with rating $r$ at the $(t+1)^{\mathrm{th}}$ epoch follows that:
    \begin{equation*}
        \begin{aligned}
            \mathbb{P}_{\bm{v}_{(0)} \sim \mathcal{N}(\Bar{\bm{u}}, \frac{\sigma^2}{n-1}I)} \left[f_{(t+1)}(\bm{u}, \bm{v}) \neq r \mid (\bm{u}, r) \right] > \mathbb{P}_{\bm{v}_{(0)} \sim \mathcal{N}(\Bar{\bm{u}}, \frac{\sigma^2}{n-1}I)}^{~\mathrm{adv}} \left[f_{(t+1)}(\bm{u}, \bm{v}) \neq r \mid (\bm{u}, r) \right].
        \end{aligned}
    \end{equation*}
\end{theorem}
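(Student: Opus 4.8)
The plan is to reduce the statement to a pointwise comparison of post-update decision margins and then lift it to a probability inequality. Because the system is pre-trained for $t$ epochs with the standard loss, the standard and adversarial trajectories share the same state $(\bm{u}_{(t)}, \bm{v}_{(t)})$ at the fork, and this state is a deterministic function of the random draw $\bm{v}_{(0)}$. Writing the margin $M_{(t+1)} \coloneqq r\langle \bm{u}_{(t+1)}, \bm{v}_{(t+1)}\rangle$, the event $f_{(t+1)}(\bm{u},\bm{v}) \neq r$ is exactly $\{M_{(t+1)} < 0\}$, so the two recommendation errors are $\mathbb{P}[M_{(t+1)} < 0]$ and $\mathbb{P}[M^{\mathrm{adv}}_{(t+1)} < 0]$. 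If I can show $M^{\mathrm{adv}}_{(t+1)} > M_{(t+1)}$ for (almost) every realization of $\bm{v}_{(0)}$, then $\{M^{\mathrm{adv}}_{(t+1)} < 0\} \subseteq \{M_{(t+1)} < 0\}$, giving the non-strict inequality; strictness follows since the Gaussian law of $\bm{v}_{(0)}$ has full support, so the boundary region $\{M_{(t+1)} < 0 \le M^{\mathrm{adv}}_{(t+1)}\}$ carries positive probability.

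Next I would make the two one-step updates explicit. For the standard loss, $\nabla_{\bm{u}}\mathcal{L} = -r\bm{v}_{(t)}$ and $\nabla_{\bm{v}}\mathcal{L} = -\sum_{(\bm{u}',r')\in\mathcal{I}} r'\bm{u}'_{(t)}$, so gradient descent gives $\bm{u}_{(t+1)} = \bm{u}_{(t)} + \eta r\bm{v}_{(t)}$ and $\bm{v}_{(t+1)} = \bm{v}_{(t)} + \eta\bm{S}$ with $\bm{S} \coloneqq \sum_{(\bm{u}',r')\in\mathcal{I}} r'\bm{u}'_{(t)}$. For the adversarial loss I first solve the inner maximization in closed form (a normalized-gradient/FGSM step), obtaining $\Delta_{\bm{u}} = -\epsilon r\,\bm{v}_{(t)}/\|\bm{v}_{(t)}\|$ for our user and the shared $\Delta_{\bm{v}} = -\epsilon\,\bm{S}/\|\bm{S}\|$; treating these as constants (Danskin) in the outer gradient yields
\[
\bm{u}^{\mathrm{adv}}_{(t+1)} = \bm{u}_{(t+1)} + \underbrace{\eta\lambda r(\bm{v}_{(t)} + \Delta_{\bm{v}})}_{\delta_{\bm{u}}}, \qquad
\bm{v}^{\mathrm{adv}}_{(t+1)} = \bm{v}_{(t+1)} + \underbrace{\eta\lambda\big(\bm{S} - \epsilon n\,\bm{v}_{(t)}/\|\bm{v}_{(t)}\|\big)}_{\delta_{\bm{v}}}.
\]
Each correction splits into an \emph{amplification} part that is collinear with the standard gradient (namely $\eta\lambda r\bm{v}_{(t)}$ and $\eta\lambda\bm{S}$) and a \emph{shrinkage} part of size $O(\epsilon)$ pointing back along the current embeddings.

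Expanding the margin gap gives $M^{\mathrm{adv}}_{(t+1)} - M_{(t+1)} = r\langle\bm{u}_{(t+1)},\delta_{\bm{v}}\rangle + r\langle\delta_{\bm{u}},\bm{v}_{(t+1)}\rangle + r\langle\delta_{\bm{u}},\delta_{\bm{v}}\rangle$, and the goal is to show this is positive. Here I would invoke $\|\bar{\bm{u}}\| \gg \sigma$: concentration of the Gaussian initialization forces $\bm{u}_{(t)}, \bm{v}_{(t)}, \bm{u}_{(t+1)}, \bm{v}_{(t+1)}$ and $\bm{S}$ to align (up to the sign carried by $r$) with $\bar{\bm{u}}$, so the amplification inner products $r\langle\bm{u}_{(t+1)},\bm{S}\rangle$, $\langle\bm{v}_{(t)},\bm{v}_{(t+1)}\rangle$, and the like are positive and of order $\|\bar{\bm{u}}\|^2$ (with an extra factor $n$ through $\bm{S}$), whereas the shrinkage contributions are controlled by $\epsilon$ times comparable norms. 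The constraint $\epsilon < \min(\|\bm{u}_{(t)}\|, \|\bar{\bm{u}}\|)/(\eta\lambda)$ is precisely what guarantees the shrinkage cannot cancel the amplification: it keeps $\bm{v}_{(t)} + \Delta_{\bm{v}}$ and the analogous item correction pointing forward along $\bar{\bm{u}}$, so the surviving terms are nonnegative and the leading ones strictly positive.

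The main obstacle I anticipate is the simultaneous sign and magnitude control of the several cross terms — especially the second-order product $r\langle\delta_{\bm{u}},\delta_{\bm{v}}\rangle$ and the coupling introduced by the \emph{global} item perturbation $\Delta_{\bm{v}}$, which is summed over all $n$ users while the margin is evaluated for a single user. Making the concentration step rigorous (replacing $\bm{S}/\|\bm{S}\|$ and $\bm{v}_{(t)}/\|\bm{v}_{(t)}\|$ by their dominant $\bar{\bm{u}}$-directions with a quantified error, and ensuring $M^{\mathrm{adv}}_{(t+1)} - M_{(t+1)}$ is bounded below by a positive constant on a region straddling the decision boundary so that the positive-measure argument for strictness goes through) is where the $\|\bar{\bm{u}}\| \gg \sigma$ hypothesis must be used most carefully, and I expect this to be the most delicate part of the argument.
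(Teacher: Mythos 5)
Your high-level reduction (compare the two error events pointwise over the shared randomness of $\bm{v}_{(0)}$, deduce nesting, then get strictness from full support of the Gaussian) is the same skeleton as the paper's argument, but your execution path has genuine gaps. The missing idea is the paper's Proposition~\ref{pro:M_t}: under the linear loss, the item embedding stays \emph{exactly} collinear with $\bm{v}_{(0)}$, $\bm{v}_{(t)} = M(t,\eta)\bm{v}_{(0)}$ with $M(t,\eta)>0$, and likewise $\bm{S}=\sum r\bm{u}_{(t)}$ and the adversarial item trajectory are positive multiples of $\bm{v}_{(0)}$. This makes every item-side scalar cancel when you compare \emph{signs}, so both error events collapse to one-dimensional threshold conditions on $\langle r\bm{u}_{(t)}/\Vert\bm{u}_{(t)}\Vert, \bm{v}_{(0)}\rangle$, whose law is Gaussian by the paper's Fact~\ref{fac:dis_var}; the comparison is then just two thresholds, $-\eta M\Vert\bm{v}_{(0)}\Vert^2$ versus $-\eta(1+\lambda)\gamma^{(\bm{u})}_{(t)} M\Vert\bm{v}_{(0)}\Vert^2$ with $(1+\lambda)\gamma^{(\bm{u})}_{(t)}>1$. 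Your plan to instead replace $\bm{S}/\Vert\bm{S}\Vert$ and $\bm{v}_{(t)}/\Vert\bm{v}_{(t)}\Vert$ by approximate $\bar{\bm{u}}$-directions via concentration is both unnecessary and unworkable: these vectors are exactly collinear with $\bm{v}_{(0)}$ (not with $\bar{\bm{u}}$), and the quantities you must separate are tail probabilities driven by the very $\sigma$-scale fluctuations that a concentration step discards, so an additive "quantified error" cannot in general be shown smaller than the CDF gap you are trying to establish.

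Two of your concrete steps would also fail as stated. First, the pointwise margin inequality $M^{\mathrm{adv}}_{(t+1)} > M_{(t+1)}$ is false in general: the adversarial item update carries the shrinkage term $-n\eta\lambda\epsilon\,\bm{v}_{(t)}/\Vert\bm{v}_{(t)}\Vert$, so the positive scalar multiplying $\bm{v}_{(0)}$ in $\bm{v}^{\mathrm{adv}}_{(t+1)}$ can be \emph{smaller} than in $\bm{v}_{(t+1)}$; on realizations where the margin is large and positive, the adversarial margin is then strictly smaller, even though the sign nesting $\{M^{\mathrm{adv}}_{(t+1)}\le 0\}\subseteq\{M_{(t+1)}\le 0\}$ (which is all you need, and all the paper proves) still holds. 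Second, your choice of a single shared item perturbation $\Delta_{\bm{v}}=-\epsilon\,\bm{S}/\Vert\bm{S}\Vert$ changes where the hypothesis on $\epsilon$ is used: it makes the user-side correction a shrinkage of size $\eta\lambda\epsilon\Vert\bm{v}_{(0)}\Vert$ along the \emph{item} direction, and nesting then requires $\epsilon < M(t,\eta)\Vert\bm{v}_{(0)}\Vert$, a condition neither implied by nor implying the theorem's $\epsilon < \min(\Vert\bm{u}_{(t)}\Vert,\Vert\bar{\bm{u}}\Vert)/(\eta\lambda)$. The paper instead uses the per-interaction perturbation $\Delta_{\bm{v}}=-\epsilon r\bm{u}/\Vert\bm{u}\Vert$, so the user shrinkage is along $\bm{u}_{(t)}$ itself, producing the factor $\bigl(1-\eta\lambda\epsilon/\Vert\bm{u}_{(t)}\Vert\bigr)$ whose positivity is exactly the stated condition and whose inverse $\gamma^{(\bm{u})}_{(t)}>1$ delivers the strict threshold separation. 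Without Proposition~\ref{pro:M_t} and with the perturbation model you chose, the cross-term bookkeeping you anticipate (especially $r\langle\delta_{\bm{u}},\delta_{\bm{v}}\rangle$) has no evident rigorous completion under the theorem's hypotheses.
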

For the proof, please refer to Appendix~\ref{prf:train}. After the same epochs of pre-training, ACF at the next epoch achieves a lower recommendation error compared to traditional CF, thereby benefiting recommendation performance.

Next, our analysis extends to contexts where the recommender system is subject to poisoning attacks. These attacks involve injecting fake users into the system's training dataset to manipulate item exposure. We examine a Gaussian Recommender System with $\mathcal{I} = \{(\bm{u}_1, r_1), \dots, (\bm{u}_n, r_n)\}$, where each tuple $(\bm{u}, r) \in \mathbb{R}^d \times \{\pm 1\}$ represents the learned embedding and the rating of a genuine user. A \textit{poisoning attack} on this system injects a \textit{poisoning user set}, $\mathcal{I}' = \{(\bm{u}_1', r_1'), (\bm{u}_2', r_2'), \dots, (\bm{u}_{n'}', r_{n'}')\}$, with each tuple $(\bm{u}', r') \in \mathbb{R}^d \times \{\pm 1\}$ representing a fake user crafted by attackers\footnote{To make poisoning attacks effective in single-item recommendation scenarios, attackers can directly inject users' initialized embeddings, which is equivalent to constructing interactions for different items in multi-item scenarios.}. The \textit{poisoned item embedding} $\bm{v}'$ is reinitialized to include both genuine and malicious contributions:
\begin{equation*}
    \bm{v}' = \frac{1}{n + n'} \left( \sum_{(\bm{u}, r) \in \mathcal{I}} r \bm{u} + \sum_{(\bm{u}', r') \in \mathcal{I}'} r' \bm{u}' \right),
\end{equation*}
where $n$ and $n'$ represent the number of genuine and fake users, respectively.

To evaluate the impact of these attacks, we introduce a formal definition of recommendation error in poisoned data.
\begin{definition}[$\bm{\alpha}$-Poisoned Recommendation Error]
    Given a boundary $\alpha > 0$, and a set of fake users injected by attackers within this boundary, i.e., $\mathcal{I}' \subseteq \mathcal{P}(\bm{u}', \alpha) = \{(\bm{u}', r') \mid (\bm{u}', r') \in \mathbb{R}^d \times \{\pm 1\} \land \Vert \bm{u}' \Vert_{\infty} \leq \alpha\}$, the $\alpha$-poisoned recommendation error for the genuine user $\bm{u}$ with rating $r$ at the $t^{\mathrm{th}}$ epoch is defined as the probability:
    \begin{equation*}
        \mathbb{P}_{\bm{v}_{(0)} \sim \mathcal{N}(\Bar{\bm{u}}, \frac{\sigma^2}{n-1}I)} \left[f_{(t), \alpha}(\bm{u}, \bm{v}') \neq r \mid (\bm{u}, r) \right] \coloneqq \mathbb{E}_{\bm{v}_{(0)} \sim \mathcal{N}(\Bar{\bm{u}}, \frac{\sigma^2}{n-1}I)} \left[\mathbb{I}\left(f_{(t), \alpha}(\bm{u}, \bm{v}') \neq r \mid (\bm{u}, r), \bm{v}_{(0)}\right)\right],
    \end{equation*}
    where $f_{(t), \alpha}$ represents the Gaussian Recommender System under the $\bm{\alpha}$-poisoned condition, and $\mathbb{I}(\cdot)$ is an indicator function that returns 1 if the condition is true and 0 otherwise.
\end{definition}

For simplicity, we continue using the distribution of $\bm{v}_{(0)}$ from the definition. This allows us to further analyze the $\bm{\alpha}$-poisoned recommendation error based on the distribution of $\bm{v}_{(0)}' = \frac{n}{n+n'} \bm{v}_{(0)} + \frac{1}{n+n'} \sum_{(\bm{u}', r') \in \mathcal{I}'} r' \bm{u}'$. 

Then we extend Theorem~\ref{the:train} to $\bm{\alpha}$-Poisoned Recommendation Error in the following theorem:
\begin{theorem}
    \label{the:poison_train}
    Consider a poisoned Gaussian Recommender System $f_{(t), \alpha}$, pre-trained for $t$ epochs using the standard loss function (Equation~\ref{eq:loss}). 
    Given a learning rate $\eta$, an adversarial training weight $\lambda$, and a perturbation magnitude $\epsilon$, when $\epsilon < \frac{\min(\Vert \bm{u}_{(t)} \Vert, \Vert \Bar{\bm{u}} \Vert)}{\eta \lambda}$, and $\Vert \bar{\bm{u}} \Vert \gg \sigma$, the $\bm{\alpha}$-poisoned recommendation error for a genuine user $\bm{u}$ with rating $r$ at the $(t+1)^{\mathrm{th}}$ epoch follows that:
    \begin{equation*}
        \begin{aligned}
            \mathbb{P}_{\bm{v}_{(0)} \sim \mathcal{N}(\Bar{\bm{u}}, \frac{\sigma^2}{n-1}I)} \left[f_{(t+1), \alpha}(\bm{u}, \bm{v}') \neq r \mid (\bm{u}, r) \right] > \mathbb{P}_{\bm{v}_{(0)} \sim \mathcal{N}(\Bar{\bm{u}}, \frac{\sigma^2}{n-1}I)}^{~\mathrm{adv}} \left[f_{(t+1), \alpha}(\bm{u}, \bm{v}') \neq r \mid (\bm{u}, r) \right].
        \end{aligned}
    \end{equation*}
\end{theorem}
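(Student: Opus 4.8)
The plan is to reduce Theorem~\ref{the:poison_train} to Theorem~\ref{the:train} by recognizing that, from the genuine user's viewpoint, the $\bm{\alpha}$-poisoned Gaussian Recommender System is itself a Gaussian Recommender System whose item-embedding initialization has merely been re-centered and re-scaled. Since $\bm{v}_{(0)} \sim \mathcal{N}(\bar{\bm{u}}, \frac{\sigma^2}{n-1}\bm{I})$ and the poisoned initialization is the affine image $\bm{v}'_{(0)} = \frac{n}{n+n'}\bm{v}_{(0)} + \frac{1}{n+n'}\sum_{(\bm{u}',r')\in\mathcal{I}'} r'\bm{u}'$, Gaussianity is preserved and
\begin{equation*}
    \bm{v}'_{(0)} \sim \mathcal{N}\left(\tilde{\bm{u}},\, \frac{\tilde{\sigma}^2}{n-1}\bm{I}\right), \qquad \tilde{\bm{u}} = \frac{n}{n+n'}\bar{\bm{u}} + \frac{1}{n+n'}\sum_{(\bm{u}',r')\in\mathcal{I}'} r'\bm{u}', \qquad \tilde{\sigma} = \frac{n}{n+n'}\sigma.
\end{equation*}
First I would substitute this effective mean $\tilde{\bm{u}}$ and effective scale $\tilde{\sigma}$ for $\bar{\bm{u}}$ and $\sigma$, so that the $\bm{\alpha}$-poisoned recommendation error of $\bm{u}$ is exactly the clean recommendation error of a Gaussian system with parameters $(\tilde{\bm{u}}, \tilde{\sigma})$, and the standard-versus-adversarial comparison from Theorem~\ref{the:train} can be replayed verbatim on this system.

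Next I would check that the hypotheses of Theorem~\ref{the:train} survive this substitution. The noise scale only shrinks, since $\tilde{\sigma} = \frac{n}{n+n'}\sigma \le \sigma$, so it poses no difficulty. For the perturbation and separation conditions I would control the poisoning shift through the constraint $\Vert \bm{u}' \Vert_{\infty} \le \alpha$: each injected embedding obeys $\Vert \bm{u}' \Vert \le \sqrt{d}\,\alpha$, so $\Vert \frac{1}{n+n'}\sum r'\bm{u}' \Vert \le \frac{n'}{n+n'}\sqrt{d}\,\alpha$, exhibiting $\tilde{\bm{u}}$ as a bounded perturbation of the scaled clean mean $\frac{n}{n+n'}\bar{\bm{u}}$. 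With the conditions in hand, I would invoke the same update-and-compare argument as in Theorem~\ref{the:train}: the adversarial term contributes an extra gradient that moves $\bm{v}'$ further along the correct signed direction $r\tilde{\bm{u}}$ than the standard update does, tilting the angle between $\bm{u}_{(t+1)}$ and $\bm{v}'_{(t+1)}$ more favorably toward $r$ and thereby strictly reducing the misclassification probability taken over the Gaussian draw of $\bm{v}'_{(0)}$.

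I expect the main obstacle to be exactly this separation step, because Theorem~\ref{the:poison_train} retains the clean hypothesis $\Vert \bar{\bm{u}} \Vert \gg \sigma$ rather than assuming $\Vert \tilde{\bm{u}} \Vert \gg \tilde{\sigma}$ outright: the crux is to show that an attacker constrained only by $\Vert \bm{u}' \Vert_{\infty} \le \alpha$ cannot collapse the signal-to-noise gap that drives the error comparison. I would lean on the fact that the poisoning contribution enters the mean through the factor $\frac{1}{n+n'}$ while the variance is damped by the strictly smaller factor $\big(\frac{n}{n+n'}\big)^2$, arguing that the dominant-mean regime, and hence the sign of the error gap, is preserved; the residual work is then the identical Gaussian-tail monotonicity estimate that yields the strict inequality in Theorem~\ref{the:train}, now evaluated at $(\tilde{\bm{u}}, \tilde{\sigma})$.
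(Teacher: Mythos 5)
Your reduction is, in substance, the same route the paper takes: the paper keeps the randomness in the clean $\bm{v}_{(0)}$ and moves the poison contribution $\frac{1}{n}\sum_{(\bm{u}',r')\in\mathcal{I}'} r r' \langle \frac{\bm{u}_{(t)}}{\Vert\bm{u}_{(t)}\Vert},\bm{u}'_{(0)}\rangle$ into the endpoints of the interval whose Gaussian measure is the error gap, which is exactly your ``re-center and re-scale'' observation carried out in the opposite order. One small overstatement first: the poisoned error is not \emph{exactly} the clean error of an $n$-user system with parameters $(\tilde{\bm{u}},\tilde{\sigma})$, because the poisoned system is trained by $n+n'$ users, so the scalar transformation function of Proposition~\ref{pro:M_t} has $n+n'$ in place of $n$. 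This is harmless --- the argument only uses that this scalar is positive --- but ``replay verbatim'' needs that caveat.

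The genuine gap is in your final paragraph. You propose to close the proof by showing that the ``dominant-mean regime'' survives poisoning, leaning on the fact that the mean shift enters through $\frac{1}{n+n'}$ while the variance is damped by $\bigl(\frac{n}{n+n'}\bigr)^2$. That argument fails: the theorem places no upper bound on $\alpha$ relative to $\Vert\bar{\bm{u}}\Vert$, so an attacker with large $\alpha$ can drive $\tilde{\bm{u}}$ to zero (or reverse its direction), and $\Vert\tilde{\bm{u}}\Vert \gg \tilde{\sigma}$ simply cannot be deduced from $\Vert\bar{\bm{u}}\Vert \gg \sigma$. Fortunately, no such signal-to-noise statement is needed. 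After the updates, both the standard and the adversarial error are probabilities that $\langle \frac{r\bm{u}_{(t)}}{\Vert\bm{u}_{(t)}\Vert},\bm{v}_{(0)}\rangle$ falls below thresholds that differ only in the factor $(1+\lambda)\gamma_{(t)}^{\bm{u}}$ versus $1$; the poison-dependent quantities appear as an \emph{identical} additive shift in both thresholds and therefore cancel in the comparison. The strict inequality then follows from $(1+\lambda)\gamma_{(t)}^{\bm{u}}>1$ (the $\epsilon$ condition), positivity of $\mathbb{E}\bigl[\Vert\bm{v}'_{(0)}\Vert^2\bigr]$ and of the dynamics constant, and the full support of the Gaussian --- regardless of where the poison translates the interval. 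This is what the paper's proof does implicitly; the hypothesis $\Vert\bar{\bm{u}}\Vert\gg\sigma$ is invoked only to justify approximating the random $\Vert\bm{v}'_{(0)}\Vert^2$ by its expectation, not to preserve any dominant-mean condition. Replacing your SNR-preservation step with this cancellation observation repairs the proof.
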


For the proof, please refer to Appendix~\ref{prf:poi_train}. Combining Theorem~\ref{the:train} and Theorem~\ref{the:poison_train}, we find that adversarial training, i.e., ACF, lowers recommendation errors compared to traditional CF in both clean and poisoned data contexts. Accordingly, ACF achieves better performance and robustness.

\subsection{How to Further Enhance Adversarial Collaborative Filtering}
\label{sec:poi_res_er}
To explore mechanisms to further improve the effectiveness of ACF, we subsequently derive upper and lower bounds on the reduction of recommendation error between any two consecutive epochs after $t$ epochs of pre-training.

\begin{theorem}
    \label{the:adv_train}
    Consider a Gaussian Recommender System $f_{(t)}$ which has been pre-trained for $t$ epochs using standard loss (Equation~\ref{eq:loss}) and subsequently trained on adversarial loss (Equation~\ref{eq:adv_loss}). For the $(t+k+1)^{\mathrm{th}}$ epoch, let the reduction in recommendation error of user $\bm{u}$ with rating $r$ relative to the $(t+k)^{\mathrm{th}}$ epoch from adversarial loss be denoted by:
    \begin{equation*}
        \begin{aligned}
        & \Delta_{(t+k+1)}^{\mathrm{adv}}\mathbb{P}_{\bm{v}_{(0)} \sim \mathcal{N}(\Bar{\bm{u}}, \frac{\sigma^2}{n-1}I)}^{~\mathrm{adv}} [f(\bm{u}, \bm{v}) \neq r \mid (\bm{u}, r)] = \\ 
        & \mathbb{P}_{\bm{v}_{(0)} \sim \mathcal{N}(\Bar{\bm{u}}, \frac{\sigma^2}{n-1}I)}^{~\mathrm{adv}} \left[f_{(t+k)}(\bm{u}, \bm{v}) \neq r \mid (\bm{u}, r) \right] - \mathbb{P}_{\bm{v}_{(0)} \sim \mathcal{N}(\Bar{\bm{u}}, \frac{\sigma^2}{n-1}I)}^{~\mathrm{adv}} \left[f_{(t+k+1)}(\bm{u}, \bm{v}) \neq r \mid (\bm{u}, r) \right].
        \end{aligned}
    \end{equation*}
    Given a learning rate $\eta$, an adversarial training weight $\lambda$, and a perturbation magnitude $\epsilon$, when $\epsilon < \frac{\min(\Vert \bm{u}_{(t+k)} \Vert, \Vert \Bar{\bm{u}} \Vert)}{\eta \lambda}$, and $\Vert \bar{\bm{u}} \Vert \gg \sigma$, it follows that:
    \begin{equation*}
        \begin{aligned}
            \Delta_{(t+k+1)}^{\mathrm{adv}} & \mathbb{P}_{\bm{v}_{(0)} \sim \mathcal{N}(\Bar{\bm{u}}, \frac{\sigma^2}{n-1}I)}^{~\mathrm{adv}} [f(\bm{u}, \bm{v}) \neq r \mid (\bm{u}, r)] \geq \\
            & \quad \quad \quad \quad  \Phi\left(\frac{\sqrt{n-1}}{\sigma} \left(\Vert \bar{\bm{u}} \Vert + \eta (\Vert \Bar{\bm{u}}\Vert^2 +\frac{d\sigma^2}{n-1}) \Psi(\bm{u}, t+k)\right) \right)
            -  \Phi\left(\frac{\sqrt{n-1}}{\sigma} \Vert \bar{\bm{u}} \Vert \right),\\
            \Delta_{(t+k+1)}^{\mathrm{adv}} & \mathbb{P}_{\bm{v}_{(0)} \sim \mathcal{N}(\Bar{\bm{u}}, \frac{\sigma^2}{n-1}I)}^{~\mathrm{adv}} [f(\bm{u}, \bm{v}) \neq r \mid (\bm{u}, r)] \leq   2 \Phi\left(\frac{\sqrt{n-1}\eta}{2\sigma} (\Vert \Bar{\bm{u}}\Vert^2 +\frac{d\sigma^2}{n-1}) \Psi(\bm{u}, t+k)\right) - 1,
        \end{aligned}
    \end{equation*}
    where $d$ is the embedding dimension, and $\Phi(\cdot)$ denotes the cumulative distribution function (CDF) of the standard Gaussian distribution, and $\Psi(\bm{u}, t+k)$ is defined as:
    \begin{equation}
        \label{eq:psi}
        \Psi(\bm{u}, t+k) = (1+\lambda) \gamma_{(t+k)}^{\bm{u}}  \frac{C_{t+k}}{\Vert \bm{u}_{(t+k)}\Vert}, \quad \text{where} \quad \gamma_{(t+k)}^{\bm{u}} = \left( 1 -  \frac{\eta \lambda \epsilon}{\Vert \bm{u}_{(t+k)} \Vert} \right)^{-1},
    \end{equation}
    where $C_{t+k}$ is a constant at the $(t+k)^{\mathrm{th}}$ epoch.
\end{theorem}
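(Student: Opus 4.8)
The plan is to express the recommendation error at each epoch as $\Phi$ evaluated at a normalized margin, and then reduce the theorem to controlling how a single adversarial gradient step moves that margin. By the symmetry of the model I would fix $r=1$ without loss of generality. First I would exploit the assumption $\Vert\bar{\bm{u}}\Vert \gg \sigma$ to linearize: writing $\bm{v}_{(0)} = \bar{\bm{u}} + \bm{\xi}$ with $\bm{\xi}\sim\mathcal{N}(0,\frac{\sigma^2}{n-1}\bm{I})$, the quantity $r\langle\bm{u}_{(t+k)},\bm{v}_{(t+k)}\rangle$ is, to leading order, an affine function of $\bm{\xi}$ and hence approximately Gaussian. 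Its mean is dominated by the $\Vert\bar{\bm{u}}\Vert^2$-scale term together with the $\frac{d\sigma^2}{n-1}$ contribution coming from $\mathbb{E}\Vert\bm{\xi}\Vert^2$, while its standard deviation is governed by $\langle\bar{\bm{u}},\bm{\xi}\rangle$, i.e. $\frac{\sigma}{\sqrt{n-1}}$ times a common scale factor. The error therefore equals $\Phi(-a_{t+k})$ for a normalized margin $a_{t+k}$ whose reference level is $\frac{\sqrt{n-1}}{\sigma}\Vert\bar{\bm{u}}\Vert$, and the per-epoch reduction becomes $\Phi(a_{t+k+1})-\Phi(a_{t+k})$.

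Next I would compute the adversarial update explicitly. Since the loss is linear in the embeddings, the inner maximizer on the $\epsilon$-ball is closed form, $\Delta_{\bm{v}}^{\mathrm{adv}} = -\epsilon r\,\bm{u}_{(t+k)}/\Vert\bm{u}_{(t+k)}\Vert$ and symmetrically for $\Delta_{\bm{u}}^{\mathrm{adv}}$. Substituting into the gradient step for $\mathcal{L}_{\mathrm{adv}}$ gives $\bm{u}_{(t+k+1)} = \bm{u}_{(t+k)} + \eta r(1+\lambda)\bm{v}_{(t+k)} - \eta\lambda\epsilon\,\bm{u}_{(t+k)}/\Vert\bm{u}_{(t+k)}\Vert$, and analogously for $\bm{v}$. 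The last term contracts the embedding along itself by the factor $1 - \frac{\eta\lambda\epsilon}{\Vert\bm{u}_{(t+k)}\Vert}$, which is positive precisely under the hypothesis $\epsilon < \Vert\bm{u}_{(t+k)}\Vert/(\eta\lambda)$; inverting it when the margin is renormalized produces $\gamma_{(t+k)}^{\bm{u}}$. Collecting the resulting change in the margin mean, the increment factors as $\eta(\Vert\bar{\bm{u}}\Vert^2 + \frac{d\sigma^2}{n-1})\Psi(\bm{u},t+k)$: the adversarial weight supplies $(1+\lambda)$, the perturbation normalization supplies $\gamma_{(t+k)}^{\bm{u}}/\Vert\bm{u}_{(t+k)}\Vert$, and the accumulated per-epoch scale is absorbed into $C_{t+k}$. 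Dividing by the margin standard deviation furnishes the $\frac{\sqrt{n-1}}{\sigma}$ prefactor, so the argument shift is $\delta_{t+k} = \frac{\sqrt{n-1}}{\sigma}\eta(\Vert\bar{\bm{u}}\Vert^2 + \frac{d\sigma^2}{n-1})\Psi(\bm{u},t+k)$.

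With the reduction written as $\int_{a_{t+k}}^{a_{t+k}+\delta_{t+k}}\phi(x)\,dx$, the upper bound follows from a single fact about the Gaussian density: for an interval of fixed width $\delta_{t+k}$, the mass $\int\phi$ is largest when the interval is centered at the origin, where $\phi$ peaks. Since $a_{t+k}\ge 0$, this gives $\int_{a_{t+k}}^{a_{t+k}+\delta_{t+k}}\phi \le \int_{-\delta_{t+k}/2}^{\delta_{t+k}/2}\phi = 2\Phi(\delta_{t+k}/2) - 1$, which is exactly the stated upper bound.

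For the lower bound I would use monotonicity of $\Phi$: identifying the current-epoch normalized margin with the reference level $\frac{\sqrt{n-1}}{\sigma}\Vert\bar{\bm{u}}\Vert$ and the next-epoch margin with that level plus the increment $\delta_{t+k}$ yields the claimed $\Phi(\frac{\sqrt{n-1}}{\sigma}(\Vert\bar{\bm{u}}\Vert + \eta(\Vert\bar{\bm{u}}\Vert^2+\frac{d\sigma^2}{n-1})\Psi)) - \Phi(\frac{\sqrt{n-1}}{\sigma}\Vert\bar{\bm{u}}\Vert)$. The main obstacle, and the step demanding the most care, is exactly this positioning: because $\Phi(x+\delta)-\Phi(x)$ is decreasing in $x$ for $x\ge 0$, the reduction depends not merely on the increment $\delta_{t+k}$ but on where $a_{t+k}$ sits, so the lower bound must be anchored at the reference $\frac{\sqrt{n-1}}{\sigma}\Vert\bar{\bm{u}}\Vert$ rather than at the (generally larger) trained margin. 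This is where the constant $C_{t+k}$ and the concentration provided by $\Vert\bar{\bm{u}}\Vert\gg\sigma$ do the real work, normalizing the accumulated margin scale so that the increment is measured against the correct baseline. The remaining pieces—the closed form of $\Delta^{\mathrm{adv}}$, the first-order expansion of $r\langle\bm{u}_{(t+k+1)},\bm{v}_{(t+k+1)}\rangle$, and the extraction of $\Psi$—are routine once this normalization is fixed.
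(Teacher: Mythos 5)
Your computational machinery coincides with the paper's: the closed-form inner maximizer $\Delta_{\bm{v}}^{\mathrm{adv}} = -\epsilon r\,\bm{u}_{(t+k)}/\Vert\bm{u}_{(t+k)}\Vert$ (Equation~\ref{eq:gamma}), the contracted update $\bm{u}_{(t+k+1)} = \bigl(1-\tfrac{\eta\lambda\epsilon}{\Vert\bm{u}_{(t+k)}\Vert}\bigr)\bm{u}_{(t+k)} + \eta(1+\lambda) r\, C_{t+k}\bm{v}_{(0)}$ (Equation~\ref{eq:update_adv}, Proposition~\ref{pro:M_t_adv}), the reduction written as the Gaussian mass of an interval of width $\tfrac{\sqrt{n-1}}{\sigma}\eta\bigl(\Vert\bar{\bm{u}}\Vert^2+\tfrac{d\sigma^2}{n-1}\bigr)\Psi(\bm{u},t+k)$ via Fact~\ref{fac:dis_var}, and the center-at-origin maximization for the upper bound are all exactly the paper's steps, and your upper bound is correct.

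The genuine gap is in the lower bound. After normalization, the interval's position is $s = \langle r\bm{u}_{(t+k)}/\Vert\bm{u}_{(t+k)}\Vert,\ \bar{\bm{u}}\rangle$, and the reduction equals $\Phi\bigl(\tfrac{\sqrt{n-1}}{\sigma}(s+\delta)\bigr)-\Phi\bigl(\tfrac{\sqrt{n-1}}{\sigma}s\bigr)$ with $\delta = \eta\bigl(\Vert\bar{\bm{u}}\Vert^2+\tfrac{d\sigma^2}{n-1}\bigr)\Psi(\bm{u},t+k)$. The paper's essential step, which you never state, is Cauchy--Schwarz: $s \in [-\Vert\bar{\bm{u}}\Vert,\Vert\bar{\bm{u}}\Vert]$ no matter how training has rescaled $\bm{u}_{(t+k)}$, because the scale factor $C_{t+k}$ on $\bm{v}_{(t+k)}$ drops out of the sign of the inner product and the alignment is normalized by $\Vert\bm{u}_{(t+k)}\Vert$. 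Over that range, the interval mass is minimized at the endpoint $s=\Vert\bar{\bm{u}}\Vert$ (the interval farthest from the origin), and evaluating there gives the stated bound with equality. Your proposal instead ``identifies'' the current margin with the reference level $\tfrac{\sqrt{n-1}}{\sigma}\Vert\bar{\bm{u}}\Vert$, worries (correctly, by your own monotonicity observation) that this fails if the trained margin were larger, and then asserts that $C_{t+k}$ and concentration ``do the real work'' of fixing the baseline. That is not a proof, and the worry itself rests on a misreading: the normalized alignment can never exceed $\Vert\bar{\bm{u}}\Vert$, so the trained margin is never ``generally larger'' than the reference. What is missing is precisely (i) the Cauchy--Schwarz containment of $s$ and (ii) the worst-case-position argument over that containment; with these two steps the anchoring at $\Vert\bar{\bm{u}}\Vert$ is an exact minimization rather than an approximation, and no appeal to $C_{t+k}$ or to $\Vert\bar{\bm{u}}\Vert\gg\sigma$ is needed for this part (the latter assumption is used only to replace $\Vert\bm{v}_{(0)}\Vert^2$ by its expectation $\Vert\bar{\bm{u}}\Vert^2+\tfrac{d\sigma^2}{n-1}$).
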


For the proof, please refer to Appendix~\ref{prf:adv_train}. In light of Theorem~\ref{the:adv_train}, given a learning rate $\eta$ and an adversarial training weight $\lambda$, we can establish the following: (1) When the conditions, i.e., $\epsilon < \frac{\min(\Vert \bm{u}_{(t+k)} \Vert, \Vert \Bar{\bm{u}} \Vert)}{\eta \lambda}$ and $\Vert \bar{\bm{u}} \Vert \gg \sigma$, are satisfied, the error reduction for ACF can be both upper and lower bounded. (2) Increasing the perturbation magnitude $\epsilon$ under the above conditions can further improve these bounds, thus benefiting ACF's effectiveness.

Then, similarly, we extend Theorem~\ref{the:adv_train} to the $\bm{\alpha}$-poisoned context.
\begin{theorem}
    \label{the:adv_poison_train}
    Consider a poisoned Gaussian Recommender System $f_{(t), \alpha}$ which has been pre-trained for $t$ epochs using standard loss (Equation~\ref{eq:loss}) and subsequently trained on adversarial loss (Equation~\ref{eq:adv_loss}). For the $(t+k+1)^{\mathrm{th}}$ epoch, let the reduction in $\bm{\alpha}$-poisoned recommendation error of a genuine user $\bm{u}$ with rating $r$ relative to the $(t+k)^{\mathrm{th}}$ epoch from adversarial loss be denoted by $\Delta_{(t+k+1)}^{\mathrm{adv}}\mathbb{P}_{\bm{v}_{(0)} \sim \mathcal{N}(\Bar{\bm{u}}, \frac{\sigma^2}{n-1}I)}^{~\mathrm{adv}} [f_{\alpha}(\bm{u}, \bm{v}') \neq r \mid (\bm{u}, r)]$. Let $\beta = \frac{n'}{n}\sqrt{d}\alpha + \Vert \bar{\bm{u}} \Vert$ and $\tau = 2nn'\alpha \Vert \Bar{\bm{u}}\Vert_{0}$, where $d$ is the embedding dimension, and given a learning rate $\eta$, an adversarial training weight $\lambda$, and a perturbation magnitude $\epsilon$, when $\epsilon < \frac{\min(\Vert \bm{u}_{(t+k)} \Vert, \Vert \Bar{\bm{u}} \Vert)}{\eta \lambda}$, and $\Vert \bar{\bm{u}} \Vert \gg \sigma$, it follows that:
    \begin{equation*}
        \begin{aligned}
            & \Delta_{(t+k+1)}^{\mathrm{adv}}\mathbb{P}_{\bm{v}_{(0)} \sim \mathcal{N}(\Bar{\bm{u}}, \frac{\sigma^2}{n-1}I)}^{~\mathrm{adv}} [f_{\alpha}(\bm{u}, \bm{v}') \neq r \mid (\bm{u}, r)] \quad > \\ 
            & \Phi\left(\frac{\sqrt{n-1}}{\sigma} \left(\beta + \eta \left(\frac{n^2 \Vert \Bar{\bm{u}}\Vert^2 - \tau}{n(n + n')} +\frac{n d \sigma^2}{(n-1)(n + n')}\right) \Psi(\bm{u}, t+k)\right) \right) -  \Phi\left(\frac{\sqrt{n-1}}{\sigma} \beta \right),\\
            & \Delta_{(t+k+1)}^{\mathrm{adv}}\mathbb{P}_{\bm{v}_{(0)} \sim \mathcal{N}(\Bar{\bm{u}}, \frac{\sigma^2}{n-1}I)}^{~\mathrm{adv}} [f_{\alpha}(\bm{u}, \bm{v}') \neq r \mid (\bm{u}, r)] \quad \leq \\
            & \quad \quad \quad \quad 2 \Phi\left(\frac{\sqrt{n-1}\eta}{2\sigma}\left(\frac{n^2 \Vert \Bar{\bm{u}}\Vert^2 + (n')^{2}d\alpha^2 + \tau}{n(n + n')} +\frac{n d \sigma^2}{(n-1)(n + n')}\right) \Psi(\bm{u}, t+k)\right) - 1,
        \end{aligned}
    \end{equation*}
    where $\Phi()$ denotes the cumulative distribution function (CDF) of the standard Gaussian distribution, $n'$ is the number of fake users, and $\Psi(\cdot)$ is defined in Equation~\ref{eq:psi}.
\end{theorem}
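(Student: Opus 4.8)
The plan is to mirror the proof of Theorem~\ref{the:adv_train}, substituting the poisoned initial item embedding for the clean one and tracking how the injected users propagate into every quantity on which the clean bound depends. First I would record that $\bm{v}_{(0)}' = \frac{n}{n+n'}\bm{v}_{(0)} + \frac{1}{n+n'}\sum_{(\bm{u}',r')\in\mathcal{I}'} r'\bm{u}'$ is again Gaussian, with mean $\frac{n}{n+n'}\bar{\bm{u}} + \frac{1}{n+n'}\sum r'\bm{u}'$ and isotropic covariance $\frac{n^2}{(n+n')^2}\frac{\sigma^2}{n-1}\bm{I}$; so the analysis reduces to a one-dimensional Gaussian computation with rescaled standard deviation $\frac{n}{n+n'}\frac{\sigma}{\sqrt{n-1}}$ and a shifted mean. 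Next I would re-derive the adversarial training dynamics for $f_{(t),\alpha}$: because the loss is bilinear and the perturbations point along $\bm{v}'$ and $\sum r\bm{u}$, the iterates stay in the span of the initial embeddings, so $\bm{v}_{(m)}'$ remains a positive multiple of $\bm{v}_{(0)}'$ and a genuine $\bm{u}_{(m)}$ equals $\bm{u}_{(0)}$ plus a scalar multiple of $\bm{v}_{(0)}'$, with the effective user count $n$ replaced by $n+n'$. This is precisely what turns the clean second-moment factor $\Vert\bar{\bm{u}}\Vert^2+\frac{d\sigma^2}{n-1}=\mathbb{E}\Vert\bm{v}_{(0)}\Vert^2$ into its poisoned analogue and lets $\Psi(\bm{u},t+k)$ re-enter unchanged in form through the same normalization $\gamma^{\bm{u}}_{(t+k)}$.

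Second, exactly as in Theorem~\ref{the:adv_train}, I would express the error event $r\langle\bm{u}_{(m)},\bm{v}_{(m)}'\rangle<0$ as a threshold condition on the standardized Gaussian, invoking $\Vert\bar{\bm{u}}\Vert\gg\sigma$ to freeze the quadratic term $\Vert\bm{v}_{(0)}'\Vert^2$ at its expectation and thereby linearize the condition into a half-space. The per-epoch reduction $\Delta^{\mathrm{adv}}_{(t+k+1)}$ then becomes a difference of two $\Phi$-values, which I would bound below and above.

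Third, I would discharge the poisoning contributions. The standardized mean is controlled by $\Vert\tfrac{n+n'}{n}\,\mathbb{E}\bm{v}_{(0)}'\Vert \le \Vert\bar{\bm{u}}\Vert + \tfrac{n'}{n}\sqrt{d}\,\alpha = \beta$, using $\Vert\bm{u}'\Vert_\infty\le\alpha \Rightarrow \Vert\bm{u}'\Vert_2\le\sqrt{d}\,\alpha$; this is the substitution $\Vert\bar{\bm{u}}\Vert\mapsto\beta$. The second-moment coefficient equals $\tfrac{n}{n+n'}\bigl(\Vert\bar{\bm{u}}\Vert^2+\tfrac{d\sigma^2}{n-1}\bigr)$ up to the cross term $2n\langle\bar{\bm{u}},\sum r'\bm{u}'\rangle$ and the term $\Vert\sum r'\bm{u}'\Vert^2$; bounding these under the $\ell_\infty$ and support constraints gives $|2n\langle\bar{\bm{u}},\sum r'\bm{u}'\rangle|\le\tau$ and $\Vert\sum r'\bm{u}'\Vert^2\le (n')^2 d\alpha^2$, producing the $-\tau$ term inside the lower bound and the $+(n')^2 d\alpha^2+\tau$ terms inside the upper bound. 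Feeding these substitutions into the clean $\Phi$-difference yields the two claimed inequalities.

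The main obstacle is that $\mathcal{I}'$ is adversarial, so both bounds must hold uniformly over every admissible poisoning $\mathcal{I}'\subseteq\mathcal{P}(\bm{u}',\alpha)$: the lower bound requires the worst case and the upper bound the opposite extreme, and the cross term $\langle\bar{\bm{u}},\sum r'\bm{u}'\rangle$ has indefinite sign. Controlling it tightly under the joint $\ell_\infty$ constraint---which is exactly where $\tau=2nn'\alpha\Vert\bar{\bm{u}}\Vert_0$ originates---while simultaneously accounting for the fact that the adversarial normalization $\gamma^{\bm{u}}_{(t+k)}$ now depends on the poisoning-perturbed magnitude $\Vert\bm{v}_{(0)}'\Vert$ is the delicate part; the strict inequality in the lower bound (versus the non-strict one in Theorem~\ref{the:adv_train}) signals that this coupling must be kept non-degenerate throughout.
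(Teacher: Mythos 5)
Your proposal is correct and follows essentially the same route as the paper's proof: extend the transformation-function dynamics (Proposition~\ref{pro:M_t_adv}) to the poisoned item embedding, reduce the error-reduction to a difference of two Gaussian CDF values after freezing $\Vert \bm{v}_{(0)}'\Vert^2$ at its expectation under $\Vert\bar{\bm{u}}\Vert \gg \sigma$, and then bound the adversary's contributions uniformly to obtain $\beta$, $\pm\tau$, and $(n')^2 d\alpha^2$ exactly as in Appendix~\ref{prf:poi_adv_train}. Your only cosmetic deviation is working directly with the shifted-mean, rescaled-covariance law of $\bm{v}_{(0)}'$ rather than decomposing it as $\frac{n}{n+n'}\bm{v}_{(0)} + \frac{1}{n+n'}\sum r'\bm{u}'$ and standardizing $\bm{v}_{(0)} = \bar{\bm{u}} + \bm{\varepsilon}$, which is equivalent (and note that $\gamma^{(\bm{u})}_{(t+k)}$ depends on $\Vert\bm{u}_{(t+k)}\Vert$, not on $\Vert\bm{v}_{(0)}'\Vert$; the poisoned item norm is absorbed into the constant $C_{t+k}$).
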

For the proof, please refer to Appendix~\ref{prf:poi_adv_train}. From Theorem~\ref{the:adv_poison_train}, we understand that increasing $\Psi(\bm{u}, t+k)$ can further improve both the upper and lower bounds of error reduction, thereby mitigating the negative impact of poisons. Specifically, this involves the same mechanism as in the clean data context: increasing the perturbation magnitude $\epsilon$ within $\epsilon < \frac{\min(\Vert \bm{u}_{(t+k)} \Vert, \Vert \Bar{\bm{u}} \Vert)}{\eta \lambda}$.

In conclusion, the theorems in this section indicate that for each user $\bm{u}$, when the user's perturbation magnitude meets $\epsilon < \frac{\min(\Vert \bm{u}_{(t+k)} \Vert, \Vert \Bar{\bm{u}} \Vert)}{\eta \lambda}$, we have the following: (1) ACF is theoretically shown to be more effective than traditional CF, and (2) Increasing the user's perturbation magnitude during training as much as possible can further improve both the performance and robustness of ACF. These theoretical understandings can further benefit exploring and fully unleashing the potential of ACF.

\section{Methodology}
\label{sec:method}

\begin{wrapfigure}{r}{0.48\textwidth}
  \centering
  \includegraphics[width=0.5\textwidth]{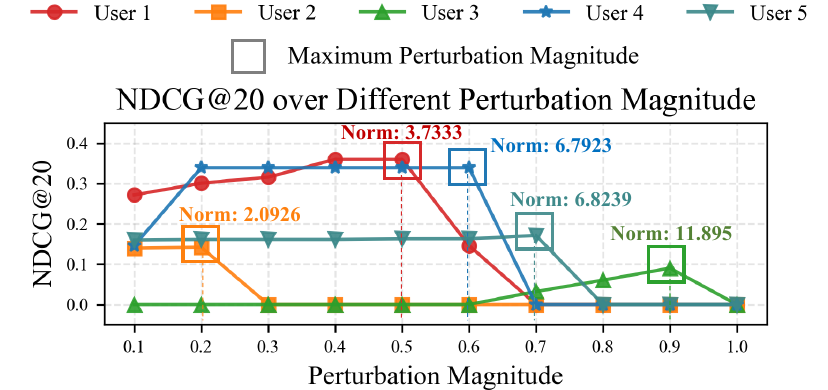}
  \caption{NDCG@20 across various perturbation magnitudes for five users (subject to Random Attacks~\cite{lam2004shilling}).}
  \label{fig:intro_atk}
\end{wrapfigure}

To extend theoretical understandings from the simple CF scenario to more practical scenarios, such as multi-item recommendations with Bayesian Personalized Ranking (BPR)~\cite{rendle2009bpr}, which is a mainstream loss function used in CF recommendations, we first conduct a preliminary experiment shown in Figure~\ref{fig:intro_atk}. Using Matrix Factorization~\cite{koren2009matrix} on the Gowalla dataset~\cite{liang2016modeling}, we observe results similar to those in Theorem~\ref{the:adv_train} and Theorem~\ref{the:adv_poison_train}: NDCG@20 for users improves within their maximum magnitudes, i.e., constraints, but significantly declines once these constraints are surpassed. Based on the theoretical understandings provided in Section~\ref{sec:Theo}, we derive the following corollary to identify the maximum perturbation magnitude for each user in practical CF scenarios.

\begin{corollary}
    \label{cor:rank_gamma}
    Given any dot-product-based loss function $\mathcal{L}(\Theta)$, within the framework of Adversarial Collaborative Filtering as defined in Equation~\ref{eq:adv_delta}, the maximum perturbation magnitude $\epsilon^{(\bm{u})}_{(t), \max}$ for user $\bm{u}$ at the $t^{\mathrm{th}}$ epoch is positively related to $\Vert\bm{u}_{(t)}\Vert$.
\end{corollary}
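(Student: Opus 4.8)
The plan is to show that the maximum admissible perturbation magnitude $\epsilon^{(\bm{u})}_{(t),\max}$ is, up to a factor independent of $\Vert\bm{u}_{(t)}\Vert$, proportional to $\Vert\bm{u}_{(t)}\Vert$, which immediately yields the claimed positive relationship. I would anchor the argument on the structure already exposed by Theorems~\ref{the:train}--\ref{the:adv_poison_train}: in the Gaussian model every bound holds exactly under $\epsilon < \frac{\min(\Vert\bm{u}_{(t)}\Vert,\Vert\bar{\bm{u}}\Vert)}{\eta\lambda}$, and this threshold is precisely the value of $\epsilon$ at which the factor $\gamma_{(t)}^{\bm{u}} = (1-\eta\lambda\epsilon/\Vert\bm{u}_{(t)}\Vert)^{-1}$ of Equation~\ref{eq:psi} diverges. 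The strategy is therefore to isolate the analogue of $\gamma_{(t)}^{\bm{u}}$ for an arbitrary dot-product loss and to read off its singularity.

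First I would write a generic dot-product loss as $\mathcal{L}(\Theta)=g(\langle\bm{u},\bm{v}\rangle,\dots)$, so that $\nabla_{\bm{u}}\mathcal{L}=(\partial g/\partial\hat{r})\,\bm{v}$ and $\nabla_{\bm{v}}\mathcal{L}=(\partial g/\partial\hat{r})\,\bm{u}$. Under the APR rule of Equation~\ref{eq:adv_delta} the maximizing perturbation is the normalized ascent direction, so $\Delta_{\bm{v}}^{\mathrm{adv}}$ is aligned with $\pm\bm{u}/\Vert\bm{u}\Vert$ and $\Delta_{\bm{u}}^{\mathrm{adv}}$ with $\pm\bm{v}/\Vert\bm{v}\Vert$. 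Substituting these into the adversarial update $\bm{u}_{(t+1)}=\bm{u}_{(t)}-\eta\big(\nabla_{\bm{u}}\mathcal{L}(\Theta)+\lambda\nabla_{\bm{u}}\mathcal{L}(\Theta+\Delta^{\mathrm{adv}})\big)$ and collecting the terms proportional to $\bm{u}_{(t)}$, the coefficient multiplying $\bm{u}_{(t)}$ takes the form $1-\eta\lambda\,(\partial g/\partial\hat{r}^{\mathrm{adv}})\,\epsilon/\Vert\bm{u}_{(t)}\Vert$, because the item-side perturbation contributes $\langle\bm{u}_{(t)},\Delta_{\bm{v}}^{\mathrm{adv}}\rangle=\pm\epsilon\Vert\bm{u}_{(t)}\Vert$ and the back-propagated gradient reintroduces a division by $\Vert\bm{u}_{(t)}\Vert$. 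This recovers exactly $(\gamma^{\bm{u}}_{(t)})^{-1}$ when $g$ is the linear Gaussian loss (where $\partial g/\partial\hat{r}$ is the constant $-r$), confirming that the same dimensionless ratio $\eta\lambda\epsilon/\Vert\bm{u}_{(t)}\Vert$ governs the general case.

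I would then argue that the update preserves the sign of the self-reinforcing component of $\bm{u}_{(t)}$ — and hence keeps reducing the recommendation error, in line with Theorems~\ref{the:adv_train}--\ref{the:adv_poison_train} — exactly while this coefficient stays positive, i.e., while $\epsilon < \Vert\bm{u}_{(t)}\Vert/\big(\eta\lambda\,|\partial g/\partial\hat{r}^{\mathrm{adv}}|\big)$; at the critical value the coefficient vanishes and the perturbation overturns the useful gradient component, matching the empirical collapse in Figure~\ref{fig:intro_atk}. Writing $\epsilon^{(\bm{u})}_{(t),\max}=c_{(t)}\Vert\bm{u}_{(t)}\Vert$ with $c_{(t)}=\big(\eta\lambda\,|\partial g/\partial\hat{r}^{\mathrm{adv}}|\big)^{-1}$ then gives a threshold monotonically increasing in $\Vert\bm{u}_{(t)}\Vert$, which is the assertion of the corollary.

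The hard part will be controlling the loss-dependent scalar $\partial g/\partial\hat{r}^{\mathrm{adv}}$. Unlike the linear Gaussian loss, for a nonlinear objective such as BPR's log-sigmoid this scalar is itself a function of the perturbed score $\hat{r}^{\mathrm{adv}}=\hat{r}\pm\epsilon\Vert\bm{u}_{(t)}\Vert\pm\epsilon\Vert\bm{v}\Vert$, so $c_{(t)}$ depends implicitly on $\Vert\bm{u}_{(t)}\Vert$ and the threshold equation becomes self-consistent rather than explicit. I expect to resolve this by expanding to first order around the unperturbed score and invoking boundedness of $g'$ (for BPR, $\partial g/\partial\hat{r}\in(0,\tfrac14]$), which keeps $c_{(t)}$ bounded and slowly varying, so that the leading-order dependence of $\epsilon^{(\bm{u})}_{(t),\max}$ on $\Vert\bm{u}_{(t)}\Vert$ remains linear and the relationship stays positive; the scalar only reshapes the constant $c_{(t)}$, not the monotonicity.
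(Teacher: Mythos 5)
Your proposal follows essentially the same route as the paper: the paper's Appendix~\ref{ap:proof_4} likewise writes the dot-product loss gradients as scalar coefficients times embeddings ($\phi$, $\psi$ in Equation~\ref{eq:phi}), substitutes the normalized-gradient perturbations into the adversarial update, extracts the self-coefficient of $\bm{u}_{(t)}$ as $1-\frac{\eta\lambda\epsilon}{\Vert\bm{u}_{(t)}\Vert}\sum_{\bm{v}\in\mathcal{N}_{\bm{u}}}\vert\psi\vert$ (your $1-\eta\lambda\,\vert\partial g/\partial\hat{r}\vert\,\epsilon/\Vert\bm{u}_{(t)}\Vert$, summed over interacted items), and reads the threshold $\epsilon^{(\bm{u})}_{(t),\max}\propto\Vert\bm{u}_{(t)}\Vert$ off the positivity of that coefficient, concluding only a positive relation because the loss-dependent scalar introduces other factors. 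Your treatment of the self-consistency of the scalar at the perturbed score is a slightly more explicit version of the paper's first-order Taylor approximation (your stated bound $\partial g/\partial\hat{r}\in(0,\tfrac14]$ for log-sigmoid should be $(0,1)$, but this is immaterial), so the argument is correct and matches the paper's proof.
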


For the proof of Corollary~\ref{cor:rank_gamma}, please refer to Appendix~\ref{ap:proof_4}. According to Corollary~\ref{cor:rank_gamma}, we observe that for a user $\bm{u}$, the larger $\Vert\bm{u}\Vert$, the greater the maximum perturbation magnitude. Considering that maximum perturbation magnitudes will be affected by other factors in the actual training process, to ensure training stability,  we decompose $\epsilon^{(\bm{u})}_{(t), \max}$ for a user $\bm{u}$ at epoch $t$ into two components: the uniform perturbation magnitude $\rho$, applicable to all users, and a user-specific perturbation coefficient $c(\bm{u}, t)$, expressed as:
\begin{equation}
    \label{eq:rho_c}
    \epsilon^{(\bm{u})}_{(t), \max} = \rho \cdot c(\bm{u}, t).
\end{equation}
According to Corollary~\ref{cor:rank_gamma}, $c(\bm{u}, t)$ provides coefficients positively related to users' embedding scales. To avoid training instability caused by extreme scale values, we map $c(\bm{u}, t)$ into the interval $(0, 1)$, defined by:
\begin{equation*}
    c(\bm{u}, t) = \mathrm{sig}\left( \frac{\Vert\bm{u}_{(t)}\Vert - \overline{\Vert\bm{u}_{(t)}\Vert}}{\overline{\Vert\bm{u}_{(t)}\Vert}} \right),
\end{equation*}
where $\overline{\Vert\bm{u}_{(t)}\Vert}$ represents the average norm of all user embeddings at epoch $t$, and $\mathrm{sig}(\cdot)$ denotes the sigmoid function. 
Consequently, the loss function for our method, Personalized Magnitude Adversarial Collaborative Filtering (PamaCF), is defined as:
\begin{equation}
\label{eq:pama}
    \begin{aligned}
        \mathcal{L}_{\mathrm{PamaCF}}(\Theta) =& \mathcal{L}(\Theta) + \lambda \mathcal{L}(\Theta + \Delta^{\mathrm{PamaCF}}), \\
        \mathrm{where} \quad \Delta^{\mathrm{PamaCF}} =& \arg \max_{\Delta,\, \Vert \Delta_{u} \Vert \leq \rho \cdot c(\bm{u}, t)} \mathcal{L}(\Theta+\Delta),
    \end{aligned}  
\end{equation}
where $\lambda$ is the weight of adversarial training, $\rho$ represents the uniform perturbation magnitude for all users, and $\Delta_{u}$ is the perturbation relative to user $u$. To maximize the perturbation magnitude for each user within $\rho c(\bm{u}, t)$, we use the perturbation along the gradient direction of the user's adversarial loss with a step length of $\rho c(\bm{u}, t)$ as $\Delta_{u}$. The specific algorithm process is detailed in Appendix~\ref{ap:method}.

\section{Experiments}
\label{sec:exp}

In this section, we conduct extensive experiments to address the following research questions (\textbf{RQs}):
\begin{itemize}[leftmargin=*]
    \item \textbf{RQ1:} Can PamaCF further improve the performance and robustness of traditional ACF?
    \item \textbf{RQ2:} Why does PamaCF perform better than traditional ACF?
    \item \textbf{RQ3:} How do hyper-parameters affect PamaCF?
\end{itemize}

\subsection{Experimental Setup}
\label{sec:exp_setup}

In this section, we briefly introduce the experimental settings. For detailed information, including dataset preprocessing, comprehensive baseline descriptions, and implementation details, please refer to Appendix~\ref{ap:exp_set}.

\textbf{Datasets}. We employ three common benchmarks: the \textit{Gowalla} check-in dataset~\cite{liang2016modeling}, the \textit{Yelp2018} business dataset, and the \textit{MIND} news recommendation dataset~\cite{wu2020mind}.

\textbf{Attack Methods}. We employ both heuristic (Random Attack~\cite{lam2004shilling}, Bandwagon Attack~\cite{burke2005limited}) and optimization-based (Rev Attack~\cite{tang2020revisiting}, DP Attack~\cite{huang2021data}) attack methods within a black-box context, where the attacker does not have access to the internal architecture or parameters of the target model.

\textbf{Defense Baselines}. We incorporate a variety of defense methods, including detection-based approaches (GraphRfi~\cite{zhang2020gcnbased} and LLM4Dec~\cite{zhang2024lorec}), adversarial collaborative filtering methods (APR~\cite{he2018adversarial} and SharpCF~\cite{chen2023adversarial}), and a denoise-based strategy (StDenoise~\cite{tian2022learning, ye2023towards}). In our study, we employ three common backbone recommendation models, Matrix Factorization (MF)~\cite{koren2009matrix}, LightGCN~\cite{he2020lightgcn}, and NeurMF~\cite{he2017neural}.

\textbf{Evaluation Metrics}. The primary metrics for assessing recommendation performance are the top-$k$ metrics: $\mathrm{Recall}@k$ and $\mathrm{NDCG}@k$, as documented in \cite{he2020lightgcn, zhang2023robust, wang2019neural}. To quantify the success ratio of attacks, we utilize  $\mathrm{T}\text{-}\mathrm{HR}@k$ and $\mathrm{T}\text{-}\mathrm{NDCG}@k$ to measure the performance of target items within the top-$k$ recommendations~\cite{tang2020revisiting, huang2021data, zhang2024lorec}, as:
\begin{equation}
    \label{eq:tar}
    \mathrm{T}\text{-}\mathrm{HR}@k = \frac{1}{|\mathcal{T}|} \sum_{\mathit{tar} \in \mathcal{T}} \frac{ \sum_{u \in \mathcal{U} \setminus \mathcal{U}_{\mathit{tar}}} \mathbb{I}\left(\mathit{tar} \in L_{u, {1:k}}\right)}{|\mathcal{U} \setminus \mathcal{U}_{\mathit{tar}}|},
\end{equation}
where $\mathcal{T}$ is the set of target items, $\mathcal{U}_{\mathit{tar}}$ denotes the set of genuine users who have interacted with target items $\mathit{tar}$, $L_{u, {1:k}}$ represents the top-$k$ list of recommendations for user $u$, and $\mathbb{I}(\cdot)$ is the indicator function that returns 1 if the condition is true. The $\mathrm{T}\text{-}\mathrm{NDCG}@k$ mirrors $\mathrm{T}\text{-}\mathrm{HR}@k$, serving as the target item-specific version of $\mathrm{NDCG}@k$.

\subsection{Performance Comparison~(RQ1)}
In this section, we answer \textbf{RQ1}. We focus on two key aspects: the recommendation performance and the robustness against poisoning attacks.

\textbf{Recommendation Performance}.
We assess the efficacy of PamaCF in both clean and poisoning data contexts, focusing on the performance of recommender systems, as presented in Table~\ref{tab:attack_rc}. The denoise-based defense method, which does not directly defend against poisoning attacks but rather purifies noisy interactions, fails to improve recommendation performance in most cases. Detection-based methods, such as GraphRfi and LLM4Dec, exhibit some misclassifications of fake and genuine users, leading to a decline in recommendation performance.

In contrast, we observe a notable enhancement in recommendation quality when ACF methods (APR, SharpCF, and PamaCF) are utilized. This finding is consistent with results from previous studies~\cite{he2018adversarial, chen2023adversarial} and aligns with our prior theoretical analysis. Among the defense methods, PamaCF stands out, achieving the most significant improvements in recommendation performance compared to the backbone model and other baseline approaches. Specifically, PamaCF increases $\mathrm{Recall}@20$ and $\mathrm{NDCG}@20$ by 13.84\% and 22.04\% in average, respectively, compared to the backbone model.

\setlength{\textfloatsep}{6pt plus 1pt minus 2pt}

\begin{table*}[t]
    \centering
    \caption{Recommendation Performance}
    \resizebox{\textwidth}{!}{
        \begin{threeparttable}

\begin{tabular}{lcccccccccc}
    \toprule
    \multicolumn{1}{c}{\multirow{1}{*}{\textbf{Model}}} & \multicolumn{2}{c}{\textbf{Clean} (\%)} & \multicolumn{2}{c}{\textbf{Random Attack} (\%)} & \multicolumn{2}{c}{\textbf{Bandwagon Attack} (\%)} & \multicolumn{2}{c}{\textbf{DP Attack} (\%)} & \multicolumn{2}{c}{\textbf{Rev Attack} (\%)}  \\ 
    \cmidrule(lr){2-3} \cmidrule(lr){4-5} \cmidrule(lr){6-7} \cmidrule(lr){8-9}  \cmidrule(lr){10-11}
    \multicolumn{1}{c}{\multirow{1}{*}{(Dataset)}} & \textbf{Recall@20} & \textbf{NDCG@20} & \textbf{Recall@20} & \textbf{NDCG@20} & \textbf{Recall@20} & \textbf{NDCG@20} & \textbf{Recall@20} & \textbf{NDCG@20} & \textbf{Recall@20} & \textbf{NDCG@20}  \\
    \midrule
    \textbf{MF}~~~(Gowalla)              &11.35 $\pm$ 0.09            &7.16 $\pm$ 0.03            &11.31 $\pm$ 0.08            &7.20 $\pm$ 0.06            &11.24 $\pm$ 0.08            &7.11 $\pm$ 0.04            &10.72 $\pm$ 0.11            &8.17 $\pm$ 0.08            &10.70 $\pm$ 0.09            &8.19 $\pm$ 0.04            \\
    ~~+\textbf{StDenoise}    &10.48 $\pm$ 0.10            &8.07 $\pm$ 0.10            &10.46 $\pm$ 0.09            &8.07 $\pm$ 0.07            &10.41 $\pm$ 0.06            &8.04 $\pm$ 0.02            &10.53 $\pm$ 0.13            &8.12 $\pm$ 0.09            &10.57 $\pm$ 0.05            &8.19 $\pm$ 0.04            \\
    ~~+\textbf{GraphRfi}     &10.43 $\pm$ 0.07            &7.97 $\pm$ 0.03            &10.34 $\pm$ 0.08            &7.89 $\pm$ 0.06            &10.30 $\pm$ 0.06            &7.85 $\pm$ 0.06            &10.40 $\pm$ 0.11            &7.94 $\pm$ 0.08            &10.50 $\pm$ 0.09            &8.01 $\pm$ 0.07            \\
    ~~+\textbf{APR}          &13.06 $\pm$ 0.06            &\underline{10.65 $\pm$ 0.06}&12.93 $\pm$ 0.04            &\underline{10.52 $\pm$ 0.01}&12.90 $\pm$ 0.07            &\underline{10.50 $\pm$ 0.03}&12.95 $\pm$ 0.06            &\underline{10.59 $\pm$ 0.06}&13.13 $\pm$ 0.05            &\underline{10.72 $\pm$ 0.06}\\
    ~~+\textbf{SharpCF}      &\underline{13.20 $\pm$ 0.07}&10.02 $\pm$ 0.09            &\underline{13.19 $\pm$ 0.08}&10.03 $\pm$ 0.07            &\underline{13.03 $\pm$ 0.06}&9.89 $\pm$ 0.05            &\underline{13.27 $\pm$ 0.14}&10.08 $\pm$ 0.10            &\underline{13.22 $\pm$ 0.09}&10.10 $\pm$ 0.04            \\
    \cmidrule{2-11}
    ~~+\textbf{PamaCF}          &\textbf{13.48 $\pm$ 0.02}  &\textbf{10.94 $\pm$ 0.05}  &\textbf{13.37 $\pm$ 0.07}  &\textbf{10.84 $\pm$ 0.03}  &\textbf{13.35 $\pm$ 0.03}  &\textbf{10.82 $\pm$ 0.02}  &\textbf{13.44 $\pm$ 0.08}  &\textbf{10.93 $\pm$ 0.04}  &\textbf{13.61 $\pm$ 0.05}  &\textbf{11.06 $\pm$ 0.08}  \\
    \multicolumn{1}{c}{Gain}& +2.10\% $\uparrow$& +2.76\% $\uparrow$& +1.40\% $\uparrow$& +3.04\% $\uparrow$& +2.53\% $\uparrow$& +3.07\% $\uparrow$& +1.31\% $\uparrow$& +3.23\% $\uparrow$& +2.96\% $\uparrow$& +3.15\% $\uparrow$\\
    \multicolumn{1}{c}{Gain w.r.t. MF}& +18.75\% $\uparrow$& +52.84\% $\uparrow$& +18.27\% $\uparrow$& +50.64\% $\uparrow$& +18.83\% $\uparrow$& +52.29\% $\uparrow$& +25.39\% $\uparrow$& +33.76\% $\uparrow$& +27.18\% $\uparrow$& +35.05\% $\uparrow$\\
    \midrule
    \textbf{MF}~~~(Yelp2018)              &3.76 $\pm$ 0.03            &2.97 $\pm$ 0.04            &3.73 $\pm$ 0.02            &2.93 $\pm$ 0.01            &3.74 $\pm$ 0.04            &2.95 $\pm$ 0.03            &3.87 $\pm$ 0.04            &3.03 $\pm$ 0.03            &3.81 $\pm$ 0.04            &3.03 $\pm$ 0.04            \\
    ~~+\textbf{StDenoise}    &3.41 $\pm$ 0.08            &2.61 $\pm$ 0.09            &3.29 $\pm$ 0.04            &2.50 $\pm$ 0.03            &3.32 $\pm$ 0.06            &2.52 $\pm$ 0.05            &3.38 $\pm$ 0.06            &2.58 $\pm$ 0.06            &3.38 $\pm$ 0.10            &2.59 $\pm$ 0.10            \\
    ~~+\textbf{GraphRfi}     &3.73 $\pm$ 0.05            &2.94 $\pm$ 0.03            &3.66 $\pm$ 0.04            &2.90 $\pm$ 0.03            &3.64 $\pm$ 0.05            &2.88 $\pm$ 0.03            &3.76 $\pm$ 0.06            &2.93 $\pm$ 0.05            &3.72 $\pm$ 0.05            &2.95 $\pm$ 0.04            \\
    ~~+\textbf{APR}          &\underline{4.09 $\pm$ 0.02}&\underline{3.20 $\pm$ 0.02}&\underline{4.04 $\pm$ 0.02}&\underline{3.16 $\pm$ 0.02}&\underline{4.08 $\pm$ 0.03}&\underline{3.19 $\pm$ 0.03}&4.01 $\pm$ 0.06            &3.15 $\pm$ 0.04            &\underline{4.06 $\pm$ 0.03}&\underline{3.20 $\pm$ 0.02}\\
    ~~+\textbf{SharpCF}      &3.93 $\pm$ 0.04            &3.11 $\pm$ 0.05            &3.88 $\pm$ 0.01            &3.06 $\pm$ 0.02            &3.91 $\pm$ 0.05            &3.08 $\pm$ 0.03            &\underline{4.03 $\pm$ 0.03}&\underline{3.16 $\pm$ 0.04}&3.97 $\pm$ 0.05            &3.16 $\pm$ 0.05            \\
    \cmidrule{2-11}
    ~~+\textbf{PamaCF}          &\textbf{4.18 $\pm$ 0.02}  &\textbf{3.29 $\pm$ 0.02}  &\textbf{4.13 $\pm$ 0.01}  &\textbf{3.25 $\pm$ 0.01}  &\textbf{4.19 $\pm$ 0.04}  &\textbf{3.29 $\pm$ 0.03}  &\textbf{4.25 $\pm$ 0.04}  &\textbf{3.33 $\pm$ 0.04}  &\textbf{4.27 $\pm$ 0.03}  &\textbf{3.37 $\pm$ 0.03}  \\
    \multicolumn{1}{c}{Gain}& +2.20\% $\uparrow$& +2.75\% $\uparrow$& +2.33\% $\uparrow$& +2.91\% $\uparrow$& +2.70\% $\uparrow$& +3.01\% $\uparrow$& +5.30\% $\uparrow$& +5.24\% $\uparrow$& +5.04\% $\uparrow$& +5.22\% $\uparrow$\\
    \multicolumn{1}{c}{Gain w.r.t. MF}& +11.22\% $\uparrow$& +10.63\% $\uparrow$& +10.72\% $\uparrow$& +10.84\% $\uparrow$& +11.91\% $\uparrow$& +11.60\% $\uparrow$& +9.88\% $\uparrow$& +9.84\% $\uparrow$& +11.91\% $\uparrow$& +11.36\% $\uparrow$\\
    \midrule
    \textbf{MF}~~~(MIND)             &1.20 $\pm$ 0.01            &0.68 $\pm$ 0.00            &1.19 $\pm$ 0.01            &0.67 $\pm$ 0.01            &1.19 $\pm$ 0.02            &0.68 $\pm$ 0.00            &1.20 $\pm$ 0.00            &0.69 $\pm$ 0.01            & { \footnotesize OOM} & { \footnotesize OOM}\\
    ~~+\textbf{StDenoise}    &1.13 $\pm$ 0.01            &0.63 $\pm$ 0.01            &1.12 $\pm$ 0.01            &0.63 $\pm$ 0.00            &1.12 $\pm$ 0.01            &0.63 $\pm$ 0.00            &1.13 $\pm$ 0.01            &0.64 $\pm$ 0.01            & { \footnotesize OOM} & { \footnotesize OOM}\\
    ~~+\textbf{GraphRfi}     &1.20 $\pm$ 0.01            &0.67 $\pm$ 0.00            &1.19 $\pm$ 0.01            &0.67 $\pm$ 0.00            &1.19 $\pm$ 0.01            &0.67 $\pm$ 0.01            &1.20 $\pm$ 0.02            &0.67 $\pm$ 0.01            & { \footnotesize OOM} & { \footnotesize OOM}\\
    ~~+\textbf{LLM4Dec}      &1.20 $\pm$ 0.01            &0.68 $\pm$ 0.00            &1.19 $\pm$ 0.01            &0.67 $\pm$ 0.01            &1.19 $\pm$ 0.01            &0.68 $\pm$ 0.00            &1.19 $\pm$ 0.00            &0.68 $\pm$ 0.00            & { \footnotesize OOM} & { \footnotesize OOM}\\
    ~~+\textbf{APR}          &\underline{1.22 $\pm$ 0.01}&\underline{0.68 $\pm$ 0.01}&\underline{1.26 $\pm$ 0.02}&\underline{0.71 $\pm$ 0.01}&\underline{1.21 $\pm$ 0.01}&\underline{0.69 $\pm$ 0.00}&\underline{1.21 $\pm$ 0.01}&\underline{0.70 $\pm$ 0.01}& { \footnotesize OOM} & { \footnotesize OOM}\\
    \cmidrule{2-11}
    ~~+\textbf{PamaCF}          &\textbf{1.30 $\pm$ 0.01}  &\textbf{0.73 $\pm$ 0.00}  &\textbf{1.27 $\pm$ 0.01}  &\textbf{0.72 $\pm$ 0.00}  &\textbf{1.27 $\pm$ 0.01}  &\textbf{0.72 $\pm$ 0.00}  &\textbf{1.30 $\pm$ 0.01}  &\textbf{0.74 $\pm$ 0.01}  & { \footnotesize OOM} & { \footnotesize OOM}\\
    \multicolumn{1}{c}{Gain}& +7.06\% $\uparrow$& +7.53\% $\uparrow$& +0.71\% $\uparrow$& +0.69\% $\uparrow$& +5.02\% $\uparrow$& +5.12\% $\uparrow$& +6.90\% $\uparrow$& +6.26\% $\uparrow$& - & -\\
    \multicolumn{1}{c}{Gain w.r.t. MF}& +8.30\% $\uparrow$& +8.49\% $\uparrow$& +6.81\% $\uparrow$& +7.00\% $\uparrow$& +6.80\% $\uparrow$& +6.66\% $\uparrow$& +7.79\% $\uparrow$& +7.49\% $\uparrow$& - & -\\
    \bottomrule

\end{tabular}
        \begin{tablenotes}
            \item[1] The Rev attack method could not be executed on the dataset due to memory constraints, resulting in an out-of-memory error.
        \end{tablenotes}
        \end{threeparttable}
    }
\label{tab:attack_rc}%
\end{table*}

\begin{table*}[t]
    \centering
    \caption{Robustness against target items promotion}
    \resizebox{\textwidth}{!}{
        \begin{threeparttable}

\begin{tabular}{clcccccccc}
    \toprule
    \multicolumn{1}{c}{\multirow{2}{*}{\textbf{Dataset}}}& \multicolumn{1}{c}{\multirow{2}{*}{\textbf{Model}}} & \multicolumn{2}{c}{\textbf{Random Attack}(\%)} & \multicolumn{2}{c}{\textbf{Bandwagon Attack}(\%)} & \multicolumn{2}{c}{\textbf{DP Attack}(\%)} & \multicolumn{2}{c}{\textbf{Rev Attack}(\%)} \\ 
    \cmidrule(lr){3-4} \cmidrule(lr){5-6} \cmidrule(lr){7-8} \cmidrule(lr){9-10}
    & & \textbf{T-HR@50}\tnote{1} & \textbf{T-NDCG@50} & \textbf{T-HR@50} & \textbf{T-NDCG@50} & \textbf{T-HR@50} & \textbf{T-NDCG@50} & \textbf{T-HR@50} & \textbf{T-NDCG@50} \\
    \midrule
    \multirow{7}{1.2cm}{\centering \textbf{Gowalla}}
    &\textbf{MF}             &\underline{0.148 $\pm$ 0.030}&\underline{0.036 $\pm$ 0.008}&\underline{0.120 $\pm$ 0.027}&\underline{0.029 $\pm$ 0.007}&0.201 $\pm$ 0.020            &0.051 $\pm$ 0.005            &0.246 $\pm$ 0.097            &0.061 $\pm$ 0.027            \\
    &~~+\textbf{StDenoise}   &0.200 $\pm$ 0.049            &0.050 $\pm$ 0.012            &0.165 $\pm$ 0.034            &0.038 $\pm$ 0.008            &0.292 $\pm$ 0.034            &0.074 $\pm$ 0.010            &0.355 $\pm$ 0.126            &0.084 $\pm$ 0.030            \\
    &~~+\textbf{GraphRfi}    &0.159 $\pm$ 0.061            &0.042 $\pm$ 0.015            &0.154 $\pm$ 0.038            &0.036 $\pm$ 0.009            &0.174 $\pm$ 0.038            &0.043 $\pm$ 0.009            &\underline{0.206 $\pm$ 0.042}&\underline{0.050 $\pm$ 0.010}\\
    &~~+\textbf{APR}         &0.201 $\pm$ 0.091            &0.054 $\pm$ 0.026            &0.184 $\pm$ 0.067            &0.047 $\pm$ 0.015            &\underline{0.034 $\pm$ 0.021}&\underline{0.006 $\pm$ 0.004}&0.261 $\pm$ 0.063            &0.067 $\pm$ 0.018            \\
    &~~+\textbf{SharpCF}     &0.204 $\pm$ 0.037            &0.049 $\pm$ 0.010            &0.169 $\pm$ 0.031            &0.041 $\pm$ 0.008            &0.303 $\pm$ 0.024            &0.077 $\pm$ 0.006            &0.350 $\pm$ 0.111            &0.087 $\pm$ 0.031            \\
    \cmidrule{3-10}
    &~~+\textbf{PamaCF}         &\textbf{0.070 $\pm$ 0.028}  &\textbf{0.017 $\pm$ 0.007}  &\textbf{0.064 $\pm$ 0.026}  &\textbf{0.015 $\pm$ 0.006}  &\textbf{0.021 $\pm$ 0.011}  &\textbf{0.004 $\pm$ 0.002}  &\textbf{0.079 $\pm$ 0.039}  &\textbf{0.019 $\pm$ 0.009}  \\
    & \multicolumn{1}{c}{Gain\tnote{2}}& +52.72\% $\uparrow$& +51.95\% $\uparrow$& +46.19\% $\uparrow$& +47.01\% $\uparrow$& +36.33\% $\uparrow$& +33.02\% $\uparrow$& +61.41\% $\uparrow$& +62.51\% $\uparrow$\\
    \midrule
    \multirow{7}{1.2cm}{\centering \textbf{Yelp2018}}
    &\textbf{MF}             &0.035 $\pm$ 0.007            &0.010 $\pm$ 0.002            &0.073 $\pm$ 0.032            &0.020 $\pm$ 0.009            &0.223 $\pm$ 0.040            &0.049 $\pm$ 0.009            &0.153 $\pm$ 0.025            &0.040 $\pm$ 0.006            \\
    &~~+\textbf{StDenoise}   &0.108 $\pm$ 0.038            &0.027 $\pm$ 0.010            &0.181 $\pm$ 0.046            &0.043 $\pm$ 0.011            &0.376 $\pm$ 0.198            &0.077 $\pm$ 0.039            &0.331 $\pm$ 0.145            &0.075 $\pm$ 0.031            \\
    &~~+\textbf{GraphRfi}    &0.032 $\pm$ 0.009            &0.009 $\pm$ 0.003            &0.058 $\pm$ 0.014            &0.015 $\pm$ 0.003            &0.200 $\pm$ 0.041            &0.043 $\pm$ 0.010            &0.129 $\pm$ 0.027            &0.031 $\pm$ 0.007            \\
    &~~+\textbf{APR}         &\underline{0.012 $\pm$ 0.007}&\underline{0.004 $\pm$ 0.002}&\underline{0.057 $\pm$ 0.047}&\underline{0.013 $\pm$ 0.011}&\underline{0.185 $\pm$ 0.038}&\underline{0.040 $\pm$ 0.009}&\underline{0.098 $\pm$ 0.048}&\underline{0.022 $\pm$ 0.011}\\
    &~~+\textbf{SharpCF}     &0.034 $\pm$ 0.007            &0.010 $\pm$ 0.002            &0.072 $\pm$ 0.029            &0.019 $\pm$ 0.008            &0.226 $\pm$ 0.041            &0.050 $\pm$ 0.010            &0.152 $\pm$ 0.025            &0.040 $\pm$ 0.006            \\
    \cmidrule{3-10}
    &~~+\textbf{PamaCF}         &\textbf{0.010 $\pm$ 0.006}  &\textbf{0.004 $\pm$ 0.002}  &\textbf{0.028 $\pm$ 0.022}  &\textbf{0.007 $\pm$ 0.005}  &\textbf{0.135 $\pm$ 0.033}  &\textbf{0.027 $\pm$ 0.007}  &\textbf{0.045 $\pm$ 0.021}  &\textbf{0.010 $\pm$ 0.004}  \\
    & \multicolumn{1}{c}{Gain}& +14.29\% $\uparrow$& +18.22\% $\uparrow$& +50.33\% $\uparrow$& +45.73\% $\uparrow$& +27.41\% $\uparrow$& +30.62\% $\uparrow$& +54.24\% $\uparrow$& +53.25\% $\uparrow$\\
    \midrule
    \multirow{7}{1.2cm}{\centering \textbf{MIND}}
    &\textbf{MF}             &0.032 $\pm$ 0.007            &0.010 $\pm$ 0.002            &0.169 $\pm$ 0.017            &0.055 $\pm$ 0.005            &0.023 $\pm$ 0.013            &0.005 $\pm$ 0.003            & { \footnotesize OOM} & { \footnotesize OOM}\\
    &~~+\textbf{StDenoise}   
    &0.036 $\pm$ 0.006            &0.013 $\pm$ 0.004            &\underline{0.040 $\pm$ 0.006}           &\underline{0.020 $\pm$ 0.004}           &0.010 $\pm$ 0.003            &0.002 $\pm$ 0.001 & { \footnotesize OOM} & { \footnotesize OOM}\\
    &~~+\textbf{GraphRfi}    &0.031 $\pm$ 0.006            &0.010 $\pm$ 0.002            &0.189 $\pm$ 0.015            &0.059 $\pm$ 0.005            &0.020 $\pm$ 0.009            &0.004 $\pm$ 0.002            & { \footnotesize OOM} & { \footnotesize OOM}\\
    &~~+\textbf{LLM4Dec}     &\underline{0.020 $\pm$ 0.001}&\textbf{0.004 $\pm$ 0.000}&0.083 $\pm$ 0.009            &0.025 $\pm$ 0.003            &0.019 $\pm$ 0.010            &0.004 $\pm$ 0.002            & { \footnotesize OOM} & { \footnotesize OOM}\\
    &~~+\textbf{APR}         &0.083 $\pm$ 0.013            &0.035 $\pm$ 0.006            &0.068 $\pm$ 0.005            &0.023 $\pm$ 0.002            &\underline{0.008 $\pm$ 0.007}&\underline{0.002 $\pm$ 0.001}& { \footnotesize OOM} & { \footnotesize OOM}\\
    \cmidrule{3-10}
    &~~+\textbf{PamaCF}         &\textbf{0.012 $\pm$ 0.002}  &\underline{0.005 $\pm$ 0.001}  &\textbf{0.016 $\pm$ 0.002}  &\textbf{0.006 $\pm$ 0.001}  &\textbf{0.000 $\pm$ 0.000}  &\textbf{0.000 $\pm$ 0.000}  & { \footnotesize OOM} & { \footnotesize OOM}\\
    & \multicolumn{1}{c}{Gain}& +39.80\% $\uparrow$& -& +60.15\% $\uparrow$& +72.10\% $\uparrow$& +95.02\% $\uparrow$& +94.32\% $\uparrow$& - & -\\
    \bottomrule
\end{tabular}
\begin{tablenotes}
    \item[1] Target Item Hit Ratio (Equation~\ref{eq:tar}); T-HR@50 and T-NDCG@50 of all target items on clean datasets are 0.000.
    \item[2] The relative percentage increase of PamaCF's metrics to the best value of other baselines' metrics, i.e., $\left(\min\left(\mathrm{T}\text{-}\mathrm{HR}_\mathrm{Beslines}\right) - \mathrm{T}\text{-}\mathrm{HR}_\mathrm{PamaCF} \right)/ \min(\mathrm{T}\text{-}\mathrm{HR}_\mathrm{Beslines})$. Notably, only \textbf{three decimal places} are presented due to space limitations, though the actual ranking and calculations utilize the \textbf{full precision} of the data.
\end{tablenotes}
   
        \end{threeparttable}
    }
\label{tab:attack_per}%
\end{table*}

\textbf{Robustness Against Poisoning Attacks}.
We evaluate the capability of PamaCF in defending against poisoning attacks by examining the attack success ratio. Our experiments specifically target items with notably low popularity, as indicated by $\mathrm{T}\text{-}\mathrm{HR}@50$ and $\mathrm{T}\text{-}\mathrm{NDCG}@50$ scores of 0.0 when no attacks are present. Lower scores for $\mathrm{T}\text{-}\mathrm{HR}@50$ and $\mathrm{T}\text{-}\mathrm{NDCG}@50$ indicate stronger defense capabilities.

Table~\ref{tab:attack_per} presents the results, indicating that the purely denoise-based defense method is generally ineffective against most attacks and may even increase the attack's success ratio in some instances. Detection-based methods, such as GraphRfi and LLM4Dec, show robust defense against attacks similar to their training data, i.e., random attacks. However, the effectiveness of GraphRfi declines against other attack types. In contrast, ACF methods demonstrate stable defense capabilities across various attacks. Specifically, PamaCF significantly reduces the success ratio of attacks, decreasing $\mathrm{T}\text{-}\mathrm{HR}@50$ and $\mathrm{T}\text{-}\mathrm{NDCG}@50$ by 49.92\% and 43.73\% in average, respectively, compared to the best baseline. These results highlight PamaCF's advanced defense capabilities against various attacks.

Additionally, PamaCF's defense effectiveness against attacks targeting popular items is further evaluated. The corresponding results for LightGCN~\cite{he2020lightgcn} and NeuMF~\cite{he2017neural}, along with the recommendation performance at top-10, are also presented. All supplementary results are in Appendix~\ref{ap:pop}.

\subsection{Augmentation Analysis~(RQ2)}

In this section, we address \textbf{RQ2} by exploring why PamaCF can outperform traditional ACF (especially APR~\cite{he2018adversarial}) through embedding visualization and perturbation magnitude comparison.

\textbf{Embedding Visualization}. 
We randomly select a user and project the normalized embeddings of the user, real preference items, the target item given by attacks, and other items in the user's top-10 recommendation list into a two-dimensional space using T-SNE~\cite{van2008visualizing}, as shown in Figure~\ref{fig:tsne}. We observe that PamaCF can bring real preference items closer, reducing the distance from the farthest real preference item from 0.736 to 0.365, while leading the target item farther away from all the real preference items. PamaCF's personalized perturbation magnitude lowers the ranking of both the target item and other items, thus improving robustness and performance.

\textbf{Perturbation Magnitude Comparison}. 
We compare the maximum perturbation magnitudes of APR and PamaCF, i.e., $\epsilon$ in Equation~\ref{eq:adv_delta} for APR and $\rho$ in Equation~\ref{eq:pama} for PamaCF. Both $\epsilon$ and $\rho$ are selected through hyper-parameter tuning from $\{0.1, 0.2, \dots, 1.0\}$. In the left part of Figure~\ref{fig:eps_dis}, we observe that PamaCF finds a higher perturbation magnitude. Additionally, the right portion of Figure~\ref{fig:eps_dis} illustrates the distribution of personalized perturbation magnitudes across all users. These varying magnitudes for different users contribute to the improved effectiveness of PamaCF.

\begin{figure}[t]
    \centering
    \subfigure[Embedding Visualization]{
    \includegraphics[width=0.426\textwidth]{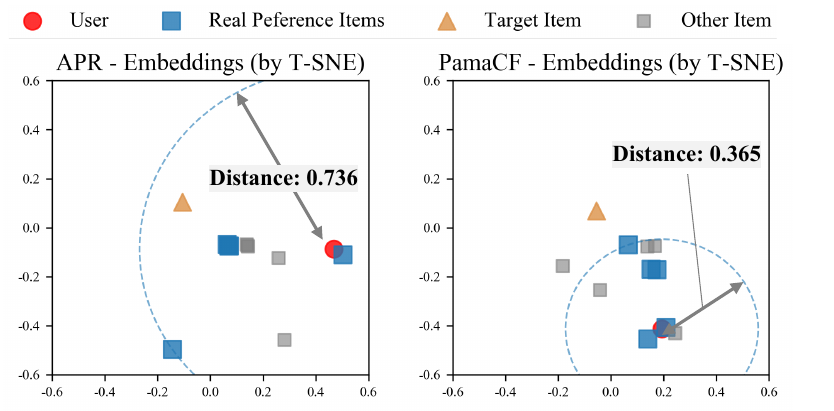}
    \label{fig:tsne}
    }
    \subfigure[Perturbation Magnitude Comparison]{
    \includegraphics[width=0.524\textwidth]{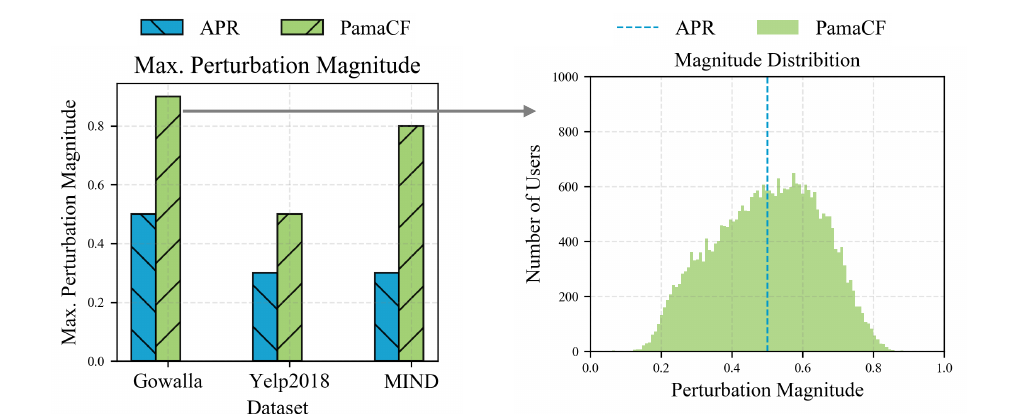}
    \label{fig:eps_dis}
    }
    \caption{(a) PamaCF brings real preference items closer; (b) PamaCF achieves larger magnitudes.}
\end{figure}

\subsection{Hyper-Parameters Analysis~(RQ3)}

\begin{figure}[t]
    \centering
    \includegraphics[width=0.99\textwidth]{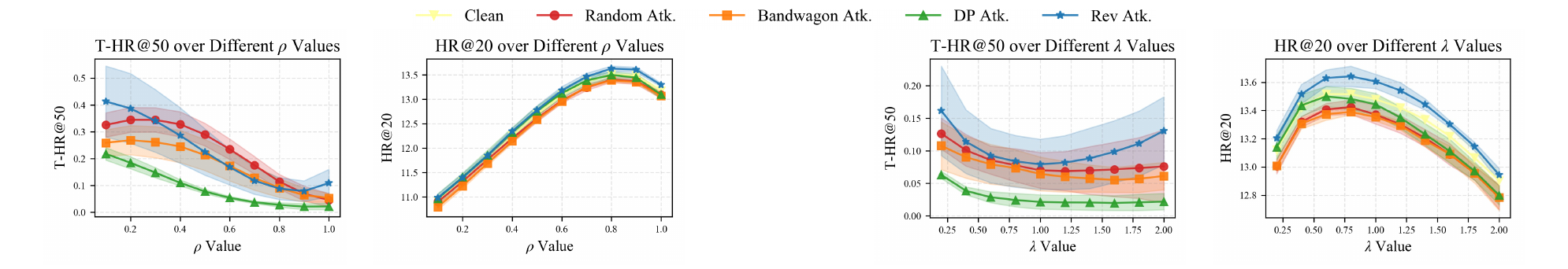}
    \caption{Left: Analysis of Hyper-Parameters $\rho$; Right: Analysis of Hyper-Parameters $\lambda$.}
    \label{fig:hyper}
\end{figure}

In this section, we answer \textbf{RQ3} by exploring the effects of the hyperparameters, magnitude $\rho$ and adversarial training weight $\lambda$, as defined in Equation~\ref{eq:pama}. The results are illustrated in Figure~\ref{fig:hyper}.

\textbf{Analysis of Hyper-Parameters $\rho$}. With $\lambda$ fixed at 1.0, we vary $\rho$ from 0.1 to 1.0 in increments of 0.1. Our findings demonstrate a significant improvement in both robustness and performance as $\rho$ increases. Notably, even when $\rho$ exceeds 0.1, there is an enhancement in recommendation performance compared to that of the backbone model, i.e., MF, with the range between 0.7 and 0.9 yielding the most significant enhancements.

\textbf{Analysis of Hyper-Parameters $\lambda$}. With $\rho$ set at 0.9, we adjust $\lambda$ from 0.2 to 2.0 in increments of 0.2. The analysis indicates that the defensive ability becomes stable once $\lambda$ surpasses 1.0 in most attacks. However, setting $\lambda$ too high gradually diminishes the recommendation performance of PamaCF. Despite this, the performance of PamaCF remains considerably improved compared to MF.

\section{Conclusion}
\label{sec:con}

In this work, we theoretically analyze why Adversarial Collaborative Filtering (ACF) enhances both the performance and robustness of Collaborative Filtering (CF) systems against poisoning attacks. Additionally, by establishing bounds for reductions in recommendation error during ACF's optimization process, we discover that applying personalized perturbation magnitudes for users based on their embedding scales can significantly improve ACF's effectiveness. Leveraging these theoretical understandings, we introduce Personalized Magnitude Adversarial Collaborative Filtering (PamaCF). Comprehensive experiments confirm that PamaCF effectively defends against various attacks and significantly enhances the quality of recommendations. 

\textbf{Limitations}. Our study identifies several limitations that require further investigation. Firstly, our theoretical analysis is based on certain assumptions, specifically with the Gaussian Recommender System. We intend to relax these assumptions in future work. Secondly, this study only examines adversarial training within CF recommendations. In future research, we plan to extend our analysis to include more recommendation scenarios, such as sequential recommendations.

\textbf{Broader Impacts}. Our work focuses on enhancing both the performance and robustness of recommender systems against poisoning attacks, thereby benefiting the overall development of recommender systems. We do not foresee any negative impacts resulting from our work.

\section*{Acknowledgements}
This work is funded by the National Key R\&D Program of China (2022YFB3103700, 2022YFB3103701), the Strategic Priority Research Program of the Chinese Academy of Sciences under Grant No. XDB0680101, and the National Natural Science Foundation of China under Grant Nos. 62102402, 62272125, U21B2046. Huawei Shen is also supported by Beijing Academy of Artificial Intelligence (BAAI).

\bibliographystyle{unsrt}
\bibliography{ref}

\clearpage
\appendix
\label{sec:app}

\section{Related Work}
\label{ap:related}
\subsection{Collaborative Filtering}
Collaborative Filtering (CF) has become a cornerstone of modern recommender systems, evidenced by its widespread application in various studies~\cite{he2020lightgcn, covington2016deep, ying2018graph}. The fundamental premise of CF is that users with similar preferences are likely to exhibit similar behaviors, which can be leveraged to predict future recommendations~\cite{koren2021advances}. A principal approach within CF is Matrix Factorization, which learns latent embeddings of users and items by decomposing the observed interaction matrix~\cite{koren2009matrix}.

With the advent of deep learning, neural CF models have emerged, designed to capture more complex patterns in user preferences. For example, CDL~\cite{wang2015collaborative} merges auxiliary item information through neural networks into CF, addressing challenges associated with data sparsity. Additionally, NCF~\cite{he2017neural} replaces the traditional dot product with a neural network, enhancing the modeling of user-item interactions. More recently, Graph Neural Networks have prompted the development of graph-based CF models, such as NGCF~\cite{wang2019neural} and LightGCN~\cite{he2020lightgcn}, which have shown remarkable efficacy in recommender systems. However, despite these technological advances, susceptibility to poisoning attacks remains a significant challenge, compromising the robustness of these systems~\cite{zhang2023robust}.

\subsection{Poisoning Attacks against Recommender Systems}
Poisoning attacks within recommender systems involve injecting fake users into the training data to manipulate the exposure of certain items. Initial research predominantly focused on rule-based heuristic attacks, where profiles for these fake users were constructed using predetermined heuristic rules~\cite{lam2004shilling, burke2005limited, mobasher2007toward, seminario2014attacking}. For example, the Random Attack~\cite{lam2004shilling} generated fake users interacting with targeted items alongside a random selection of other items. In contrast, the Bandwagon Attack~\cite{burke2005limited} generated fake user interactions to include targeted items and others selected for their high popularity.

As the technique of attacks has evolved, recent studies have shifted towards optimization-based methods for generating fake users~\cite{tang2020revisiting, huang2021data, li2022revisiting,  li2016data,  wu2021triple, chen2022knowledge, qian2023enhancing, wang2023poisoning, chen2023dark, huang2023single, wu2024accelerating}. For instance, the Rev Attack~\cite{tang2020revisiting} formalizes the attack as a bi-level optimization problem, addressed using gradient-based techniques. Similarly, the DP Attack~\cite{huang2021data} specifically targets deep learning-based recommender systems.

\subsection{Robust Recommender Systems}
Mainstream strategies for enhancing the robustness of CF systems against poisoning attacks broadly categorize into two main approaches~\cite{zhang2023robust}: (1) detecting and removing fake users~\cite{chung2013beta, zhang2014hhtsvm, yang2016rescale, zhang2020gcnbased, zhang2024lorec, liu2020recommending, chirita2005preventing}, and (2) developing robust models via adversarial training, i.e., Adversarial Collaborative Filtering (ACF)~\cite{he2018adversarial, chen2019adversarial, li2020adversarial, wu2021fight, ye2023towards, chen2023adversarial, zhang2024improving}.

Detection-based strategies focus either on pre-identifying and removing potential fake users from the dataset~\cite{chung2013beta, zhang2014hhtsvm, yang2016rescale, liu2020recommending} or on mitigating their influence during the training phase~\cite{zhang2020gcnbased, zhang2024lorec}.  These methods often rely on specific assumptions about the attacks \cite{chung2013beta, zhang2020gcnbased} or require supervised data regarding attacks~\cite{zhang2014hhtsvm, yang2016rescale, zhang2020gcnbased, zhang2024lorec}. Among these, LoRec~\cite{zhang2024lorec} utilizes large language models to enhance sequential recommendations, overcoming the limitations associated with specific knowledge in detection-based strategies.
However, its scope is limited to sequential recommender systems and may not generalize well across different CF scenarios.

Conversely, ACF methodologies, particularly those aligned with the Adversarial Personalized Ranking (APR) framework~\cite{he2018adversarial}, integrate adversarial perturbations at the parameter level (i.e., user and item embeddings) during the model training process~\cite{he2018adversarial, li2020adversarial, ye2023towards, chen2023adversarial}. This approach follows a ``min-max'' optimization paradigm, designed to minimize the error in recommendations under parameter perturbations which aim to maximize the error~\cite{zhang2024lorec}. 
Besides, numerous studies have demonstrated that ACF not only enhances the model's robustness but also improves its recommendation performance~\cite{he2018adversarial, chen2023adversarial, zhang2024improving}. Nonetheless, despite its benefits in specific contexts through empirical validation, the intrinsic mechanisms of ACF's effectiveness and its universal applicability remain areas for further theoretical exploration.

\section{Methodology}
\label{ap:method}
For clarity, we present the PamaCF version of the widely used Bayesian Personalized Ranking (BPR)~\cite{rendle2009bpr} loss function, which optimizes recommender models towards personalized ranking. 
Given the user set $\mathcal{U} = \{u\}$, the item set $\mathcal{V} = \{v\}$, and the training set $\mathcal{D} = \{(u, i, j) \mid u \in \mathcal{U} \land i \in \mathcal{V}_u \land j \in \mathcal{V} \setminus \mathcal{V}_u\}$, where $\mathcal{V}_u$ denotes the set of items with which user $u$ has interacted. The objective function (to be minimized) of BPR is formally given by:
\begin{equation}
    \label{eq:bpr}
    \mathcal{L}_{\mathrm{BPR}}(\Theta = [\bm{U}, \bm{V}]) = - \sum_{(u,i,j) \in \mathcal{D}} \ln{\sigma\left(\langle \bm{U}_u,  \bm{V}_i\rangle - \langle \bm{U}_u,  \bm{V}_j\rangle\right)},
\end{equation}
where $\bm{U}$ and $\bm{V}$ represent the learned user and item embeddings, respectively.

The PamaCF version of the BPR loss function is defined as:
\begin{equation*}
    \begin{aligned}
        \mathcal{L}_{\mathrm{PamaCF}}(\Theta) =& \mathcal{L}_{\mathrm{BPR}}(\Theta) + \lambda \mathcal{L}_{\mathrm{BPR}}(\Theta + \Delta^{\mathrm{PamaCF}}), \\
        \mathrm{where} \quad \Delta^{\mathrm{PamaCF}} =& \arg \max_{\Delta,\, \Vert \Delta_{u / i / j} \Vert \leq \rho \cdot c(\bm{u}, t), (u,i,j) \in \mathcal{D}} \mathcal{L}_{\mathrm{BPR}}(\Theta+\Delta),
    \end{aligned}  
\end{equation*}
where $\lambda$ is the weight of adversarial training, $\rho$ represents the uniform perturbation magnitude for all users, and
\begin{equation*}
    c(\bm{u}, t) = \mathrm{sig}\left( \frac{\Vert\bm{u}_{(t)}\Vert - \overline{\Vert\bm{u}_{(t)}\Vert}}{\overline{\Vert\bm{u}_{(t)}\Vert}} \right).
\end{equation*}

The specific handling of a pair $(u, i, j) \in \mathcal{D}$ is expressed by:
\begin{equation}
    \label{eq:sg_pama}
    \begin{aligned}
        \mathcal{L}_{\mathrm{PamaCF}}&((u, i, j)|\Theta) =  - \ln{\sigma\left(\langle \bm{U}_u,  \bm{V}_i\rangle - \langle \bm{U}_u,  \bm{V}_j\rangle\right)} \\ 
       - \lambda & \ln{\sigma\left(\langle \bm{U}_u + \Delta^{\mathrm{PamaCF}}_{u},  \bm{V}_i + \Delta^{\mathrm{PamaCF}}_{i}\rangle - \langle \bm{U}_u + \Delta^{\mathrm{PamaCF}}_{u},  \bm{V}_j + \Delta^{\mathrm{PamaCF}}_{j}\rangle\right)},
    \end{aligned} 
\end{equation}
where 
\begin{equation}
    \label{eq:delta_detail}
    \begin{aligned}
        \Delta^{\mathrm{PamaCF}}_{u} =& \rho c(\bm{u}, t) \frac{\Gamma_{u}}{\|\Gamma_{u}\|}, \quad \text{where} \quad \Gamma_{u} = \frac{\partial \mathcal{L}_{\mathrm{BPR}}((u, i, j)|\Theta + \Delta^{\mathrm{PamaCF}})}{\partial \Delta_{u}},\\
        \Delta^{\mathrm{PamaCF}}_{i} =& \rho c(\bm{u}, t) \frac{\Gamma_{i}}{\|\Gamma_{i}\|}, \quad \text{where} \quad \Gamma_{i} = \frac{\partial \mathcal{L}_{\mathrm{BPR}}((u, i, j)|\Theta + \Delta^{\mathrm{PamaCF}})}{\partial \Delta_{i}},\\
        \Delta^{\mathrm{PamaCF}}_{j} =& \rho c(\bm{u}, t) \frac{\Gamma_{i}}{\|\Gamma_{j}\|}, \quad \text{where} \quad \Gamma_{j} = \frac{\partial \mathcal{L}_{\mathrm{BPR}}((u, i, j)|\Theta + \Delta^{\mathrm{PamaCF}})}{\partial \Delta_{j}}.
    \end{aligned}
\end{equation}

The procedure of training with PamaCF is illustrated in Algorithm~\ref{al:pama}.

\renewcommand{\algorithmicrequire}{ \textbf{Input:}}     
\renewcommand{\algorithmicensure}{ \textbf{Output:}}    
\begin{algorithm}[t]
    \caption{The Training Procedure of PamaCF-BPR} 
    \label{al:pama}
    \begin{algorithmic}[1]
    \renewcommand{\baselinestretch}{1.5}
        \Require{Training set $\mathcal{D}$, uniform perturbation magnitude $\rho$, adversarial training weight $\lambda$, pre-training epochs $T_{\mathrm{pre}}$, batch size $\mathbb{B}$}
        \Ensure{Model parameters $\Theta = [\bm{U}, \bm{V}]$.}
        \State Pre-train $\Theta = [\bm{U}, \bm{V}]$ for $T_{\mathrm{pre}}$ epochs using Equation~\ref{eq:bpr}.
        \While{stopping criteria not met}
            \State Draw batch of $\mathbb{B}$ pairs $(u, i, j)$ from $\mathcal{D}$.
            \For{each $(u, i, j)$ in the batch}
                \State Calculate $\Delta^{\mathrm{PamaCF}}_{u}$, $\Delta^{\mathrm{PamaCF}}_{i}$, and $\Delta^{\mathrm{PamaCF}}_{j}$ using Equation~\ref{eq:delta_detail}.
                \State Compute $\mathcal{L}_{\mathrm{PamaCF}}((u, i, j) | \Theta)$ using Equation~\ref{eq:sg_pama}.
            \EndFor
            \State Update $\Theta$ using the aggregated gradients from $\mathcal{L}_{\mathrm{PamaCF}}(\Theta)$ in the batch.
        \EndWhile
        \State \Return $\Theta = [\bm{U}, \bm{V}]$
    \end{algorithmic}
\end{algorithm}

\section{Experiments}
\label{ap:exp}
\subsection{Supplements to Experimental Settings}
\label{ap:exp_set}
\textbf{Datasets}. We employ three common benchmarks: the \textit{Gowalla} check-in dataset~\cite{liang2016modeling}, the \textit{Yelp2018} business dataset, and the \textit{MIND} news recommendation dataset~\cite{wu2020mind}. The Gowalla and Yelp2018 datasets include interactions from all users. For the MIND dataset, we sample a subset of users following~\cite{zhang2024lorec}. Following~\cite{he2020lightgcn, wang2019neural}, users and items with fewer than 10 interactions are excluded from our analysis. We allocate 80\% of each user's historical interactions to the training set and the remainder for testing. Additionally, 10\% of the interactions from the training set are randomly selected to form a validation set for hyperparameter tuning. Detailed statistics of the datasets are summarized in Table~\ref{tab:datasets}.

\textbf{Attack Methods}. We explore both heuristic (Random Attack~\cite{lam2004shilling}, Bandwagon Attack~\cite{burke2005limited}) and optimization-based (Rev Attack~\cite{tang2020revisiting}, DP Attack~\cite{huang2021data}) attack methods within a black-box context, where the attacker does not have access to the internal architecture or parameters of the target model.
\begin{itemize}[leftmargin=*]
    \item \textbf{Random Attack} (Heuristic Method)~\cite{lam2004shilling}: This method entails fake users including interactions with both the targeted items and a set of randomly chosen items.
    \item \textbf{Bandwagon Attack} (Heuristic Method)~\cite{burke2005limited}: Fake users' interactions encompass the targeted items and those selected for their high popularity.
    \item \textbf{DP Attack} (Optimization-based Method)~\cite{huang2021data}: This approach is specifically designed to compromise deep learning-based recommender systems.
    \item \textbf{Rev Attack} (Optimization-based Method)~\cite{tang2020revisiting}: The attack is conceptualized as a bi-level optimization problem, addressed through gradient-based methods.
\end{itemize}

\textbf{Defense Baselines}. We incorporate a variety of defense methods, including detection-based approaches (GraphRfi~\cite{zhang2020gcnbased} and LLM4Dec~\cite{zhang2024lorec}), adversarial collaborative filtering methods (APR~\cite{he2018adversarial} and SharpCF~\cite{chen2023adversarial}), and a denoise-based strategy (StDenoise~\cite{tian2022learning, ye2023towards}). In our study, we employ three common backbone recommendation models, MF~\cite{koren2009matrix}, LightGCN~\cite{he2020lightgcn}, and NeurMF~\cite{he2017neural}.
\begin{itemize}[leftmargin=*]
    \item \textbf{GraphRfi}~\cite{zhang2020gcnbased}: Employs a combination of Graph Convolutional Networks and Neural Random Forests for identifying fake users.
    \item \textbf{LLM4Dec}~\cite{zhang2024lorec}: Utilizes an LLM-based framework for fake users detection.
    \item \textbf{APR}~\cite{he2018adversarial}: Generates parameter perturbations and integrates these perturbations into training.
    \item \textbf{SharpCF}~\cite{chen2023adversarial}: Adopts a sharpness-aware minimization approach to refine the adversarial training process proposed by APR.
    \item \textbf{StDenoise}~\cite{tian2022learning, ye2023towards}: Applies a structural denoising technique that leverages the similarity between $\bm{U}_u$ and $\bm{V}_i$ for each $(u, i)$ pair, aiding in the removal of noise, as described in \cite{tian2022learning, ye2023towards}.
\end{itemize}
Note that: With the need of item-side information, LLM4Dec is exclusively evaluated on the MIND dataset. Moreover, we observe that SharpCF, initially proposed for the MF model, exhibits unstable training performance when applied to the LightGCN model or the MIND dataset. Consequently, we present SharpCF results solely for the MF model on the Gowalla and Yelp2018 datasets.

\begin{table}[t]
  \centering
    \caption{Dataset statistics}
    \resizebox{0.67\textwidth}{!}{
        \begin{threeparttable}

\begin{tabular}{lrrrrr}
    \toprule
    \textbf{ DATASET } & \textbf{ \#Users } & \textbf{ \#Items } & \textbf{\#Ratings}  & \textbf{Avg.Inter.} & \textbf{Sparsity}\\
    \midrule
     Gowalla  & 29,858 & 40,981& 1,027,370 & 34.4 & 99.92\% \\
     Yelp2018  & 31,668 & 38,048 & 1,561,406 & 49.3 & 99.88\% \\ 
     MIND  & 141,920 & 36,214 & 20,693,122 & 145.8 & 99.60\% \\ 
    \bottomrule
    \end{tabular}
        \end{threeparttable}
    }
  \label{tab:datasets}%
\end{table}%

\textbf{Implementation Details}. In our study, we employ three common backbone recommendation models, Matrix Factorization (MF)~\cite{koren2009matrix}, LightGCN~\cite{he2020lightgcn}, and NeurMF~\cite{he2017neural}. To quantify the success ratio of attacks, we select $k=50$ as the evaluation metric following~\cite{huang2021data, tang2020revisiting, wu2021fight}, while for assessing recommendation performance, we utilize $k=10, 20$ following~\cite{he2020lightgcn, wang2019neural}. For each attack setting, we conduct experiments five times, taking the average value as the result and the standard deviation as the error bar. The configuration of both the defense methods and the recommendation models involves selecting a learning rate from \{0.1, 0.01, $\dots$, $1 \times 10^{-5}$\}, and a weight decay from \{0, 0.1, $\dots$, $1 \times 10^{-5}$\}. The implementation of GraphRfi follows its paper. For the detection-based methods, we employ the Random Attack to generate supervised attack data. The magnitude parameter of adversarial perturbations in both APR and PamaCF is determined from a range of \{0.1, 0.2, $\dots$, 1.0\}. In terms of attack methods, we set the attack budget to $1\%$ and target five specific items. The hyperparameters align with those detailed in their original publications.

\textbf{Compute Resources}. The experiments are conducted using two primary GPU: the RTX 4090 with 24GB of VRAM and the A800 with 80GB of VRAM. For most baseline defense methods applied across all datasets, a single RTX 4090 GPU suffices, requiring several hours per experiment. However, the LLM4Dec method~\cite{zhang2024lorec} demands an A800 GPU due to its higher resource requirements for processing Large Language Models.
In terms of attack generation, heuristic poisoning attacks such as the Random Attack~\cite{lam2004shilling} and Bandwagon Attack~\cite{burke2005limited} are generated within seconds and do not require specific GPU resources. Conversely, the time required to generate optimization-based poisoning attacks, such as the DP Attack~\cite{huang2021data} and Rev Attack~\cite{tang2020revisiting}, depends on the dataset. For the Gowalla and Yelp datasets, these attacks take hours to execute on an A800 GPU. The DP Attack on the MIND dataset extends to several days, also utilizing an A800 GPU. However, the Rev Attack cannot be completed on a single A800 GPU due to its even greater computational demands.

\subsection{Supplements to Performance Comparison}
\label{ap:pop}
We assess PamaCF's defense capabilities against attacks targeting popular items on Gowalla. According to Figure~\ref{fig:pop_attack}, PamaCF exhibits strong defensibility, outperforming the best baseline even when attacks specifically promote popular items.
\begin{figure}[t]
    \centering
    \includegraphics[width=0.7\textwidth]{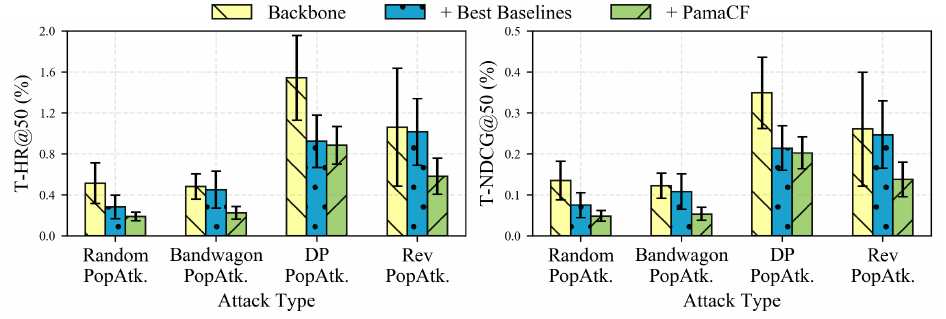}
    \caption{Robustness against popular items promotion.}
    \label{fig:pop_attack}
\end{figure}

We also evaluate PamaCF's defense capabilities and recommendation performance when applied to LightGCN~\cite{he2020lightgcn} and NeurMF~\cite{he2017neural}, as shown in Table~\ref{tab:GCN_attack_per} and Table~\ref{tab:gcn_performance}, which produces consistent results with MF~\cite{koren2009matrix}.
\begin{table*}[t]
    \centering
    \caption{Robustness against target items promotion on Gowalla}
    \resizebox{\textwidth}{!}{
        \begin{threeparttable}

\begin{tabular}{clcccccccc}
    \toprule
    \multicolumn{1}{c}{\multirow{2}{*}{\textbf{Dataset}}}& \multicolumn{1}{c}{\multirow{2}{*}{\textbf{Model}}} & \multicolumn{2}{c}{\textbf{Random Attack}(\%)} & \multicolumn{2}{c}{\textbf{Bandwagon Attack}(\%)} & \multicolumn{2}{c}{\textbf{DP Attack}(\%)} & \multicolumn{2}{c}{\textbf{Rev Attack}(\%)} \\ 
    \cmidrule(lr){3-4} \cmidrule(lr){5-6} \cmidrule(lr){7-8} \cmidrule(lr){9-10}
    & & \textbf{T-HR@50}\tnote{1} & \textbf{T-NDCG@50} & \textbf{T-HR@50} & \textbf{T-NDCG@50} & \textbf{T-HR@50} & \textbf{T-NDCG@50} & \textbf{T-HR@50} & \textbf{T-NDCG@50} \\
    \midrule
    \multirow{14}{1.2cm}{\centering \textbf{Gowalla}}
    &\textbf{LightGCN}       &0.234 $\pm$ 0.116            &0.056 $\pm$ 0.031            &0.639 $\pm$ 0.090            &0.153 $\pm$ 0.024            &0.231 $\pm$ 0.048            &0.048 $\pm$ 0.010            &0.718 $\pm$ 0.134            &0.149 $\pm$ 0.026            \\
    &~~+\textbf{StDenoise}   &0.118 $\pm$ 0.068            &0.029 $\pm$ 0.019            &0.334 $\pm$ 0.092            &\underline{0.079 $\pm$ 0.020}&0.585 $\pm$ 0.092            &0.120 $\pm$ 0.019            &1.304 $\pm$ 0.184            &0.259 $\pm$ 0.037            \\
    &~~+\textbf{GraphRfi}    &0.099 $\pm$ 0.023            &0.023 $\pm$ 0.006            &0.710 $\pm$ 0.250            &0.161 $\pm$ 0.052            &0.228 $\pm$ 0.048            &0.046 $\pm$ 0.010            &\underline{0.564 $\pm$ 0.067}&\underline{0.115 $\pm$ 0.013}\\
    &~~+\textbf{APR}         &\underline{0.089 $\pm$ 0.053}&\underline{0.021 $\pm$ 0.015}&\underline{0.332 $\pm$ 0.050}&0.079 $\pm$ 0.012            &\underline{0.190 $\pm$ 0.037}&\underline{0.039 $\pm$ 0.008}&0.655 $\pm$ 0.141            &0.132 $\pm$ 0.027            \\
    \cmidrule{3-10}
    &~~+\textbf{PamaCF}         &\textbf{0.053 $\pm$ 0.041}  &\textbf{0.013 $\pm$ 0.011}  &\textbf{0.194 $\pm$ 0.037}  &\textbf{0.046 $\pm$ 0.009}  &\textbf{0.116 $\pm$ 0.030}  &\textbf{0.023 $\pm$ 0.006}  &\textbf{0.336 $\pm$ 0.061}  &\textbf{0.070 $\pm$ 0.012}  \\
    & \multicolumn{1}{c}{Gain\tnote{2}}& +40.48\% $\uparrow$& +40.46\% $\uparrow$& +41.64\% $\uparrow$& +41.62\% $\uparrow$& +38.92\% $\uparrow$& +40.13\% $\uparrow$& +40.40\% $\uparrow$& +39.15\% $\uparrow$\\
    \cmidrule{2-10}
    &\textbf{NeurMF}          &0.404 $\pm$ 0.196            &0.089 $\pm$ 0.043            &0.887 $\pm$ 0.260            &0.189 $\pm$ 0.054            &0.047 $\pm$ 0.017            &0.010 $\pm$ 0.004            &0.210 $\pm$ 0.077            &0.044 $\pm$ 0.017            \\
    &~~+\textbf{StDenoise}   &0.468 $\pm$ 0.296            &0.103 $\pm$ 0.064            &0.898 $\pm$ 0.356            &0.192 $\pm$ 0.077            &0.060 $\pm$ 0.024            &0.013 $\pm$ 0.006            &\underline{0.194 $\pm$ 0.044}&\underline{0.041 $\pm$ 0.010}\\
    &~~+\textbf{GraphRfi}    &0.241 $\pm$ 0.049            &0.052 $\pm$ 0.010            &0.485 $\pm$ 0.081            &\underline{0.103 $\pm$ 0.017}&\underline{0.041 $\pm$ 0.013}&\underline{0.009 $\pm$ 0.003}&0.248 $\pm$ 0.061            &0.053 $\pm$ 0.013            \\
    &~~+\textbf{APR}         &\underline{0.094 $\pm$ 0.028}&\underline{0.021 $\pm$ 0.006}&\underline{0.477 $\pm$ 0.217}&0.106 $\pm$ 0.048            &0.046 $\pm$ 0.022            &0.010 $\pm$ 0.005            &0.426 $\pm$ 0.064            &0.092 $\pm$ 0.039            \\
    \cmidrule{3-10}
    &~~+\textbf{PamaCF}         &\textbf{0.074 $\pm$ 0.022}  &\textbf{0.017 $\pm$ 0.006}  &\textbf{0.168 $\pm$ 0.096}  &\textbf{0.038 $\pm$ 0.021}  &\textbf{0.032 $\pm$ 0.021}  &\textbf{0.007 $\pm$ 0.005}  &\textbf{0.186 $\pm$ 0.032}  &\textbf{0.041 $\pm$ 0.011}  \\
    & \multicolumn{1}{c}{Gain\tnote{1}}& +20.98\% $\uparrow$& +17.65\% $\uparrow$& +64.81\% $\uparrow$& +62.57\% $\uparrow$& +22.99\% $\uparrow$& +19.87\% $\uparrow$& +3.99\% $\uparrow$& +0.11\% $\uparrow$\\
    \bottomrule
\end{tabular}
\begin{tablenotes}
    \item[1] Target Item Hit Ratio (Equation~\ref{eq:tar}); T-HR@50 and T-NDCG@50 of all target items on clean datasets are 0.000.
    \item[2] The relative percentage increase of PamaCF's metrics to the best value of other baselines' metrics, i.e., $\left(\min\left(\mathrm{T}\text{-}\mathrm{HR}_\mathrm{Beslines}\right) - \mathrm{T}\text{-}\mathrm{HR}_\mathrm{VAT} \right)/ \min(\mathrm{T}\text{-}\mathrm{HR}_\mathrm{Beslines})$. Notably, only \textbf{three decimal places} are presented due to space limitations, though the actual ranking and calculations utilize the \textbf{full precision} of the data.
    \item[3] The Rev attack method could not be executed on the dataset due to memory constraints, resulting in an out-of-memory error.
\end{tablenotes}
   
        \end{threeparttable}
    }
\label{tab:GCN_attack_per}%
\end{table*}

\begin{table*}[t]
    \centering
    \caption{Recommendation Performance on Gowalla}
    \resizebox{\textwidth}{!}{
        \begin{threeparttable}

\begin{tabular}{lcccccccccc}
    \toprule
    \multicolumn{1}{c}{\multirow{1}{*}{\textbf{Model}}} & \multicolumn{2}{c}{\textbf{Clean} (\%)} & \multicolumn{2}{c}{\textbf{Random Attack} (\%)} & \multicolumn{2}{c}{\textbf{Bandwagon Attack} (\%)} & \multicolumn{2}{c}{\textbf{DP Attack} (\%)} & \multicolumn{2}{c}{\textbf{Rev Attack} (\%)}  \\ 
    \cmidrule(lr){2-3} \cmidrule(lr){4-5} \cmidrule(lr){6-7} \cmidrule(lr){8-9}  \cmidrule(lr){10-11}
    \multicolumn{1}{c}{\multirow{1}{*}{(Dataset)}} & \textbf{Recall@20} & \textbf{NDCG@20} & \textbf{Recall@20} & \textbf{NDCG@20} & \textbf{Recall@20} & \textbf{NDCG@20} & \textbf{Recall@20} & \textbf{NDCG@20} & \textbf{Recall@20} & \textbf{NDCG@20}  \\
    \midrule
    \textbf{LightGCN}        &12.54 $\pm$ 0.03            &8.27 $\pm$ 0.02            &12.46 $\pm$ 0.05            &8.26 $\pm$ 0.03            &12.50 $\pm$ 0.05            &8.28 $\pm$ 0.02            &12.83 $\pm$ 0.02            &\underline{10.10 $\pm$ 0.02}&\underline{12.99 $\pm$ 0.04}&\underline{10.25 $\pm$ 0.01}\\
    ~~+\textbf{StDenoise}    &12.52 $\pm$ 0.03            &9.92 $\pm$ 0.02            &12.40 $\pm$ 0.04            &9.81 $\pm$ 0.04            &12.38 $\pm$ 0.04            &9.77 $\pm$ 0.03            &12.46 $\pm$ 0.03            &9.84 $\pm$ 0.02            &12.56 $\pm$ 0.04            &9.93 $\pm$ 0.02            \\
    ~~+\textbf{GraphRfi}     &\underline{12.79 $\pm$ 0.04}&\underline{10.04 $\pm$ 0.00}&12.65 $\pm$ 0.04            &\underline{9.91 $\pm$ 0.02}&\underline{12.65 $\pm$ 0.04}&\underline{9.91 $\pm$ 0.02}&12.71 $\pm$ 0.04            &9.95 $\pm$ 0.01            &12.86 $\pm$ 0.03            &10.08 $\pm$ 0.02            \\
    ~~+\textbf{APR}          &12.71 $\pm$ 0.03            &9.50 $\pm$ 0.03            &\underline{12.84 $\pm$ 0.02}&9.82 $\pm$ 0.03            &12.18 $\pm$ 0.01            &9.31 $\pm$ 0.03            &\underline{12.89 $\pm$ 0.03}&9.87 $\pm$ 0.03            &12.78 $\pm$ 0.05            &9.53 $\pm$ 0.03            \\
    \cmidrule{2-11}
    ~~+\textbf{PamaCF}       &\textbf{13.18 $\pm$ 0.02}  &\textbf{10.28 $\pm$ 0.02}  &\textbf{13.02 $\pm$ 0.03}  &\textbf{10.15 $\pm$ 0.02}  &\textbf{13.00 $\pm$ 0.02}  &\textbf{10.12 $\pm$ 0.02}  &\textbf{13.09 $\pm$ 0.02}  &\textbf{10.20 $\pm$ 0.02}  &\textbf{13.24 $\pm$ 0.04}  &\textbf{10.34 $\pm$ 0.02}  \\
    \multicolumn{1}{c}{Gain}& +3.09\% $\uparrow$& +2.38\% $\uparrow$& +1.45\% $\uparrow$& +2.42\% $\uparrow$& +2.78\% $\uparrow$& +2.10\% $\uparrow$& +1.57\% $\uparrow$& +0.99\% $\uparrow$& +1.93\% $\uparrow$& +0.81\% $\uparrow$\\
    \multicolumn{1}{c}{Gain w.r.t. MF}& +5.15\% $\uparrow$& +24.23\% $\uparrow$& +4.56\% $\uparrow$& +22.91\% $\uparrow$& +4.06\% $\uparrow$& +22.29\% $\uparrow$& +1.99\% $\uparrow$& +0.99\% $\uparrow$& +1.93\% $\uparrow$& +0.81\% $\uparrow$\\
    \midrule
    \textbf{NeurMF}           &9.93 $\pm$ 0.28            &6.74 $\pm$ 0.30            &9.65 $\pm$ 0.16            &6.59 $\pm$ 0.16            &9.76 $\pm$ 0.19            &6.77 $\pm$ 0.27            &9.68 $\pm$ 0.57            &6.52 $\pm$ 0.58            &9.58 $\pm$ 0.31            &6.50 $\pm$ 0.36            \\
    ~~+\textbf{StDenoise}    &\underline{10.21 $\pm$ 0.31}&6.92 $\pm$ 0.33            &9.87 $\pm$ 0.23            &6.71 $\pm$ 0.24            &\underline{10.12 $\pm$ 0.22}&\underline{6.98 $\pm$ 0.21}&9.82 $\pm$ 0.53            &6.53 $\pm$ 0.55            &9.75 $\pm$ 0.50            &6.56 $\pm$ 0.56            \\
    ~~+\textbf{GraphRfi}     &9.82 $\pm$ 0.32            &6.68 $\pm$ 0.41            &9.84 $\pm$ 0.35            &6.78 $\pm$ 0.44            &9.65 $\pm$ 0.31            &6.50 $\pm$ 0.36            &\underline{9.92 $\pm$ 0.18}&\underline{6.78 $\pm$ 0.22}&\underline{9.77 $\pm$ 0.39}&6.63 $\pm$ 0.48            \\
    ~~+\textbf{APR}          &10.02 $\pm$ 0.24            &\underline{6.92 $\pm$ 0.24}&\underline{9.99 $\pm$ 0.22}&\underline{6.90 $\pm$ 0.23}&9.90 $\pm$ 0.29            &6.91 $\pm$ 0.35            &9.86 $\pm$ 0.34            &6.74 $\pm$ 0.42            &9.74 $\pm$ 0.35            &\underline{6.67 $\pm$ 0.41}\\
    \cmidrule{2-11}
    ~~+\textbf{PamaCF}          &\textbf{10.26 $\pm$ 0.17}  &\textbf{7.06 $\pm$ 0.18}  &\textbf{10.27 $\pm$ 0.21}  &\textbf{7.13 $\pm$ 0.27}  &\textbf{10.28 $\pm$ 0.27}  &\textbf{7.23 $\pm$ 0.35}  &\textbf{10.14 $\pm$ 0.40}  &\textbf{6.87 $\pm$ 0.44}  &\textbf{10.02 $\pm$ 0.36}  &\textbf{6.85 $\pm$ 0.38}  \\
    \multicolumn{1}{c}{Gain}& +0.53\% $\uparrow$& +1.97\% $\uparrow$& +2.85\% $\uparrow$& +3.32\% $\uparrow$& +1.56\% $\uparrow$& +3.55\% $\uparrow$& +2.18\% $\uparrow$& +1.24\% $\uparrow$& +2.54\% $\uparrow$& +2.67\% $\uparrow$\\
    \multicolumn{1}{c}{Gain w.r.t. MF}& +3.41\% $\uparrow$& +4.74\% $\uparrow$& +6.46\% $\uparrow$& +8.26\% $\uparrow$& +5.32\% $\uparrow$& +6.68\% $\uparrow$& +4.71\% $\uparrow$& +5.32\% $\uparrow$& +4.65\% $\uparrow$& +5.33\% $\uparrow$\\
    \bottomrule

\end{tabular}
\begin{tablenotes}
    \item[1] The relative percentage increase of PamaCF's metrics to the best value of other baselines' metrics. Notably, only \textbf{three decimal places} are presented due to space limitations, though the actual ranking and calculations utilize the \textbf{full precision} of the data.
\end{tablenotes}
        \end{threeparttable}
    }
\label{tab:gcn_performance}%
\end{table*}

Additionally, we report PamaCF's recommendation performance on MF for the Gowalla dataset, specifically for $k=10$, using $\mathrm{Recall}@10$ and $\mathrm{NDCG}@10$. PamaCF demonstrates significantly greater improvement at $k=10$, achieving a 29.59\% increase in Recall and a 56.41\% increase in NDCG, relative to the baseline model, as shown in Table~\ref{tab:10_attack_rc}.

\begin{table*}[t]
    \centering
    \caption{Recommendation Performance@10 on Gowalla}
    \resizebox{\textwidth}{!}{
        \begin{threeparttable}

\begin{tabular}{lcccccccccc}
    \toprule
    \multicolumn{1}{c}{\multirow{1}{*}{\textbf{Model}}} & \multicolumn{2}{c}{\textbf{Clean} (\%)} & \multicolumn{2}{c}{\textbf{Random Attack} (\%)} & \multicolumn{2}{c}{\textbf{Bandwagon Attack} (\%)} & \multicolumn{2}{c}{\textbf{DP Attack} (\%)} & \multicolumn{2}{c}{\textbf{Rev Attack} (\%)}  \\ 
    \cmidrule(lr){2-3} \cmidrule(lr){4-5} \cmidrule(lr){6-7} \cmidrule(lr){8-9}  \cmidrule(lr){10-11}
    \multicolumn{1}{c}{\multirow{1}{*}{(Dataset)}} & \textbf{Recall@10} & \textbf{NDCG@10} & \textbf{Recall@10} & \textbf{NDCG@10} & \textbf{Recall@10} & \textbf{NDCG@10} & \textbf{Recall@10} & \textbf{NDCG@10} & \textbf{Recall@10} & \textbf{NDCG@10}  \\
    \midrule
    \textbf{MF}              &7.49 $\pm$ 0.08            &5.85 $\pm$ 0.03            &7.47 $\pm$ 0.03            &5.89 $\pm$ 0.05            &7.41 $\pm$ 0.06            &5.81 $\pm$ 0.04            &7.24 $\pm$ 0.11            &7.23 $\pm$ 0.08            &7.24 $\pm$ 0.04            &7.26 $\pm$ 0.02            \\
    ~~+\textbf{StDenoise}    &7.03 $\pm$ 0.08            &7.17 $\pm$ 0.10            &6.99 $\pm$ 0.08            &7.16 $\pm$ 0.06            &6.95 $\pm$ 0.09            &7.13 $\pm$ 0.03            &7.09 $\pm$ 0.11            &7.22 $\pm$ 0.09            &7.14 $\pm$ 0.04            &7.29 $\pm$ 0.03            \\
    ~~+\textbf{GraphRfi}     &6.98 $\pm$ 0.03            &7.03 $\pm$ 0.06            &6.92 $\pm$ 0.06            &6.96 $\pm$ 0.06            &6.86 $\pm$ 0.04            &6.91 $\pm$ 0.06            &6.97 $\pm$ 0.07            &7.02 $\pm$ 0.09            &7.06 $\pm$ 0.09            &7.08 $\pm$ 0.06            \\
    ~~+\textbf{APR}          &\underline{9.29 $\pm$ 0.06}&\underline{9.69 $\pm$ 0.06}&\underline{9.21 $\pm$ 0.04}&\underline{9.58 $\pm$ 0.01}&\underline{9.20 $\pm$ 0.05}&\underline{9.56 $\pm$ 0.03}&\underline{9.24 $\pm$ 0.04}&\underline{9.64 $\pm$ 0.05}&\underline{9.41 $\pm$ 0.07}&\underline{9.78 $\pm$ 0.09}\\
    ~~+\textbf{SharpCF}      &8.81 $\pm$ 0.09            &8.83 $\pm$ 0.10            &8.80 $\pm$ 0.09            &8.84 $\pm$ 0.07            &8.71 $\pm$ 0.08            &8.72 $\pm$ 0.06            &8.93 $\pm$ 0.11            &8.91 $\pm$ 0.09            &8.92 $\pm$ 0.05            &8.93 $\pm$ 0.03            \\
    \cmidrule{2-11}
    ~~+\textbf{PamaCF}          &\textbf{9.56 $\pm$ 0.02}  &\textbf{9.94 $\pm$ 0.05}  &\textbf{9.46 $\pm$ 0.03}  &\textbf{9.84 $\pm$ 0.01}  &\textbf{9.49 $\pm$ 0.02}  &\textbf{9.84 $\pm$ 0.01}  &\textbf{9.54 $\pm$ 0.05}  &\textbf{9.93 $\pm$ 0.04}  &\textbf{9.68 $\pm$ 0.05}  &\textbf{10.05 $\pm$ 0.10}  \\
    \multicolumn{1}{c}{Gain}& +2.91\% $\uparrow$& +2.56\% $\uparrow$& +2.69\% $\uparrow$& +2.74\% $\uparrow$& +3.13\% $\uparrow$& +2.86\% $\uparrow$& +3.25\% $\uparrow$& +2.97\% $\uparrow$& +2.87\% $\uparrow$& +2.78\% $\uparrow$\\
    \multicolumn{1}{c}{Gain w.r.t. MF}& +27.62\% $\uparrow$& +70.00\% $\uparrow$& +26.65\% $\uparrow$& +66.97\% $\uparrow$& +28.10\% $\uparrow$& +69.27\% $\uparrow$& +31.78\% $\uparrow$& +37.34\% $\uparrow$& +33.80\% $\uparrow$& +38.48\% $\uparrow$\\
    \bottomrule

\end{tabular}
\begin{tablenotes}
    \item[1] The relative percentage increase of PamaCF's metrics to the best value of other baselines' metrics. Notably, only \textbf{three decimal places} are presented due to space limitations, though the actual ranking and calculations utilize the \textbf{full precision} of the data.
\end{tablenotes}
        \end{threeparttable}
    }
\label{tab:10_attack_rc}%
\end{table*}

\section{Proofs}
\label{ap:proof}
\subsection{Proofs for Section~\ref{sec:res_er}}
\label{ap:proof_3_2}
\subsubsection{Proof of Theorem~\ref{the:train}}
\label{prf:train}

To investigate the recommendation error of each user during the training period, we first analyze how the item embeddings change. We discover a transformation function that accurately measures the change in the item embedding from its initial state to its state after a certain number of training epochs. This insight is formally expressed in the following proposition.

\begin{proposition}
    \label{pro:M_t}
    Consider a Gaussian Recommender System $f_{(t)}$, undergoing training for $t$ epochs using the standard loss function specified in Equation~\ref{eq:loss}. Given a learning rate $\eta$, there exists a function $M(t, \eta): \mathbb{N}^{+} \times \mathbb{R}^{+} \rightarrow \mathbb{R}^{+}$ that quantify the transformation of the item embedding due to training. Specifically, we have:
    \begin{equation*}
        \bm{v}_{(t)} = M(t, \eta) \bm{v}_{(0)},
    \end{equation*}
    where $\bm{v}_{(0)}$ denotes the initial item embedding and $\bm{v}_{(t)}$ represents the item embedding after $t$ epochs of training.
\end{proposition}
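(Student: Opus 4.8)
The plan is to track how gradient descent on the standard loss (Equation~\ref{eq:loss}) moves the item embedding and to show that $\bm{v}_{(t)}$ never leaves the one-dimensional subspace spanned by $\bm{v}_{(0)}$, so that the transformation is necessarily a scalar. First I would write down the per-epoch updates. Since $\mathcal{L}(\Theta_{(t)}) = -\sum_i r_i \langle \bm{u}_{i,(t)}, \bm{v}_{(t)}\rangle$, the gradients are $\nabla_{\bm{v}}\mathcal{L} = -\sum_i r_i \bm{u}_{i,(t)}$ and $\nabla_{\bm{u}_j}\mathcal{L} = -r_j \bm{v}_{(t)}$, so one step of gradient descent with rate $\eta$ gives
\begin{equation*}
\bm{v}_{(t+1)} = \bm{v}_{(t)} + \eta \sum_{i=1}^n r_i \bm{u}_{i,(t)}, \qquad \bm{u}_{j,(t+1)} = \bm{u}_{j,(t)} + \eta\, r_j \bm{v}_{(t)}.
\end{equation*}

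The crux of the argument is to introduce the auxiliary vector $\bm{s}_{(t)} := \sum_{i=1}^n r_i \bm{u}_{i,(t)}$ and to observe that the pair $(\bm{v}_{(t)}, \bm{s}_{(t)})$ closes under the dynamics, even though the individual $\bm{u}_{i,(t)}$ drift in arbitrary directions. By the item initialization, $\bm{s}_{(0)} = n\,\bm{v}_{(0)}$, and using $r_i^2 = 1$,
\begin{equation*}
\bm{s}_{(t+1)} = \sum_i r_i \bm{u}_{i,(t+1)} = \bm{s}_{(t)} + \eta \sum_i r_i^2\, \bm{v}_{(t)} = \bm{s}_{(t)} + \eta n\, \bm{v}_{(t)},
\end{equation*}
while $\bm{v}_{(t+1)} = \bm{v}_{(t)} + \eta \bm{s}_{(t)}$. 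Thus $(\bm{v}_{(t)}, \bm{s}_{(t)})$ evolves by the fixed linear map $\left(\begin{smallmatrix} 1 & \eta \\ \eta n & 1 \end{smallmatrix}\right)$ applied componentwise to the two vectors. Since both starting vectors are multiples of $\bm{v}_{(0)}$, I can write $\bm{v}_{(t)} = a_t \bm{v}_{(0)}$ and $\bm{s}_{(t)} = b_t \bm{v}_{(0)}$ with $a_0 = 1$, $b_0 = n$; collinearity with $\bm{v}_{(0)}$ is then preserved exactly for all $t$, which already yields the existence of a scalar $M(t,\eta) = a_t$.

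To obtain an explicit form I would solve the scalar recurrence $a_{t+1} = a_t + \eta b_t$, $b_{t+1} = b_t + \eta n a_t$ by diagonalizing the $2\times 2$ matrix, whose eigenvalues are $1 \pm \eta\sqrt{n}$ with eigenvectors $(1, \pm\sqrt{n})$, giving
\begin{equation*}
M(t,\eta) = a_t = \tfrac{1+\sqrt{n}}{2}\,(1+\eta\sqrt{n})^t + \tfrac{1-\sqrt{n}}{2}\,(1-\eta\sqrt{n})^t.
\end{equation*}
The only point requiring care is the claimed codomain $\mathbb{R}^{+}$: the second term of the closed form can be negative, so a direct sign analysis is awkward. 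I expect this to be the main (though minor) obstacle, and I would sidestep it with a one-line induction on the recurrence itself: since $\eta, n > 0$ and $a_0, b_0 > 0$, the identities $a_{t+1} = a_t + \eta b_t$ and $b_{t+1} = b_t + \eta n a_t$ propagate positivity, so $a_t, b_t > 0$ for every $t$, giving $M(t,\eta) > 0$. In short, the real content is identifying the invariant combination $\bm{s}_{(t)}$ and establishing closure; once that is in place, both the scalar-multiple structure and positivity follow immediately.
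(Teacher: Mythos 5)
Your proof is correct, and it takes a genuinely cleaner route than the paper's. Both arguments rest on the same two facts --- the simultaneous gradient updates $\bm{v}_{(t+1)} = \bm{v}_{(t)} + \eta\sum_i r_i\bm{u}_{i,(t)}$ and $\bm{u}_{i,(t+1)} = \bm{u}_{i,(t)} + \eta r_i \bm{v}_{(t)}$, together with the initialization identity $\sum_i r_i\bm{u}_{i,(0)} = n\bm{v}_{(0)}$ --- but they close the recursion differently. The paper eliminates the user embeddings by unfolding the history, which produces a recursion expressing $\bm{v}_{(t)}$ in terms of all earlier iterates $\bm{v}_{(0)},\dots,\bm{v}_{(t-1)}$, and then tames it with three coupled auxiliary sequences $a(k)$, $b(k)$, $c(k)$; the resulting $M(t,\eta)$ is defined only implicitly through these recursions. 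You instead retain the signed sum $\bm{s}_{(t)} = \sum_i r_i\bm{u}_{i,(t)}$ as a state variable, observe that the pair $(\bm{v}_{(t)},\bm{s}_{(t)})$ evolves under a fixed $2\times 2$ linear map, and get collinearity with $\bm{v}_{(0)}$ by induction; diagonalization then yields the explicit closed form $M(t,\eta) = \frac{1+\sqrt{n}}{2}(1+\eta\sqrt{n})^t + \frac{1-\sqrt{n}}{2}(1-\eta\sqrt{n})^t$, which I checked against the paper's recursion (it reproduces $M(1,\eta)=1+n\eta$ and $M(2,\eta)=1+2n\eta+n\eta^2$). Your approach buys two things the paper's does not: an explicit formula for the scaling factor rather than an implicit recursive definition, and an actual verification of the codomain claim $M(t,\eta)>0$ via the positivity induction on $(a_t,b_t)$ --- a point the paper's proof passes over silently, and which, as you note, is not obvious from the closed form alone since its second term can be negative.
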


\begin{proof}[\textbf{Proof of Proposition~\ref{pro:M_t}}]
Consider the training process of the Gaussian Recommender System $f_{(t)}$ over $t$ epochs, with each update through the standard loss outlined in Equation~\ref{eq:loss}. The update mechanism for user and item embeddings at the $t^{\mathrm{th}}$ epoch can be described as follows:
\begin{equation}
\label{eq:update_normal}
    \begin{aligned}
        \bm{u}_{(t)} =& \bm{u}_{(t-1)} + \eta \cdot r\bm{v}_{(t-1)}, \\
        \bm{v}_{(t)} =& \bm{v}_{(t-1)} + \eta \cdot \sum_{(\bm{u}, r) \in \mathcal{I}} r\bm{u}_{(t-1)}.
    \end{aligned}
\end{equation}
Considering the sum $\sum_{(\bm{u}, r)}r\bm{u}_{(t-1)}$, we have:
\begin{equation*}
    \begin{aligned}
        \sum_{(\bm{u}, r)}r\bm{u}_{(t-1)} =&  \sum_{(\bm{u}, r)} \left( r\bm{u}_{(t-2)} + \eta r^2\bm{v}_{(t-2)} \right) \\
        = & \sum_{(\bm{u}, r)} \left( r\bm{u}_{(t-3)} + \eta\left( \bm{v}_{(t-2)} + \bm{v}_{(t-3)} \right) \right) \\
        & \cdots \\
        =& \sum_{(\bm{u}, r)} \left( r\bm{u}_{(0)} + \eta \sum_{j=0}^{t-2} \bm{v}_{(j)} \right) \\
        =& n \bm{v}_{(0)} + n \eta \sum_{j=0}^{t-2} \bm{v}_{(j)},
    \end{aligned}
\end{equation*}
where $n$ is the number of users. 
This leads to the recursive update for $\bm{v}_{(t)}$:
\begin{equation*}
    \begin{aligned}
        \bm{v}_{(t)} =& \bm{v}_{(t-1)} + n \eta \bm{v}_{(0)} + n \eta^2 \sum_{j=0}^{t-2} \bm{v}_{(j)} \\
        =& (1+n \eta^2)\bm{v}_{(t-2)} + 2 n \eta \bm{v}_{(0)} + 2n \eta^2 \sum_{j=0}^{t-3} \bm{v}_{(j)} \\
        =& (1 + n \eta^2 + 2n \eta^2 ) \bm{v}_{(t-3)} + \left(n \eta \cdot (1+n \eta^2) + 2 n \eta \right) \bm{v}_{(0)} + \left(n \eta^2 \cdot (1+n \eta^2) + 2n \eta^2\right) \sum_{j=0}^{t-4} \bm{v}_{(j)}.
    \end{aligned}
\end{equation*}

To simplify, we introduce $a(k), b(k)$ and $c(k)$ to represent the cumulative scaling factors as:
\begin{equation}
\label{eq:abc}
\begin{aligned}
    a(k) &= \left \{ 
        \begin{aligned}
            & 1 & \, k=1\\
            & a(k-1) + c(k-1) & \, k \geq 2
        \end{aligned}
    \right., \\
     b(k) &= \left \{ 
        \begin{aligned}
            & n \eta & \, k=1\\
            & n \eta \cdot a(k-1) + b(k-1) & \, k \geq 2
        \end{aligned}
    \right., \\
    c(k) &= \left \{ 
        \begin{aligned}
            & n \eta^2 & \, k=1\\
            & n \eta^2 \cdot a(k-1) + c(k-1) & \, k \geq 2
        \end{aligned}
    \right.,
\end{aligned} 
\end{equation}
yielding a general form for $\bm{v}_{(t)}$:
\begin{equation*}
    \begin{aligned}
        \bm{v}_{(t)} =& a(k)\bm{v}_{(t-k)} + b(k) \bm{v}_{(0)} + c(k) \sum_{j=0}^{t-k-1} \bm{v}_{(j)} \\
        =& a(t-1)\bm{v}_{(1)} + b(t-1) \bm{v}_{(0)} + c(t-1)\bm{v}_{(0)}.
    \end{aligned}
\end{equation*}

Based on
\begin{equation*}
    \begin{aligned}
        \bm{v}_{(1)} = \bm{v}_{(0)} + \eta \sum_{(\bm{u}, r) \in \mathcal{I}} r\bm{u}_{(0)} = (1 + n\eta) \bm{v}_{(0)},
    \end{aligned}
\end{equation*}
let $M(t, \eta): \mathbb{N}^{+} \times \mathbb{R}^{+} \rightarrow \mathbb{R}^{+}$ be the transformation function related to training epochs $t$ and learning rate $\eta$, which is defined as:
\begin{equation*}
    \begin{aligned}
        M(t, \eta) = \left\{ 
            \begin{aligned}
                & 1 + n\eta & \, t=1\\
                & (1 + n\eta) a(t-1) + b(t-1) + c(t-1) & \, t > 1\\
            \end{aligned}
        \right. .
    \end{aligned}
\end{equation*}

Then we can get:
\begin{equation*}
    \bm{v}_{(t)} = M(t, \eta) \bm{v}_{(0)}.
\end{equation*}

This proves that the item embedding $\bm{v}_{(t)}$ after $t$ epochs of training is a scaled version of the initial embedding $\bm{v}_{(0)}$, with the scaling factor $M(t, \eta)$ being a function of the number of epochs $t$ and the learning rate $\eta$.
\end{proof}

\begin{fact}
    \label{fac:dis_var}
    Let $\bm{\varepsilon} \in \mathbb{R}^d $ be drawn from $\mathcal{N}(\bm{0}, \sigma^2\bm{I})$, with $\sigma > 0$. Let $\bm{w} \in \mathbb{R}^d$ represent any unit vector. Then, $\langle \bm{\varepsilon}, \bm{w} \rangle$ follows a normal distribution $\mathcal{N}(0, \sigma^2)$.
\end{fact}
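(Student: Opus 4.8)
The plan is to write $\langle \bm{\varepsilon}, \bm{w}\rangle = \sum_{i=1}^d \varepsilon_i w_i$ and exploit that this is a linear combination of the independent coordinates $\varepsilon_i \sim \mathcal{N}(0, \sigma^2)$. Since a linear combination of independent Gaussian random variables is again Gaussian, it then suffices to identify its mean and variance and verify that the unit-norm hypothesis collapses the variance to $\sigma^2$.

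First I would compute the mean by linearity of expectation: $\mathbb{E}[\langle \bm{\varepsilon}, \bm{w}\rangle] = \sum_{i=1}^d w_i \, \mathbb{E}[\varepsilon_i] = 0$, since each $\varepsilon_i$ is centered. Next, using independence of the coordinates so that covariances vanish, the variance is $\mathrm{Var}\big(\sum_i w_i \varepsilon_i\big) = \sum_i w_i^2 \, \mathrm{Var}(\varepsilon_i) = \sigma^2 \sum_i w_i^2 = \sigma^2 \Vert \bm{w}\Vert^2 = \sigma^2$, where the last equality is exactly where the assumption that $\bm{w}$ is a unit vector enters, converting $\sum_i w_i^2$ into $1$.

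To make the normality claim rigorous rather than merely invoking closure as folklore, I would verify it through the moment generating function. Each $\varepsilon_i$ has MGF $\mathbb{E}[e^{s \varepsilon_i}] = \exp(\sigma^2 s^2 / 2)$, and by independence the MGF of the sum factorizes as $\mathbb{E}[e^{s \langle \bm{\varepsilon}, \bm{w}\rangle}] = \prod_{i=1}^d \exp(\sigma^2 w_i^2 s^2 / 2) = \exp\big(\tfrac{1}{2}\sigma^2 s^2 \Vert \bm{w}\Vert^2\big) = \exp(\sigma^2 s^2 / 2)$. This is precisely the MGF of $\mathcal{N}(0, \sigma^2)$, and since the MGF determines the distribution, the claim follows.

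The argument is essentially routine, and there is no substantive obstacle beyond selecting a clean representation; the only point requiring care is the justification of normality of the linear combination, which the MGF computation settles. As an alternative one could appeal to the rotational invariance (isotropy) of $\mathcal{N}(\bm{0}, \sigma^2 \bm{I})$: picking an orthogonal transformation that maps $\bm{w}$ to the first standard basis vector leaves the distribution unchanged, so $\langle \bm{\varepsilon}, \bm{w}\rangle$ becomes the first coordinate of a still-isotropic Gaussian, which is immediately $\mathcal{N}(0, \sigma^2)$.
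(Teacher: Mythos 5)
Your proposal is correct and follows essentially the same route as the paper: both decompose $\langle \bm{\varepsilon}, \bm{w} \rangle = \sum_i w_i \varepsilon_i$, compute the mean $0$ and variance $\sigma^2 \sum_i w_i^2 = \sigma^2$ using independence and the unit-norm hypothesis, and invoke closure of Gaussians under linear combinations. The only difference is that you make the normality step explicit via the MGF factorization (and note the rotational-invariance alternative), whereas the paper simply asserts it; this is added rigor, not a different argument.
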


\begin{proof}[\textbf{Proof of Fact~\ref{fac:dis_var}}]
Consider $\bm{\varepsilon} = [\varepsilon_1, \dots, \varepsilon_d]$, with each $\varepsilon_i$ following $\mathcal{N}(0, \sigma^2)$.
Let $\bm{w} = [w_1, \dots, w_d]$, satisfying $w_1^2 + \dots + w_d^2 = 1$.
Then $\langle \bm{\varepsilon}, \bm{w} \rangle$ is distributed as $\mathcal{N}\left(\mathbb{E}[\langle \bm{\varepsilon}, \bm{w} \rangle], \mathbb{D}[\langle \bm{\varepsilon}, \bm{w} \rangle] \right)$, where:
\begin{equation*}
    \begin{aligned}
        \mathbb{E}[\langle \bm{\varepsilon}, \bm{w} \rangle] & = \mathbb{E}[\varepsilon_1 w_1 + \dots + \varepsilon_d w_d] = 0, \\
        \mathbb{D}[\langle \bm{\varepsilon}, \bm{w} \rangle] & = \mathbb{D}[\varepsilon_1 w_1 + \dots + \varepsilon_d w_d] \\
        & = \mathbb{D}[\varepsilon_1 w_1] + \dots + \mathbb{D}[\varepsilon_d w_d] \\
        & = (w_1^2 + \dots + w_d^2) \sigma^2 \\
        & = \sigma^2.
    \end{aligned}
\end{equation*}
Hence, Fact~\ref{fac:dis_var} is proved.
\end{proof}

\begin{proof}[\textbf{Proof of Theorem~\ref{the:train}}]

By invoking Proposition~\ref{pro:M_t}, we establish the basis for evaluating the impact of training on the recommendation error at the $(t+1)^{\mathrm{th}}$ epoch. Proposition~\ref{pro:M_t} specifies the scaling relationship between the initial and trained item embeddings, leading to the following update expressions for user and item embeddings. For the standard loss though Equation~\ref{eq:loss}, we have:
\begin{equation*}
    \begin{aligned}
        \bm{u}_{(t+1)} &= \bm{u}_{(t)} +  \eta r M(t, \eta) \bm{v}_{(0)}, \\
        \bm{v}_{(t+1)} &= M(t+1, \eta) \bm{v}_{(0)}.
    \end{aligned}
\end{equation*}

Considering the above update rules, the recommendation error at the $(t+1)^{\mathrm{th}}$ epoch is given by:
\begin{equation}
\label{eq:loss_p_t+1}
    \begin{aligned}
        \mathbb{P}_{\bm{v}_{(0)} \sim \mathcal{N}(\Bar{\bm{u}}, \frac{\sigma^2}{n-1}I)} & \left[f_{(t+1)}(\bm{u}, \bm{v}) \neq r \mid (\bm{u}, r) \right] \\ 
        & = \mathbb{P}_{\bm{v}_{(0)} \sim \mathcal{N}(\Bar{\bm{u}}, \frac{\sigma^2}{n-1}I)} \left[r \cdot \langle \bm{u}_{(t+1)}, \bm{v}_{(t+1)} \rangle \leq 0 \right] \\
        & = \mathbb{P}_{\bm{v}_{(0)} \sim \mathcal{N}(\Bar{\bm{u}}, \frac{\sigma^2}{n-1}I)} \left[ r \cdot \langle \left( \bm{u}_{(t)} + \eta r M(t, \eta) \bm{v}_{(0)} \right), M(t+1, \eta) \bm{v}_{(0)} \rangle \leq 0 \right] \\
        & = \mathbb{P}_{\bm{v}_{(0)} \sim \mathcal{N}(\Bar{\bm{u}}, \frac{\sigma^2}{n-1}I)} \left[ r \cdot \langle \left( \bm{u}_{(t)} + \eta r M(t, \eta) \bm{v}_{(0)} \right), \bm{v}_{(0)} \rangle \leq 0 \right] \\
        & = \mathbb{P}_{\bm{v}_{(0)} \sim \mathcal{N}(\Bar{\bm{u}}, \frac{\sigma^2}{n-1}I)} \left[  r \cdot \langle \bm{u}_{(t)}, \bm{v}_{(0)} \rangle \leq - \eta M(t, \eta) \Vert\bm{v}_{(0)}\Vert^2 \right].
    \end{aligned}
\end{equation}

For adversarial loss as detailed by Equation~\ref{eq:adv_delta}, we have:
\begin{equation*}
    \begin{aligned}
        \Delta_{\mathrm{adv}} = \arg \max_{\Delta, \Vert\Delta\Vert \leq \epsilon} \mathcal{L}(\Theta + \Delta).
    \end{aligned}
\end{equation*}
According to the first-order Taylor expansion, we have:
\begin{equation*}
    \begin{aligned}
        \Delta_{\mathrm{adv}} \approx & \arg \max_{\Delta, \Vert\Delta\Vert \leq \epsilon} \mathcal{L}(\Theta) + \langle \Delta, \nabla_{\Theta} \mathcal{L}(\Theta) \rangle \\
        = & \arg \max_{\Delta, \Vert\Delta\Vert \leq \epsilon} \langle \Delta, \nabla_{\Theta} \mathcal{L}(\Theta) \rangle\\
        = & \epsilon \frac{\nabla_{\Theta} \mathcal{L}(\Theta)}{\Vert \nabla_{\Theta} \mathcal{L}(\Theta) \Vert},
    \end{aligned}
\end{equation*}
leading to specific perturbations $\Delta_{\bm{u}}$ and $\Delta_{\bm{v}}$ in Equation~\ref{eq:adv_loss}:
\begin{equation}
    \label{eq:gamma}
    \begin{aligned}
        \Delta_{\bm{u}} = \epsilon \frac{\Gamma_{\bm{v}}}{\Vert \Gamma_{\bm{u}} \Vert}, \quad &\mathrm{where} \quad \Gamma_{\bm{u}} = \frac{\partial \mathcal{L}(\bm{u}, \bm{v}|\Theta)}{\partial \bm{u}} = -r\bm{v},\\
        \Delta_{\bm{v}} = \epsilon \frac{\Gamma_{\bm{v}}}{\Vert \Gamma_{\bm{v}} \Vert}, \quad &\mathrm{where} \quad \Gamma_{\bm{v}} = \frac{\partial \mathcal{L}(\bm{v}, \bm{u}|\Theta)}{\partial \bm{v}} = -r\bm{u}.
    \end{aligned} 
\end{equation}

Subsequently, the updated embeddings through adversarial loss are expressed as:
\begin{equation}
    \label{eq:update_adv}
    \begin{aligned}
        \bm{u}_{(t+1)}^{~\mathrm{adv}} &= \bm{u}_{(t)} + \eta \cdot \left( r \bm{v}_{(t)} + \lambda r \left(\bm{v}_{(t)} -   \epsilon \frac{r \bm{u}_{(t)}}{\Vert \bm{u}_{(t)} \Vert} \right) \right) \\
        & = \bm{u}_{(t)} + \eta  (1 + \lambda)r \bm{v}_{(t)} - \eta \lambda \epsilon \frac{\bm{u}_{(t)}}{\Vert \bm{u}_{(t)} \Vert} \\
        & = \left( 1 - \frac{\eta \lambda \epsilon}{\Vert \bm{u}_{(t)} \Vert} \right) \bm{u}_{(t)} + \eta (1 + \lambda)r \cdot M(t, \eta) \bm{v}_{(0)}  \\
        \bm{v}_{(t+1)}^{~\mathrm{adv}} &= \bm{v}_{(t)} + \eta \cdot \left(\sum r \bm{u}_{(t)} + \lambda \sum r \left( \bm{u}_{(t)} -   \epsilon \frac{r\bm{v}_{(t)}}{\Vert \bm{v}_{(t)} \Vert} \right) \right)\\
        & = \bm{v}_{(t)} + \eta(1+\lambda) \sum r \bm{u}_{(t)} - \frac{n \eta \lambda \epsilon}{\Vert \bm{v}_{(t)} \Vert}\bm{v}_{(t)} \\
        & = \left( M(t, \eta) +  n \eta(1+\lambda) \left( 1 + \eta+ \eta \sum_{j=1}^{t-1} M(j, \eta) \right) - \frac{n \eta \lambda \epsilon}{\Vert \bm{v}_{(0)} \Vert}\right) \bm{v}_{(0)}.
    \end{aligned}
\end{equation}

Given $\Vert \Bar{\bm{u}} \Vert \gg \sigma$, we can approximate $\Vert \bm{v}_{(0)} \Vert$ with $\mathbb{E}[\Vert \bm{v}_{(0)} \Vert]$. Given $\epsilon < \frac{\min(\Vert \bm{u}_{(t)} \Vert, \Vert \Bar{\bm{u}} \Vert)}{\eta \lambda}$, according to the expansion of $M(t, \eta)$ in Proposition~\ref{pro:M_t}, it follows that $\left( M(t, \eta) + n \eta (1+\lambda) \left( 1 + \eta + \eta \sum_{j=1}^{t-1} M(j, \eta) \right) - \frac{n \eta \lambda \epsilon}{\Vert \bm{v}_{(0)} \Vert}\right) > 0$.
Considering these update rules, the recommendation error at the $(t+1)^{\mathrm{th}}$ epoch is determined by:
\begin{equation}
\label{eq:adv_loss_p_t+1}
    \begin{aligned}
        \mathbb{P}_{\bm{v}_{(0)} \sim \mathcal{N}(\Bar{\bm{u}}, \frac{\sigma^2}{n-1}I)}^{~\mathrm{adv}} & \left[f_{(t+1)}(\bm{u}, \bm{v}) \neq r \mid (\bm{u}, r) \right] \\
        & = \mathbb{P}_{\bm{v}_{(0)} \sim \mathcal{N}(\Bar{\bm{u}}, \frac{\sigma^2}{n-1}I)} \left[ r \cdot \langle \bm{u}_{(t+1)}^{~\mathrm{adv}}, \bm{v}_{(t+1)}^{~\mathrm{adv}} \rangle \leq 0 \right] \\
        &= \mathbb{P}_{\bm{v}_{(0)} \sim \mathcal{N}(\Bar{\bm{u}}, \frac{\sigma^2}{n-1}I)}\left[ r \cdot \langle \left( \left( 1 - \frac{\eta \lambda \epsilon}{\Vert \bm{u}_{(t)} \Vert} \right) \bm{u}_{(t)} + \eta (1 + \lambda)r \cdot M(t, \eta) \bm{v}_{(0)} \right), \bm{v}_{(0)} \rangle \leq 0 \right] \\
        &= \mathbb{P}_{\bm{v}_{(0)} \sim \mathcal{N}(\Bar{\bm{u}}, \frac{\sigma^2}{n-1}I)}\left[ r \cdot \langle \left( 1 - \frac{\eta \lambda \epsilon}{\Vert \bm{u}_{(t)} \Vert} \right) \bm{u}_{(t)}, \bm{v}_{(0)} \rangle + \eta (1+\lambda) M(t, \eta)  \langle  \bm{v}_{(0)} , \bm{v}_{(0)} \rangle \leq 0 \right] \\
        &= \mathbb{P}_{\bm{v}_{(0)} \sim \mathcal{N}(\Bar{\bm{u}}, \frac{\sigma^2}{n-1}I)}\left[ r \cdot \langle \left( 1 - \frac{\eta \lambda \epsilon}{\Vert \bm{u}_{(t)} \Vert} \right) \bm{u}_{(t)}, \bm{v}_{(0)} \rangle \leq -\eta (1+\lambda) M(t, \eta) \Vert\bm{v}_{(0)}\Vert^2 \right].
    \end{aligned}
\end{equation}

Let $\gamma^{(\bm{u})}_{(t)} = \left( 1 -  \frac{\eta \lambda \epsilon}{\Vert \bm{u}_{(t)} \Vert} \right)^{-1}$. Given the condition $\epsilon < \frac{\min(\Vert \bm{u}_{(t)} \Vert, \Vert \Bar{\bm{u}} \Vert)}{\eta \lambda}$, it follows $\gamma^{(\bm{u})}_{(t)} > 1$. The final form of the recommendation error at the $(t+1)^{\mathrm{th}}$ epoch under adversarial training is:
\begin{equation*}
    \begin{aligned}
        \mathbb{P}_{\bm{v}_{(0)} \sim \mathcal{N}(\Bar{\bm{u}}, \frac{\sigma^2}{n-1}I)}^{~\mathrm{adv}} & \left[f_{(t+1)}(\bm{u}, \bm{v}) \neq r \mid (\bm{u}, r) \right] \\ 
        &= \mathbb{P}_{\bm{v}_{(0)} \sim \mathcal{N}(\Bar{\bm{u}}, \frac{\sigma^2}{n-1}I)}\left[ r \cdot \langle \bm{u}_{(t)}, \bm{v}_{(0)} \rangle \leq -\eta (1+\lambda) \gamma^{(\bm{u})}_{(t)} M(t, \eta) \Vert\bm{v}_{(0)}\Vert^2 \right]
    \end{aligned}
\end{equation*}

This leads us to:
\begin{equation*}
    \begin{aligned}
        & \mathbb{P}_{\bm{v}_{(0)} \sim \mathcal{N}(\Bar{\bm{u}}, \frac{\sigma^2}{n-1}I)}  \left[f_{(t+1)}(\bm{u}, \bm{v}) \neq r \mid (\bm{u}, r) \right] - \mathbb{P}_{\bm{v}_{(0)} \sim \mathcal{N}(\Bar{\bm{u}}, \frac{\sigma^2}{n-1}I)}^{~\mathrm{adv}} \left[f_{(t+1)}(\bm{u}, \bm{v}) \neq r \mid (\bm{u}, r) \right] \\
        &=  \mathbb{P}_{\bm{v}_{(0)} \sim \mathcal{N}(\Bar{\bm{u}}, \frac{\sigma^2}{n-1}I)} \left[-\eta (1+\lambda) \gamma_{(t)}^{\bm{u}} M(t, \eta) \Vert\bm{v}_{(0)}\Vert^2 < r\langle \bm{u}_{(t)}, \bm{v}_{(0)} \rangle \leq -\eta M(t, \eta) \Vert\bm{v}_{(0)}\Vert^2 \right]\\
        &=  \mathbb{P}_{\bm{v}_{(0)} \sim \mathcal{N}(\Bar{\bm{u}}, \frac{\sigma^2}{n-1}I)}\left[-\eta (1+\lambda) \gamma_{(t)}^{\bm{u}} M(t, \eta) \frac{\Vert \bm{v}_{(0)}\Vert^2}{\Vert \bm{u}_{(t)}\Vert} < \langle \frac{r\bm{u}_{(t)}}{\Vert \bm{u}_{(t)} \Vert}, \bm{v}_{(0)} \rangle \leq -\eta M(t, \eta) \frac{\Vert \bm{v}_{(0)}\Vert^2}{\Vert \bm{u}_{(t)}\Vert} \right].
    \end{aligned}
\end{equation*}

Given that $\Vert \Bar{\bm{u}} \Vert \gg \sigma$, we can approximate the $\Vert \bm{v}_{(0)} \Vert^2$ by using $\mathbb{E}_{\bm{v}_{(0)} \sim \mathcal{N}(\Bar{\bm{u}}, \frac{\sigma^2}{n-1}I)} \left[ \Vert \bm{v}_{(0)} \Vert^2 \right]$ as an estimate.
Therefore, we have:
\begin{equation*}
    \begin{aligned}
        & \mathbb{P}_{\bm{v}_{(0)} \sim \mathcal{N}(\Bar{\bm{u}}, \frac{\sigma^2}{n-1}I)}  \left[f_{(t+1)}(\bm{u}, \bm{v}) \neq r \mid (\bm{u}, r) \right] - \mathbb{P}_{\bm{v}_{(0)} \sim \mathcal{N}(\Bar{\bm{u}}, \frac{\sigma^2}{n-1}I)}^{~\mathrm{adv}} \left[f_{(t+1)}(\bm{u}, \bm{v}) \neq r \mid (\bm{u}, r) \right] \\
        &\approx  \mathbb{P}_{\bm{v}_{(0)} \sim \mathcal{N}(\Bar{\bm{u}}, \frac{\sigma^2}{n-1}I)}\left[-\eta (1+\lambda) \gamma_{(t)}^{\bm{u}} M(t, \eta) \frac{\Vert \Bar{\bm{u}} \Vert^2 + \frac{d \sigma^2}{n-1}}{\Vert \bm{u}_{(t)}\Vert} < \langle \frac{r\bm{u}_{(t)}}{\Vert \bm{u}_{(t)} \Vert}, \bm{v}_{(0)} \rangle \leq -\eta M(t, \eta) \frac{\Vert \Bar{\bm{u}} \Vert^2 + \frac{d \sigma^2}{n-1}}{\Vert \bm{u}_{(t)}\Vert} \right],
    \end{aligned}
\end{equation*}
where $d$ is the dimension of $\bm{v}_{(0)}$.

Let $\bm{\varepsilon} \sim \mathcal{N}\left(\bm{0}, \frac{\sigma^2}{n-1}\bm{I} \right)$, we have $\bm{v}_{(0)} = \Bar{\bm{u}} + \bm{\varepsilon}$. Thus:
\begin{equation}
    \label{eq:var_eps_train}
    \begin{aligned}
        & \mathbb{P}_{\bm{v}_{(0)} \sim \mathcal{N}(\Bar{\bm{u}}, \frac{\sigma^2}{n-1}I)}  \left[f_{(t+1)}(\bm{u}, \bm{v}) \neq r \mid (\bm{u}, r) \right] - \mathbb{P}_{\bm{v}_{(0)} \sim \mathcal{N}(\Bar{\bm{u}}, \frac{\sigma^2}{n-1}I)}^{~\mathrm{adv}} \left[f_{(t+1)}(\bm{u}, \bm{v}) \neq r \mid (\bm{u}, r) \right] \\
        &=  \mathbb{P}_{\bm{\varepsilon} \sim \mathcal{N}\left(\bm{0}, \frac{\sigma^2}{n-1}\bm{I} \right)}\left[-\eta (1+\lambda) \gamma_{(t)}^{\bm{u}} M(t, \eta) \frac{\Vert \Bar{\bm{u}} \Vert^2 + \frac{d \sigma^2}{n-1}}{\Vert \bm{u}_{(t)}\Vert} < \langle \frac{r\bm{u}_{(t)}}{\Vert \bm{u}_{(t)} \Vert}, \Bar{\bm{u}} + \bm{\varepsilon} \rangle \leq -\eta M(t, \eta) \frac{\Vert \Bar{\bm{u}} \Vert^2 + \frac{d \sigma^2}{n-1}}{\Vert \bm{u}_{(t)}\Vert} \right]\\
        &=  \mathbb{P}_{\bm{\varepsilon} \sim \mathcal{N}\left(\bm{0}, \frac{\sigma^2}{n-1}\bm{I} \right)}\left[-\eta (1+\lambda) \gamma_{(t)}^{\bm{u}} M(t, \eta) \frac{\Vert \Bar{\bm{u}} \Vert^2 + \frac{d \sigma^2}{n-1}}{\Vert \bm{u}_{(t)}\Vert} - \langle \frac{r\bm{u}_{(t)}}{\Vert \bm{u}_{(t)} \Vert}, \Bar{\bm{u}} \rangle \right. \\
        & \quad \quad \quad \quad \quad \quad \quad \quad \quad \quad \left.< \langle \frac{r\bm{u}_{(t)}}{\Vert \bm{u}_{(t)} \Vert}, \bm{\varepsilon} \rangle \leq -\eta M(t, \eta) \frac{\Vert \Bar{\bm{u}} \Vert^2 + \frac{d \sigma^2}{n-1}}{\Vert \bm{u}_{(t)}\Vert} - \langle \frac{r\bm{u}_{(t)}}{\Vert \bm{u}_{(t)} \Vert}, \Bar{\bm{u}} \rangle\right].
    \end{aligned}
\end{equation}
According to Fact~\ref{fac:dis_var}, $\langle \frac{r\bm{u}_{(t)}}{\Vert \bm{u}_{(t)} \Vert}, \bm{\varepsilon} \rangle \sim  \mathcal{N}\left(0, \frac{\sigma^2}{n-1} \right)$, then:
\begin{equation*}
    \begin{aligned}
        & \mathbb{P}_{\bm{v}_{(0)} \sim \mathcal{N}(\Bar{\bm{u}}, \frac{\sigma^2}{n-1}I)}  \left[f_{(t+1)}(\bm{u}, \bm{v}) \neq r \mid (\bm{u}, r) \right] - \mathbb{P}_{\bm{v}_{(0)} \sim \mathcal{N}(\Bar{\bm{u}}, \frac{\sigma^2}{n-1}I)}^{~\mathrm{adv}} \left[f_{(t+1)}(\bm{u}, \bm{v}) \neq r \mid (\bm{u}, r) \right] \\
        =& \Phi\left( \frac{\sqrt{n-1}}{\sigma}\left( \eta (1+\lambda) \gamma_{(t)}^{\bm{u}} M(t, \eta) \frac{\Vert \Bar{\bm{u}} \Vert^2 + \frac{d \sigma^2}{n-1}}{\Vert \bm{u}_{(t)}\Vert} + \langle \frac{r\bm{u}_{(t)}}{\Vert \bm{u}_{(t)} \Vert}, \Bar{\bm{u}} \rangle \right)\right) \\ 
        & \quad \quad \quad \quad - \Phi \left( \frac{\sqrt{n-1}}{\sigma} \left(\eta M(t, \eta) \frac{\Vert \Bar{\bm{u}} \Vert^2 + \frac{d \sigma^2}{n-1}}{\Vert \bm{u}_{(t)}\Vert} + \langle \frac{r\bm{u}_{(t)}}{\Vert \bm{u}_{(t)} \Vert}, \Bar{\bm{u}} \rangle\right) \right),
    \end{aligned}
\end{equation*}
where $\Phi()$ is the CDF of standard Gaussian distribution. 
Given $\epsilon < \frac{\min(\Vert \bm{u}_{(t)} \Vert, \Vert \Bar{\bm{u}} \Vert)}{\eta \lambda}$, it follows that $(1+\lambda)\gamma^{(\bm{u})}_{(t)} > 1+\lambda > 1$. This condition implies a lower recommendation error at the $(t+1)^{\mathrm{th}}$ epoch under adversarial training compared to standard training:
\begin{equation*}
    \begin{aligned}
         \mathbb{P}_{\bm{v}_{(0)} \sim \mathcal{N}(\Bar{\bm{u}}, \frac{\sigma^2}{n-1}I)}  \left[f_{(t+1)}(\bm{u}, \bm{v}) \neq r \mid (\bm{u}, r) \right] - \mathbb{P}_{\bm{v}_{(0)} \sim \mathcal{N}(\Bar{\bm{u}}, \frac{\sigma^2}{n-1}I)}^{~\mathrm{adv}} \left[f_{(t+1)}(\bm{u}, \bm{v}) \neq r \mid (\bm{u}, r) \right] > 0.
    \end{aligned}
\end{equation*}

Therefore, Theorem~\ref{the:train} is proved.
\end{proof}

\subsubsection{Proof of Theorem~\ref{the:poison_train}}

\begin{proof}[\textbf{Proof of Theorem~\ref{the:poison_train}}]
\label{prf:poi_train}
In light of Theorem~\ref{pro:M_t} and the impact of poisoning attacks, our objective is to measure the alteration in the recommendation error within a poisoning attack, i.e., $\alpha$-poisoned recommendation error.

A \textit{poisoning attack} on Gaussian Recommender System injects a \textit{poisoning user set} \(\mathcal{I}' = \{(\bm{u}_1', r_1'), (\bm{u}_2', r_2'), \dots, (\bm{u}_{n'}', r_{n'}')\}\), with each tuple \((\bm{u}', r') \in \mathbb{R}^d \times \{\pm 1\}\) representing data maliciously crafted by attackers.
Considering the initialized poisoned item embedding $\bm{v}^{'}$:
\begin{equation}
    \label{eq:prf_poison}
    \bm{v}_{(0)}^{'} = \frac{1}{n + n'} \left( \sum_{(\bm{u}, r) \in \mathcal{I}} r\bm{u}_{(0)} + \sum_{(\bm{u}', r') \in \mathcal{I}'} r'\bm{u}^{'}_{(0)} \right),
\end{equation}
by employing Theorem~\ref{pro:M_t} as a basis (similar to Equations~\ref{eq:loss_p_t+1} and \ref{eq:adv_loss_p_t+1}), we derive:
\begin{equation*}
    \begin{aligned}
        \mathbb{P}_{\bm{v}_{(0)} \sim \mathcal{N}(\Bar{\bm{u}}, \frac{\sigma^2}{n-1}I)} & \left[f_{(t+1), \alpha}(\bm{u}, \bm{v}') \neq r \mid (\bm{u}, r) \right] - \mathbb{P}_{\bm{v}_{(0)} \sim \mathcal{N}(\Bar{\bm{u}}, \frac{\sigma^2}{n-1}I)}^{~\mathrm{adv}} \left[f_{(t+1), \alpha}(\bm{u}, \bm{v}') \neq r \mid (\bm{u}, r) \right] \\
        = & \mathbb{P}_{\bm{v}_{(0)} \sim \mathcal{N}(\Bar{\bm{u}}, \frac{\sigma^2}{n-1}I)}\left[-\eta (1+\lambda) \gamma_{(t)}^{\bm{u}} M(t, \eta) \frac{\Vert \bm{v}_{(0)}'\Vert^2}{\Vert \bm{u}_{(t)}\Vert} < \langle \frac{r\bm{u}_{(t)}}{\Vert \bm{u}_{(t)} \Vert}, \bm{v}_{(0)}' \rangle \leq -\eta M(t, \eta) \frac{\Vert \bm{v}_{(0)}'\Vert^2}{\Vert \bm{u}_{(t)}\Vert} \right] \\
        = & \mathbb{P}_{\bm{v}_{(0)} \sim \mathcal{N}(\Bar{\bm{u}}, \frac{\sigma^2}{n-1}I)}\left[-\eta (1+\lambda) \gamma_{(t)}^{\bm{u}} M(t, \eta) \frac{\Vert \bm{v}_{(0)}'\Vert^2}{\Vert \bm{u}_{(t)}\Vert} \right. \\
        & \quad \quad \left. < \langle \frac{r\bm{u}_{(t)}}{\Vert \bm{u}_{(t)} \Vert}, \frac{1}{n + n'} \left( n \bm{v}_{(0)} + \sum_{(\bm{u}^{'}, r') \in \mathcal{I}'} r' \bm{u}^{'}_{(0)} \right) \rangle \leq -\eta M(t, \eta) \frac{\Vert \bm{v}_{(0)}'\Vert^2}{\Vert \bm{u}_{(t)}\Vert} \right] \\
        = & \mathbb{P}_{\bm{v}_{(0)} \sim \mathcal{N}(\Bar{\bm{u}}, \frac{\sigma^2}{n-1}I)}\left[-\eta \frac{n+n'}{n} (1+\lambda) \gamma_{(t)}^{\bm{u}} M(t, \eta) \frac{\Vert \bm{v}_{(0)}'\Vert^2}{\Vert \bm{u}_{(t)}\Vert} \right. \\
        & \quad \quad \left. < \langle \frac{r\bm{u}_{(t)}}{\Vert \bm{u}_{(t)} \Vert}, \bm{v}_{(0)} \rangle + \frac{1}{n}\sum_{(\bm{u}^{'}, r') \in \mathcal{I}'} r r' \langle \frac{\bm{u}_{(t)}}{\Vert \bm{u}_{(t)} \Vert}, \bm{u}^{'}_{(0)} \rangle \leq -\eta \frac{n+n'}{n} M(t, \eta) \frac{\Vert \bm{v}_{(0)}'\Vert^2}{\Vert \bm{u}_{(t)}\Vert} \right] \\
        = & \mathbb{P}_{\bm{v}_{(0)} \sim \mathcal{N}(\Bar{\bm{u}}, \frac{\sigma^2}{n-1}I)}\left[-\eta (1+\lambda) \frac{n+n'}{n} \gamma_{(t)}^{\bm{u}} M(t, \eta) \frac{\Vert \bm{v}_{(0)}'\Vert^2}{\Vert \bm{u}_{(t)}\Vert} -\frac{1}{n}\sum_{(\bm{u}^{'}, r') \in \mathcal{I}'} r r' \langle \frac{\bm{u}_{(t)}}{\Vert \bm{u}_{(t)} \Vert}, \bm{u}^{'}_{(0)} \rangle \right. \\
        & \quad \quad \left. < \langle \frac{r\bm{u}_{(t)}}{\Vert \bm{u}_{(t)} \Vert}, \bm{v}_{(0)} \rangle \leq -\eta \frac{n+n'}{n} M(t, \eta) \frac{\Vert \bm{v}_{(0)}'\Vert^2}{\Vert \bm{u}_{(t)}\Vert} - \frac{1}{n}\sum_{(\bm{u}^{'}, r') \in \mathcal{I}'} r r' \langle \frac{\bm{u}_{(t)}}{\Vert \bm{u}_{(t)} \Vert}, \bm{u}^{'}_{(0)} \rangle\right],
    \end{aligned}
\end{equation*}
where $\gamma^{(\bm{u})}_{(t)} = \left( 1 -  \frac{\eta \lambda \epsilon}{\Vert \bm{u}_{(t)} \Vert} \right)^{-1}$.

Given $\Vert \Bar{\bm{u}} \Vert \gg \sigma$, we can use $\mathbb{E}_{\bm{v}_{(0)} \sim \mathcal{N}(\Bar{\bm{u}}, \frac{\sigma^2}{n-1}I)} \left[ \Vert \bm{v}_{(0)}' \Vert^2 \right]$ to approximate the $\Vert \bm{v}_{(0)}' \Vert^2$ in the above. Similar to the proof of Theorem~\ref{the:train}, specifically Equation~\ref{eq:var_eps_train}, and under the precondition $\epsilon < \frac{\min(\Vert \bm{u}_{(t)} \Vert, \Vert \Bar{\bm{u}} \Vert)}{\eta \lambda}$, we have:
\begin{equation*}
    \begin{aligned}
       \mathbb{P}_{\bm{v}_{(0)} \sim \mathcal{N}(\Bar{\bm{u}}, \frac{\sigma^2}{n-1}I)}  \left[f_{(t+1), \alpha}(\bm{u}, \bm{v}') \neq r \mid (\bm{u}, r) \right] - \mathbb{P}_{\bm{v}_{(0)} \sim \mathcal{N}(\Bar{\bm{u}}, \frac{\sigma^2}{n-1}I)}^{~\mathrm{adv}} \left[f_{(t+1), \alpha}(\bm{u}, \bm{v}') \neq r \mid (\bm{u}, r) \right] > 0.
    \end{aligned}
\end{equation*}
Hence, Theorem~\ref{the:poison_train} is proved.
\end{proof}

\subsection{Proofs for Section~\ref{sec:poi_res_er}}
\label{ap:proof_3_3}
\subsubsection{Proof of Theorem~\ref{the:adv_train}}

Extending Proposition~\ref{pro:M_t} to ACF yields the following proposition, which captures the transformation of item embedding due to adversarial loss as defined in Equation~\ref{eq:adv_loss}.
\begin{proposition}
    \label{pro:M_t_adv}
    Consider a Gaussian Recommender System $f_{(t)}$, per-trained on standard loss over $t$ epochs, then trained by the adversarial loss specified in Equation~\ref{eq:adv_loss} over $k$ epochs. Given learning rate $\eta$, adversarial training weight $\lambda$, and perturbation magnitude $\epsilon$, when $\epsilon < \frac{\Vert \Bar{\bm{u}} \Vert}{\eta \lambda}$, and $\Vert \Bar{\bm{u}} \Vert \gg \sigma$, there exists a transformation function $M_{\mathrm{adv}}(t, \eta, \lambda, \epsilon): \mathbb{N}^{+} \times \mathbb{R}^{+} \times \mathbb{R}^{+} \times \mathbb{R}^{+} \rightarrow \mathbb{R}^{+}$, such that the item embedding at $(t+k)^{\mathrm{th}}$ epoch, $\bm{v}_{(t+k)}$, is related to the initial embedding $\bm{v}_{(0)}$ by:
    \begin{equation*}
        \bm{v}_{(t+k)} = \frac{M_{\mathrm{adv}}(t+k, \eta, \lambda, \epsilon)}{M_{\mathrm{adv}}(t, \eta, \lambda, \epsilon)} M(t, \eta) \bm{v}_{(0)},
    \end{equation*}
    where $M(t, \eta)$ is the transformation function given by standard loss in Proposition~\ref{pro:M_t}.
\end{proposition}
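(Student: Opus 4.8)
The plan is to prove the proposition by induction on the number of adversarial epochs $k$, with the central structural claim being that the shared item embedding stays collinear with $\bm{v}_{(0)}$ throughout adversarial training, exactly as it does under standard loss in Proposition~\ref{pro:M_t}. The single-epoch version of this is already in hand: Equation~\ref{eq:update_adv} from the proof of Theorem~\ref{the:train} shows that one step of adversarial loss applied at the end of standard pre-training sends $\bm{v}_{(t)}$ to a positive scalar multiple of $\bm{v}_{(0)}$. First I would isolate the update rules for a generic adversarial epoch, namely $\bm{u}_{(t+k+1)}^{\mathrm{adv}} = (1 - \eta\lambda\epsilon/\Vert\bm{u}_{(t+k)}^{\mathrm{adv}}\Vert)\bm{u}_{(t+k)}^{\mathrm{adv}} + \eta(1+\lambda)r\,\bm{v}_{(t+k)}^{\mathrm{adv}}$ for the users and the analogous shrink-plus-drift rule for the item, and then show these recursions preserve collinearity.

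To carry out the induction cleanly, I would adopt the ansatz that each user embedding decomposes as $\bm{u}_{(t+k)}^{\mathrm{adv}} = A_k\bm{u}_{(0)} + B_k\, r\,\bm{v}_{(0)}$, which holds at $k=0$ with $A_0 = 1$ and $B_0 = \eta\sum_{j=0}^{t-1}M(j,\eta)$ from the standard-loss closed form. Feeding this ansatz into the user update gives the scalar recursions $A_{k+1} = (1 - \eta\lambda\epsilon/\Vert\bm{u}_{(t+k)}^{\mathrm{adv}}\Vert)A_k$ and $B_{k+1} = (1 - \eta\lambda\epsilon/\Vert\bm{u}_{(t+k)}^{\mathrm{adv}}\Vert)B_k + \eta(1+\lambda)\mu_k$, where $\mu_k$ is the item scale with $\bm{v}_{(t+k)}^{\mathrm{adv}} = \mu_k\bm{v}_{(0)}$. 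The decisive step is the item update: the driving term $\sum_{(\bm{u},r)}r\,\bm{u}_{(t+k)}^{\mathrm{adv}}$ equals $A_k\sum r\,\bm{u}_{(0)} + B_k\sum r^2\,\bm{v}_{(0)}$, and since $\sum_{(\bm{u},r)}r\,\bm{u}_{(0)} = n\bm{v}_{(0)}$ by the definition of $\bm{v}_{(0)}$ and $r^2 = 1$, this collapses to $n(A_k + B_k)\bm{v}_{(0)}$, keeping $\bm{v}_{(t+k+1)}^{\mathrm{adv}}$ proportional to $\bm{v}_{(0)}$ and yielding $\mu_{k+1} = \mu_k - n\eta\lambda\epsilon/\Vert\bm{v}_{(0)}\Vert + \eta(1+\lambda)n(A_k+B_k)$.

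With collinearity established, I would define $M_{\mathrm{adv}}$ multiplicatively: writing each step as $\mu_{k+1} = \mu_k\, g_{k+1}$ with growth factor $g_{k+1} = 1 + [\eta(1+\lambda)n(A_k+B_k) - n\eta\lambda\epsilon/\Vert\bm{v}_{(0)}\Vert]/\mu_k$, I set $M_{\mathrm{adv}}(t+k,\eta,\lambda,\epsilon) = M_{\mathrm{adv}}(t,\eta,\lambda,\epsilon)\prod_{j=1}^{k}g_j$, so that $\mu_k = M(t,\eta)\,M_{\mathrm{adv}}(t+k,\eta,\lambda,\epsilon)/M_{\mathrm{adv}}(t,\eta,\lambda,\epsilon)$, which is exactly the claimed form and reduces to $M(t,\eta)\bm{v}_{(0)}$ at $k=0$. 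To confirm that $M_{\mathrm{adv}}$ maps into $\mathbb{R}^{+}$ I would check that every factor stays positive: the shrink factor $1 - \eta\lambda\epsilon/\Vert\bm{u}_{(t+k)}^{\mathrm{adv}}\Vert$ is positive under $\epsilon < \Vert\bar{\bm{u}}\Vert/(\eta\lambda)$, and the item growth factor is positive because the $(1+\lambda)$-weighted drift dominates the $\epsilon$-shrinkage under the same bound, paralleling the positivity argument already used after Equation~\ref{eq:update_adv}.

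The main obstacle is that, unlike under standard loss, the adversarial user updates inject a component along each user's own $\bm{u}_{(0)}$ and carry a user-specific shrink factor $1 - \eta\lambda\epsilon/\Vert\bm{u}_{(t+k)}^{\mathrm{adv}}\Vert$; if these factors differed across users, the aggregate $\sum_i A_{i,k}\,r_i\bm{u}_{i,(0)}$ would fail to be collinear with $\bm{v}_{(0)}$ and the clean scalar recursion would break. This is precisely where the hypothesis $\Vert\bar{\bm{u}}\Vert \gg \sigma$ does the work: it forces $\Vert\bm{u}_{i,(0)}\Vert \approx \Vert\bar{\bm{u}}\Vert$ uniformly, so all users share a common shrink factor and a common pair $(A_k,B_k)$, letting the $\bm{u}_{(0)}$ components recombine into $n\bm{v}_{(0)}$. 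I would therefore make this concentration explicit, as the same approximation that replaces $\Vert\bm{v}_{(0)}\Vert$ by its expectation in Theorem~\ref{the:train}, and verify it does not disturb the positivity bookkeeping, after which the induction closes and $M_{\mathrm{adv}}$ is well defined as a function of $(t,\eta,\lambda,\epsilon)$ alone.
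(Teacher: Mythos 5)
Your proposal is correct and fleshes out exactly the argument the paper gestures at: the paper omits the proof of Proposition~\ref{pro:M_t_adv} entirely, stating only that it ``follows a reasoning analogous to that of Proposition~\ref{pro:M_t},'' and your induction on $k$ with the ansatz $\bm{u}_{(t+k)} = A_k\bm{u}_{(0)} + B_k r\bm{v}_{(0)}$, the collapse of $\sum_i r_i\bm{u}_{i,(t+k)}$ to $n(A_k+B_k)\bm{v}_{(0)}$ via $\sum_i r_i\bm{u}_{i,(0)} = n\bm{v}_{(0)}$ and $r_i^2=1$, and the multiplicative definition of $M_{\mathrm{adv}}$ is precisely the recursion-unrolling scheme of Proposition~\ref{pro:M_t} adapted to the update rules in Equation~\ref{eq:update_adv}. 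Where your write-up goes beyond the paper is in naming the one place the analogy is genuinely imperfect: under adversarial updates each user carries its own shrink factor $1-\eta\lambda\epsilon/\Vert\bm{u}_{i,(t+k)}\Vert$, so the aggregate $\sum_i r_i\bm{u}_{i,(t+k)}$ is not \emph{exactly} collinear with $\bm{v}_{(0)}$ after the second adversarial epoch, and collinearity survives only because $\Vert\bar{\bm{u}}\Vert\gg\sigma$ concentrates all user norms (and hence all shrink factors) around a common value. The paper never acknowledges this, even though it is the actual reason the hypothesis $\Vert\bar{\bm{u}}\Vert\gg\sigma$ appears in the proposition; making the concentration step explicit, as you propose, is the right repair. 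One caveat: your positivity argument for the item growth factor (``the $(1+\lambda)$-weighted drift dominates the $\epsilon$-shrinkage'') is asserted rather than proved, and near the upper end of the allowed range $\epsilon \to \Vert\bar{\bm{u}}\Vert/(\eta\lambda)$ the shrink term $n\eta\lambda\epsilon/\Vert\bm{v}_{(0)}\Vert$ approaches $n$, which the drift terms only dominate when $M(t,\eta)$ and the accumulated coefficients are large enough (roughly $t\eta$ not too small); this is, however, exactly the same unquantified assertion the paper itself makes after Equation~\ref{eq:update_adv}, so your proof matches the paper's level of rigor on that point rather than falling below it.
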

The proof of Proposition~\ref{pro:M_t_adv} follows a reasoning analogous to that of Proposition~\ref{pro:M_t}. Due to the similarity, the detailed proof is omitted for brevity.

\begin{proof}[\textbf{Proof of Theorem~\ref{the:adv_train}}]
\label{prf:adv_train}
Given $\epsilon < \frac{\Vert \Bar{\bm{u}} \Vert}{\eta \lambda}$, drawing from Proposition~\ref{pro:M_t_adv} and Theorem~\ref{the:train} (specifically Equation~\ref{eq:update_adv}), the update rules for user and item embeddings in the ACF at the $(t+k+1)^{\mathrm{th}}$ epoch are presented as:
\begin{equation}
    \begin{aligned}
        \bm{u}_{(t+k+1)} &= \left( 1 - \frac{\eta \lambda \epsilon}{\Vert \bm{u}_{(t+k)} \Vert} \right) \bm{u}_{(t+k)} + \eta (1 + \lambda)r \cdot \frac{M_{\mathrm{adv}}(t+k, \eta, \lambda, \epsilon)}{M_{\mathrm{adv}}(t, \eta, \lambda, \epsilon)} M(t, \eta) \bm{v}_{(0)},  \\
        \bm{v}_{(t+k+1)} &= \frac{M_{\mathrm{adv}}(t+k+1, \eta, \lambda, \epsilon)}{M_{\mathrm{adv}}(t, \eta, \lambda, \epsilon)} M(t, \eta) \bm{v}_{(0)}.
    \end{aligned}
\end{equation}

Considering the above update rules, the recommendation error at the $(t+k+1)^{\mathrm{th}}$ epoch is given by:
\begin{equation}
\label{eq:adv_loss_p_t+k+1}
    \begin{aligned}
        & \mathbb{P}_{\bm{v}_{(0)} \sim \mathcal{N}(\Bar{\bm{u}}, \frac{\sigma^2}{n-1}I)}^{~\mathrm{adv}} \left[f_{(t+k+1)}(\bm{u}, \bm{v}) \neq r \mid (\bm{u}, r) \right] \\
        &= \mathbb{P}_{\bm{v}_{(0)} \sim \mathcal{N}(\Bar{\bm{u}}, \frac{\sigma^2}{n-1}I)}\left[ r \cdot \langle \bm{u}_{(t+k+1)}, \bm{v}_{(t+k+1)} \rangle \leq 0 \right] \\
        &= \mathbb{P}_{\bm{v}_{(0)} \sim \mathcal{N}(\Bar{\bm{u}}, \frac{\sigma^2}{n-1}I)}\left[ r \cdot \langle \left( 1 - \frac{\eta \lambda \epsilon}{\Vert \bm{u}_{(t+k)} \Vert} \right) \bm{u}_{(t+k)}, \bm{v}_{(0)} \rangle \leq -\eta (1+\lambda) C_{t+k} \Vert \bm{v}_{(0)}\Vert^2 \right],
    \end{aligned}
\end{equation}
where $C_{t+k} = \frac{M_{\mathrm{adv}}(t+k, \eta, \lambda, \epsilon)}{M_{\mathrm{adv}}(t, \eta, \lambda, \epsilon)} M(t, \eta)$. Let $\gamma^{(\bm{u})}_{(t+k)} = \left( 1 -  \frac{\eta \lambda \epsilon}{\Vert \bm{u}_{(t+k)} \Vert} \right)^{-1}$. Given the condition $\epsilon < \frac{\min(\Vert \bm{u}_{(t+k)} \Vert, \Vert \Bar{\bm{u}} \Vert)}{\eta \lambda}$, it follows $\gamma^{(\bm{u})}_{(t+k)} > 1$. 
The recommendation error at $(t+1)^{\mathrm{th}}$ epoch through adversarial loss can be expressed as:
\begin{equation*}
    \begin{aligned}
        \mathbb{P}_{\bm{v}_{(0)} \sim \mathcal{N}(\Bar{\bm{u}}, \frac{\sigma^2}{n-1}I)}^{~\mathrm{adv}} & \left[f_{(t+k+1)}(\bm{u}, \bm{v}) \neq r \mid (\bm{u}, r) \right] \\
        &= \mathbb{P}_{\bm{v}_{(0)} \sim \mathcal{N}(\Bar{\bm{u}}, \frac{\sigma^2}{n-1}I)}\left[ r \cdot \langle  \bm{u}_{(t+k)}, \bm{v}_{(0)} \rangle \leq -\eta (1+\lambda)\gamma^{(\bm{u})}_{(t+k)} C_{t+k} \Vert \bm{v}_{(0)}\Vert^2 \right].
    \end{aligned}
\end{equation*}

With
\begin{equation*}
    \begin{aligned}
       \mathbb{P}_{\bm{v}_{(0)} \sim \mathcal{N}(\Bar{\bm{u}}, \frac{\sigma^2}{n-1}I)}^{~\mathrm{adv}} \left[f_{(t+k)}(\bm{u}, \bm{v}) \neq r \mid (\bm{u}, r) \right] = \mathbb{P}_{\bm{v}_{(0)} \sim \mathcal{N}(\Bar{\bm{u}}, \frac{\sigma^2}{n-1}I)}\left[ r \cdot \langle  \bm{u}_{(t+k)}, \bm{v}_{(0)} \rangle \leq 0 \right],
    \end{aligned}
\end{equation*}
the change in recommendation error can be written as:
\begin{equation*}
    \begin{aligned}
        \Delta_{(t+k+1)}^{\mathrm{adv}} & \mathbb{P}_{\bm{v}_{(0)} \sim \mathcal{N}(\Bar{\bm{u}}, \frac{\sigma^2}{n-1}I)}^{~\mathrm{adv}} [f(\bm{u}, \bm{v}) \neq r \mid (\bm{u}, r)]  \\
        = & \mathbb{P}_{\bm{v}_{(0)} \sim \mathcal{N}(\Bar{\bm{u}}, \frac{\sigma^2}{n-1}I)} \left[ -\eta (1+\lambda)\gamma^{(\bm{u})}_{(t+k)} C_{t+k} \Vert \bm{v}_{(0)}\Vert^2 < r \cdot \langle  \bm{u}_{(t+k)}, \bm{v}_{(0)} \rangle \leq 0 \right]\\
        = & \mathbb{P}_{\bm{v}_{(0)} \sim \mathcal{N}(\Bar{\bm{u}}, \frac{\sigma^2}{n-1}I)}\left[ -\eta (1+\lambda)\gamma^{(\bm{u})}_{(t+k)} \frac{C_{t+k}}{\Vert \bm{u}_{(t+k)}\Vert} \Vert \bm{v}_{(0)}\Vert^2 < r \cdot \langle  \frac{\bm{u}_{(t+k)}}{\Vert \bm{u}_{(t+k)} \Vert}, \bm{v}_{(0)} \rangle \leq 0 \right]
    \end{aligned}
\end{equation*}

Considering $\bm{v}_{(0)} \sim \mathcal{N}\left(\Bar{\bm{u}}, \frac{\sigma^2}{n-1} I\right)$, let $\bm{\varepsilon} \sim \mathcal{N}\left(\bm{0}, \frac{\sigma^2}{n-1}\bm{I} \right)$, we have $\bm{v}_{(0)} = \Bar{\bm{u}} + \bm{\varepsilon}$. Then we obtain:
\begin{equation*}
    \begin{aligned}
        \Delta_{(t+k+1)}^{\mathrm{adv}} & \mathbb{P}_{\bm{v}_{(0)} \sim \mathcal{N}(\Bar{\bm{u}}, \frac{\sigma^2}{n-1}I)}^{~\mathrm{adv}} [f(\bm{u}, \bm{v}) \neq r \mid (\bm{u}, r)]\\
        = & \mathbb{P}_{\bm{\varepsilon} \sim \mathcal{N}\left(\bm{0}, \frac{\sigma^2}{n-1}\bm{I} \right)} \Bigg[ -\eta (1+\lambda)\gamma^{(\bm{u})}_{(t+k)} \frac{C_{t+k}}{\Vert \bm{u}_{(t+k)}\Vert} \Vert \bm{v}_{(0)}\Vert^2 <  \langle \frac{r\bm{u}_{(t+k)}}{\Vert \bm{u}_{(t+k)} \Vert}, \Bar{\bm{u}} + \bm{\varepsilon} \rangle  \leq 0 \Bigg]\\
        = & \mathbb{P}_{\bm{\varepsilon} \sim \mathcal{N}\left(\bm{0}, \frac{\sigma^2}{n-1}\bm{I} \right)} \Bigg[ - \eta (1+\lambda)\gamma^{(\bm{u})}_{(t+k)} \frac{C_{t+k}}{\Vert \bm{u}_{(t+k)}\Vert} \Vert \bm{v}_{(0)}\Vert^2  -  \langle \frac{r\bm{u}_{(t+k)}}{\Vert \bm{u}_{(t+k)} \Vert}, \Bar{\bm{u}} \rangle <  \langle \frac{r\bm{u}_{(t+k)}}{\Vert \bm{u}_{(t+k)} \Vert}, \bm{\varepsilon} \rangle  \leq -  \langle \frac{r\bm{u}_{(t+k)}}{\Vert \bm{u}_{(t+k)} \Vert}, \Bar{\bm{u}} \rangle \Bigg].
    \end{aligned}
\end{equation*}

Given that $\Vert \Bar{\bm{u}} \Vert \gg \sigma$, we can approximate the $\Vert \bm{v}_{(0)} \Vert^2$ by using $\mathbb{E}_{\bm{v}_{(0)} \sim \mathcal{N}(\Bar{\bm{u}}, \frac{\sigma^2}{n-1}I)} \left[ \Vert \bm{v}_{(0)} \Vert^2 \right]$ as an estimate. Thus, we have:
\begin{equation*}
    \begin{aligned}
        \Delta_{(t+k+1)}^{\mathrm{adv}} & \mathbb{P}_{\bm{v}_{(0)} \sim \mathcal{N}(\Bar{\bm{u}}, \frac{\sigma^2}{n-1}I)}^{~\mathrm{adv}} [f(\bm{u}, \bm{v}) \neq r \mid (\bm{u}, r)]\\
        \approx & \mathbb{P}_{\bm{\varepsilon} \sim \mathcal{N}\left(\bm{0}, \frac{\sigma^2}{n-1}\bm{I} \right)} \Bigg[ - \eta (1+\lambda)\gamma^{(\bm{u})}_{(t+k)} C_{t+k} \frac{ \Vert \Bar{\bm{u}} \Vert^2 + \frac{d\sigma^2}{n-1}}{\Vert \bm{u}_{(t+k)}\Vert}   -  \langle \frac{r\bm{u}_{(t+k)}}{\Vert \bm{u}_{(t+k)} \Vert}, \Bar{\bm{u}} \rangle <  \langle \frac{r\bm{u}_{(t+k)}}{\Vert \bm{u}_{(t+k)} \Vert}, \bm{\varepsilon} \rangle  \leq -  \langle \frac{r\bm{u}_{(t+k)}}{\Vert \bm{u}_{(t+k)} \Vert}, \Bar{\bm{u}} \rangle \Bigg].
    \end{aligned}
\end{equation*}

By Fact~\ref{fac:dis_var}, we have $\langle \frac{r\bm{u}_{(t)}}{\Vert \bm{u}_{(t)} \Vert}, \bm{\varepsilon} \rangle \sim  \mathcal{N}\left(0, \frac{\sigma^2}{n-1} \right)$, then:
\begin{equation*}
    \begin{aligned}
        \Delta_{(t+k+1)}^{\mathrm{adv}} & \mathbb{P}_{\bm{v}_{(0)} \sim \mathcal{N}(\Bar{\bm{u}}, \frac{\sigma^2}{n-1}I)}^{~\mathrm{adv}} [f(\bm{u}, \bm{v}) \neq r \mid (\bm{u}, r)] \\
         = & \Phi \left( \frac{\sqrt{n-1}}{\sigma} \left(-  \langle \frac{r\bm{u}_{(t+k)}}{\Vert \bm{u}_{(t+k)} \Vert}, \Bar{\bm{u}} \rangle \right) \right) - \Phi\left( \frac{\sqrt{n-1}}{\sigma}\left(-\eta (1+\lambda)\gamma^{(\bm{u})}_{(t+k)} C_{t+k} \frac{ \Vert \Bar{\bm{u}} \Vert^2 + \frac{d\sigma^2}{n-1}}{\Vert \bm{u}_{(t+k)}\Vert}  -  \langle \frac{r\bm{u}_{(t+k)}}{\Vert \bm{u}_{(t+k)} \Vert}, \Bar{\bm{u}} \rangle \right) \right),
    \end{aligned}
\end{equation*}
where $\Phi()$ is the CDF of standard Gaussian distribution.

Obviously, 
\begin{equation*}
    \begin{aligned}
        \langle \frac{r\bm{u}_{(t+k)}}{\Vert \bm{u}_{(t+k)} \Vert}, \Bar{\bm{u}} \rangle & \in \left[ -\Vert \Bar{\bm{u}} \Vert, \Vert \Bar{\bm{u}} \Vert\right].
    \end{aligned}
\end{equation*}
Let
\begin{equation*}
     \Psi(\bm{u}, t+k) = (1+\lambda) \gamma_{(t+k)}^{\bm{u}} \frac{C_{t+k}}{\Vert \bm{u}_{(t+k)}\Vert}.
\end{equation*}
Using the CDF properties of the standard normal distribution, we can conclude:
\begin{equation*}
    \begin{aligned}
       \Delta_{(t+k+1)}^{\mathrm{adv}} \mathbb{P}_{\bm{v}_{(0)} \sim \mathcal{N}(\Bar{\bm{u}}, \frac{\sigma^2}{n-1}I)}^{~\mathrm{adv}} & [f(\bm{u}, \bm{v}) \neq r \mid (\bm{u}, r)] \quad \geq \\ 
       & \Phi\left(\frac{\sqrt{n-1}}{\sigma} \left(\Vert \bar{\bm{u}} \Vert + \eta (\Vert \Bar{\bm{u}}\Vert^2 +\frac{d\sigma^2}{n-1}) \Psi(\bm{u}, t+k)\right) \right)
            -  \Phi\left(\frac{\sqrt{n-1}}{\sigma} \left(\Vert \bar{\bm{u}} \Vert\right) \right),
    \end{aligned}
\end{equation*}
equality holds when
\begin{equation*}
    \begin{aligned}
        \langle \frac{r\bm{u}_{(t+k)}}{\Vert \bm{u}_{(t+k)} \Vert}, \Bar{\bm{u}} \rangle &= \Vert \Bar{\bm{u}} \Vert.
    \end{aligned}
\end{equation*}
Furthermore,
\begin{equation*}
    \begin{aligned}
       \Delta_{(t+k+1)}^{\mathrm{adv}} \mathbb{P}_{\bm{v}_{(0)} \sim \mathcal{N}(\Bar{\bm{u}}, \frac{\sigma^2}{n-1}I)}^{~\mathrm{adv}} & [f(\bm{u}, \bm{v}) \neq r \mid (\bm{u}, r)] \quad \leq \quad 2 \Phi\left(\frac{\sqrt{n-1}\eta}{2\sigma} (\Vert \Bar{\bm{u}}\Vert^2 +\frac{d\sigma^2}{n-1}) \Psi(\bm{u}, t+k) \right) - 1.
    \end{aligned}
\end{equation*}
Equality is achieved when 
\begin{equation*}
    \begin{aligned}
        \langle \frac{r\bm{u}_{(t+k)}}{\Vert \bm{u}_{(t+k)} \Vert}, \Bar{\bm{u}} \rangle= - \frac{1}{2} \eta (\Vert \Bar{\bm{u}}\Vert^2 +\frac{d\sigma^2}{n-1}) \Psi(\bm{u}, t+k).
    \end{aligned}
\end{equation*}

Therefore, Theorem~\ref{the:adv_train} is proved.
\end{proof}

\subsubsection{Proof of Theorem~\ref{the:adv_poison_train}}

\begin{proof}[\textbf{Proof of Theorem~\ref{the:adv_poison_train}}]
\label{prf:poi_adv_train}
Given $\epsilon < \frac{\Vert \Bar{\bm{u}} \Vert}{\eta \lambda}$, according to Proposition~\ref{pro:M_t_adv} and Theorem~\ref{the:train} (specifically Equation~\ref{eq:update_adv}), we have:
\begin{equation}
    \begin{aligned}
        \bm{u}_{(t+k+1)} &= \left( 1 - \frac{\eta \lambda \epsilon}{\Vert \bm{u}_{(t+k)} \Vert} \right) \bm{u}_{(t+k)} + \eta (1 + \lambda)r \cdot \frac{M_{\mathrm{adv}}(t+k, \eta, \lambda, \epsilon)}{M_{\mathrm{adv}}(t, \eta, \lambda, \epsilon)} M(t, \eta) \bm{v}_{(0)}^{'},  \\
        \bm{v}_{(t+k+1)}^{'} &= \frac{M_{\mathrm{adv}}(t+k+1, \eta, \lambda, \epsilon)}{M_{\mathrm{adv}}(t, \eta, \lambda, \epsilon)} M(t, \eta) \bm{v}_{(0)}^{'},
    \end{aligned}
\end{equation}
where $\bm{v}_{(0)}^{'}$ is the poisoned item embedding as given by Equation~\ref{eq:prf_poison}.

Considering the above update rules, the $\alpha$-poisoned recommendation error at the $(t+k+1)^{\mathrm{th}}$ epoch is given by:
\begin{equation*}
    \begin{aligned}
        & \mathbb{P}_{\bm{v}_{(0)} \sim \mathcal{N}(\Bar{\bm{u}}, \frac{\sigma^2}{n-1}I)}^{~\mathrm{adv}}  \left[f_{(t+k+1), \alpha}(\bm{u}, \bm{v}') \neq r \mid (\bm{u}, r) \right] \\
        & = \mathbb{P}_{\bm{v}_{(0)} \sim \mathcal{N}(\Bar{\bm{u}}, \frac{\sigma^2}{n-1}I)}\left[ r \cdot \langle \bm{u}_{(t+k+1)}, \bm{v}_{(t+k+1)}^{'} \rangle \leq 0 \right] \\
        &= \mathbb{P}_{\bm{v}_{(0)} \sim \mathcal{N}(\Bar{\bm{u}}, \frac{\sigma^2}{n-1}I)}\left[ r \cdot \langle \left( 1 - \frac{\eta \lambda \epsilon}{\Vert \bm{u}_{(t+k)} \Vert} \right) \bm{u}_{(t+k)}, \bm{v}_{(0)}^{'} \rangle \leq -\eta (1+\lambda) C_{t+k} \Vert \bm{v}_{(0)}^{'}\Vert^2 \right],
    \end{aligned}
\end{equation*}
where $C_{t+k} = \frac{M_{\mathrm{adv}}(t+k+1, \eta, \lambda, \epsilon)}{M_{\mathrm{adv}}(t, \eta, \lambda, \epsilon)} M(t, \eta)$.
Let $\gamma^{(\bm{u})}_{(t+k)} = \left( 1 -  \frac{\eta \lambda \epsilon}{\Vert \bm{u}_{(t+k)} \Vert} \right)^{-1}$. Given the condition $\epsilon < \frac{\min(\Vert \bm{u}_{(t+k)} \Vert, \Vert \Bar{\bm{u}} \Vert)}{\eta \lambda}$, it follows $\gamma^{(\bm{u})}_{(t+k)} > 1$. 
The recommendation error at the $(t+1)^{\mathrm{th}}$ epoch under adversarial loss can be expressed as:
\begin{equation*}
    \begin{aligned}
         \mathbb{P}_{\bm{v}_{(0)} \sim \mathcal{N}(\Bar{\bm{u}}, \frac{\sigma^2}{n-1}I)}^{~\mathrm{adv}} & \left[f_{(t+k+1), \alpha}(\bm{u}, \bm{v}') \neq r \mid (\bm{u}, r) \right] \\
        &= \mathbb{P}_{\bm{v}_{(0)} \sim \mathcal{N}(\Bar{\bm{u}}, \frac{\sigma^2}{n-1}I)}\left[ r \cdot \langle  \bm{u}_{(t+k)}, \bm{v}_{(0)} ^{'}\rangle \leq -\eta (1+\lambda)\gamma^{(\bm{u})}_{(t+k)} C_{t+k} \Vert \bm{v}_{(0)}^{'}\Vert^2 \right].
    \end{aligned}
\end{equation*}

With
\begin{equation*}
    \begin{aligned}
         \mathbb{P}_{\bm{v}_{(0)} \sim \mathcal{N}(\Bar{\bm{u}}, \frac{\sigma^2}{n-1}I)}^{~\mathrm{adv}} & \left[f_{(t+k), \alpha}(\bm{u}, \bm{v}') \neq r \mid (\bm{u}, r) \right]= \mathbb{P}_{\bm{v}_{(0)} \sim \mathcal{N}(\Bar{\bm{u}}, \frac{\sigma^2}{n-1}I)}\left[ r \cdot \langle  \bm{u}_{(t+k)}, \bm{v}_{(0)}' \rangle \leq 0 \right],
    \end{aligned}
\end{equation*}
the change in $\alpha$-poisoned recommendation error can be written as:
\begin{equation*}
    \begin{aligned}
        & \Delta_{(t+k+1)}^{\mathrm{adv}}\mathbb{P}_{\bm{v}_{(0)} \sim \mathcal{N}(\Bar{\bm{u}}, \frac{\sigma^2}{n-1}I)}^{~\mathrm{adv}} [f_{\alpha}(\bm{u}, \bm{v}') \neq r \mid (\bm{u}, r)]\\
        = & \mathbb{P}_{\bm{v}_{(0)} \sim \mathcal{N}(\Bar{\bm{u}}, \frac{\sigma^2}{n-1}I)}\left[ -\eta (1+\lambda)\gamma^{(\bm{u})}_{(t+k)} C_{t+k} \Vert \bm{v}_{(0)}^{'}\Vert^2 < r \cdot \langle  \bm{u}_{(t+k)}, \bm{v}_{(0)}^{'} \rangle \leq 0 \right]\\
        = & \mathbb{P}_{\bm{v}_{(0)} \sim \mathcal{N}(\Bar{\bm{u}}, \frac{\sigma^2}{n-1}I)}\left[ -\eta (1+\lambda)\gamma^{(\bm{u})}_{(t+k)} \frac{C_{t+k}}{\Vert \bm{u}_{(t+k)}\Vert} \Vert \bm{v}_{(0)}^{'}\Vert^2 < r \cdot \langle  \frac{\bm{u}_{(t+k)}}{\Vert \bm{u}_{(t+k)} \Vert}, \bm{v}_{(0)}^{'} \rangle \leq 0 \right]\\
        = & \mathbb{P}_{\bm{v}_{(0)} \sim \mathcal{N}(\Bar{\bm{u}}, \frac{\sigma^2}{n-1}I)}\left[ -\eta (1+\lambda)\gamma^{(\bm{u})}_{(t+k)} \frac{C_{t+k}}{\Vert \bm{u}_{(t+k)}\Vert} \Vert \bm{v}_{(0)}^{'}\Vert^2 < \langle \frac{r\bm{u}_{(t+k)}}{\Vert \bm{u}_{(t+k)} \Vert}, \frac{1}{n + n'} \left( n \bm{v}_{(0)} + \sum_{(\bm{u}^{'}, r') \in \mathcal{I}'} (r' \bm{u}^{'}_{(0)}) \right) \rangle  \leq 0 \right]\\
        = & \mathbb{P}_{\bm{v}_{(0)} \sim \mathcal{N}(\Bar{\bm{u}}, \frac{\sigma^2}{n-1}I)}\Bigg[ -\eta \frac{n + n'}{n} (1+\lambda)\gamma^{(\bm{u})}_{(t+k)} \frac{C_{t+k}}{\Vert \bm{u}_{(t+k)}\Vert} \Vert \bm{v}_{(0)}^{'}\Vert^2 -\frac{1}{n}\sum_{(\bm{u}^{'}, r') \in \mathcal{I}'} r r' \langle \frac{\bm{u}_{(t+k)}}{\Vert \bm{u}_{(t+k)} \Vert}, \bm{u}^{'}_{(0)} \rangle  \\ 
        & \quad \quad \quad \quad \quad \quad \quad \quad <  \langle \frac{r\bm{u}_{(t+k)}}{\Vert \bm{u}_{(t+k)} \Vert}, \bm{v}_{(0)} \rangle  \leq -\frac{1}{n}\sum_{(\bm{u}^{'}, r') \in \mathcal{I}'} r r' \langle \frac{\bm{u}_{(t+k)}}{\Vert \bm{u}_{(t+k)} \Vert}, \bm{u}^{'}_{(0)} \rangle \Bigg]\\
    \end{aligned}
\end{equation*}

Considering $\bm{v}_{(0)} \sim \mathcal{N}\left(\Bar{\bm{u}}, \frac{\sigma^2}{n-1} I\right)$, let $\bm{\varepsilon} \sim \mathcal{N}\left(\bm{0}, \frac{\sigma^2}{n-1}\bm{I} \right)$, we have $\bm{v}_{(0)} = \Bar{\bm{u}} + \bm{\varepsilon}$. Then we obtain:
\begin{equation*}
    \begin{aligned}
        & \Delta_{(t+k+1)}^{\mathrm{adv}}\mathbb{P}_{\bm{v}_{(0)} \sim \mathcal{N}(\Bar{\bm{u}}, \frac{\sigma^2}{n-1}I)}^{~\mathrm{adv}} [f_{\alpha}(\bm{u}, \bm{v}') \neq r \mid (\bm{u}, r)] \\
        = & \mathbb{P}_{\bm{\varepsilon} \sim \mathcal{N}\left(\bm{0}, \frac{\sigma^2}{n-1}\bm{I} \right)}\Bigg[ -\eta \frac{n + n'}{n} (1+\lambda)\gamma^{(\bm{u})}_{(t+k)} \frac{C_{t+k}}{\Vert \bm{u}_{(t+k)}\Vert} \Vert \bm{v}_{(0)}^{'}\Vert^2 -\frac{1}{n}\sum_{(\bm{u}^{'}, r') \in \mathcal{I}'} r r' \langle \frac{\bm{u}_{(t+k)}}{\Vert \bm{u}_{(t+k)} \Vert}, \bm{u}^{'}_{(0)} \rangle  \\ 
        & \quad \quad \quad \quad \quad \quad \quad \quad <  \langle \frac{r\bm{u}_{(t+k)}}{\Vert \bm{u}_{(t+k)} \Vert}, \Bar{\bm{u}} + \bm{\varepsilon} \rangle  \leq -\frac{1}{n}\sum_{(\bm{u}^{'}, r') \in \mathcal{I}'} r r' \langle \frac{\bm{u}_{(t+k)}}{\Vert \bm{u}_{(t+k)} \Vert}, \bm{u}^{'}_{(0)} \rangle \Bigg]\\
        = & \mathbb{P}_{\bm{\varepsilon} \sim \mathcal{N}\left(\bm{0}, \frac{\sigma^2}{n-1}\bm{I} \right)}\Bigg[ -\eta \frac{n + n'}{n} (1+\lambda)\gamma^{(\bm{u})}_{(t+k)} \frac{C_{t+k}}{\Vert \bm{u}_{(t+k)}\Vert} \Vert \bm{v}_{(0)}^{'}\Vert^2 -\frac{1}{n}\sum_{(\bm{u}^{'}, r') \in \mathcal{I}'} r r' \langle \frac{\bm{u}_{(t+k)}}{\Vert \bm{u}_{(t+k)} \Vert}, \bm{u}^{'}_{(0)} \rangle  -  \langle \frac{r\bm{u}_{(t+k)}}{\Vert \bm{u}_{(t+k)} \Vert}, \Bar{\bm{u}} \rangle\\ 
        & \quad \quad \quad \quad \quad \quad \quad \quad <  \langle \frac{r\bm{u}_{(t+k)}}{\Vert \bm{u}_{(t+k)} \Vert}, \bm{\varepsilon} \rangle  \leq -\frac{1}{n}\sum_{(\bm{u}^{'}, r') \in \mathcal{I}'} r r' \langle \frac{\bm{u}_{(t+k)}}{\Vert \bm{u}_{(t+k)} \Vert}, \bm{u}^{'}_{(0)} \rangle -  \langle \frac{r\bm{u}_{(t+k)}}{\Vert \bm{u}_{(t+k)} \Vert}, \Bar{\bm{u}} \rangle \Bigg].
    \end{aligned}
\end{equation*}

Given that $\Vert \Bar{\bm{u}} \Vert \gg \sigma$, we can approximate $\Vert \bm{v}_{(0)}' \Vert^2$ by using:
\begin{equation*}
    \mathbb{E}_{\bm{v}_{(0)} \sim \mathcal{N}(\Bar{\bm{u}}, \frac{\sigma^2}{n-1}I)} \left[ \Vert \bm{v}_{(0)}' \Vert^2 \right] = \mathbb{E}_{\bm{v}_{(0)} \sim \mathcal{N}(\Bar{\bm{u}}, \frac{\sigma^2}{n-1}I)} \left[ \Vert \frac{n}{n+n'}\bm{v}_{(0)} + \frac{1}{n+n'} \sum_{(\bm{u}', r) \in \mathcal{I}} r'\bm{u}' \Vert^2 \right],
\end{equation*}
as an estimate. For simplicity, we use $\mathbb{E} \left[ \Vert \bm{v}_{(0)}' \Vert^2 \right]$ to represent $\mathbb{E}_{\bm{v}_{(0)} \sim \mathcal{N}(\Bar{\bm{u}}, \frac{\sigma^2}{n-1}I)} \left[ \Vert \bm{v}_{(0)}' \Vert^2 \right]$. Thus, we have:

\begin{equation*}
    \begin{aligned}
        & \Delta_{(t+k+1)}^{\mathrm{adv}}\mathbb{P}_{\bm{v}_{(0)} \sim \mathcal{N}(\Bar{\bm{u}}, \frac{\sigma^2}{n-1}I)}^{~\mathrm{adv}} [f_{\alpha}(\bm{u}, \bm{v}') \neq r \mid (\bm{u}, r)] \\
        \approx & \mathbb{P}_{\bm{\varepsilon} \sim \mathcal{N}\left(\bm{0}, \frac{\sigma^2}{n-1}\bm{I} \right)}\Bigg[ -\eta \frac{n + n'}{n} (1+\lambda)\gamma^{(\bm{u})}_{(t+k)} \frac{C_{t+k}}{\Vert \bm{u}_{(t+k)}\Vert} \mathbb{E} \left[ \Vert \bm{v}_{(0)}' \Vert^2 \right] -\frac{1}{n}\sum_{(\bm{u}^{'}, r') \in \mathcal{I}'} r r' \langle \frac{\bm{u}_{(t+k)}}{\Vert \bm{u}_{(t+k)} \Vert}, \bm{u}^{'}_{(0)} \rangle  -  \langle \frac{r\bm{u}_{(t+k)}}{\Vert \bm{u}_{(t+k)} \Vert}, \Bar{\bm{u}} \rangle\\ 
        & \quad \quad \quad \quad \quad \quad \quad \quad <  \langle \frac{r\bm{u}_{(t+k)}}{\Vert \bm{u}_{(t+k)} \Vert}, \bm{\varepsilon} \rangle  \leq -\frac{1}{n}\sum_{(\bm{u}^{'}, r') \in \mathcal{I}'} r r' \langle \frac{\bm{u}_{(t+k)}}{\Vert \bm{u}_{(t+k)} \Vert}, \bm{u}^{'}_{(0)} \rangle -  \langle \frac{r\bm{u}_{(t+k)}}{\Vert \bm{u}_{(t+k)} \Vert}, \Bar{\bm{u}} \rangle \Bigg].
    \end{aligned}
\end{equation*}

By Fact~\ref{fac:dis_var}, we have $\langle \frac{r\bm{u}_{(t+k)}}{\Vert \bm{u}_{(t+k)} \Vert}, \bm{\varepsilon} \rangle \sim  \mathcal{N}\left(0, \frac{\sigma^2}{n-1} \right)$, then:
\begin{equation*}
    \begin{aligned}
        & \Delta_{(t+k+1)}^{\mathrm{adv}}\mathbb{P}_{\bm{v}_{(0)} \sim \mathcal{N}(\Bar{\bm{u}}, \frac{\sigma^2}{n-1}I)}^{~\mathrm{adv}} [f_{\alpha}(\bm{u}, \bm{v}') \neq r \mid (\bm{u}, r)] \\
        = &  \Phi \left( \frac{\sqrt{n-1}}{\sigma} \left( -\frac{1}{n}\sum_{(\bm{u}^{'}, r') \in \mathcal{I}'} r r' \langle \frac{\bm{u}_{(t+k)}}{\Vert \bm{u}_{(t+k)} \Vert}, \bm{u}^{'}_{(0)} \rangle -  \langle \frac{r\bm{u}_{(t+k)}}{\Vert \bm{u}_{(t+k)} \Vert}, \Bar{\bm{u}} \rangle \right) \right)\\
        & - \Phi\left( \frac{\sqrt{n-1}}{\sigma}\left(-\eta \frac{n + n'}{n} (1+\lambda)\gamma^{(\bm{u})}_{(t+k)} \frac{C_{t+k}}{\Vert \bm{u}_{(t+k)}\Vert} \mathbb{E} \left[ \Vert \bm{v}_{(0)}' \Vert^2 \right] -\frac{1}{n}\sum_{(\bm{u}^{'}, r') \in \mathcal{I}'} r r' \langle \frac{\bm{u}_{(t+k)}}{\Vert \bm{u}_{(t+k)} \Vert}, \bm{u}^{'}_{(0)} \rangle  -  \langle \frac{r\bm{u}_{(t+k)}}{\Vert \bm{u}_{(t+k)} \Vert}, \Bar{\bm{u}} \rangle \right) \right),
    \end{aligned}
\end{equation*}
where $\Phi()$ is the CDF of standard Gaussian distribution.

Obviously,
\begin{equation*}
    \begin{aligned}
        \sum_{(\bm{u}^{'}, r') \in \mathcal{I}'} r r' \langle \frac{\bm{u}_{(t+k)}}{\Vert \bm{u}_{(t+k)} \Vert}, \bm{u}^{'}_{(0)} \rangle & \in \left[ -n' \sqrt{d} \alpha, n' \sqrt{d} \alpha\right], \\
        \langle \frac{r\bm{u}_{(t+k)}}{\Vert \bm{u}_{(t+k)} \Vert}, \Bar{\bm{u}} \rangle & \in \left[ -\Vert \Bar{\bm{u}} \Vert, \Vert \Bar{\bm{u}} \Vert\right],\\
        \mathbb{E} \left[ \Vert \bm{v}_{(0)}' \Vert^2 \right] \in \left(\frac{n^2 \Vert \Bar{\bm{u}} \Vert^2 -2nn'\alpha\Vert \Bar{\bm{u}} \Vert_{0}}{(n+n')^2} + \frac{n^2d\sigma^2}{(n-1)(n+n')^2}, \right. &  \left. \frac{n^2 \Vert \Bar{\bm{u}} \Vert^2 + (n')^2 d \alpha^2 + 2nn'\alpha\Vert \Bar{\bm{u}} \Vert_{0}}{(n+n')^2} + \frac{n^2d\sigma^2}{(n-1)(n+n')^2}\right]
    \end{aligned}
\end{equation*}
where $n'$ is the number of fake users, $d$ is the dimension of $\bm{u}'$, and $\alpha = \max_{(\bm{u}', r') \in \mathcal{I}'}\Vert \bm{u}' \Vert_{\infty}$.

Let
\begin{equation*}
    \begin{aligned}
        \Psi(\bm{u}, t+k) = & (1+\lambda) \gamma_{(t+k)}^{\bm{u}} \frac{C_{t+k}}{\Vert \bm{u}_{(t+k)}\Vert},\\
        \beta = & \frac{n'}{n} \sqrt{d} \alpha + \Vert \Bar{\bm{u}} \Vert.
    \end{aligned} 
\end{equation*}
According to the CDF properties of the standard normal distribution, we can conclude that:
\begin{equation*}
    \begin{aligned}
        & \Delta_{(t+k+1)}^{\mathrm{adv}}\mathbb{P}_{\bm{v}_{(0)} \sim \mathcal{N}(\Bar{\bm{u}}, \frac{\sigma^2}{n-1}I)}^{~\mathrm{adv}} [f_{\alpha}(\bm{u}, \bm{v}') \neq r \mid (\bm{u}, r)] \quad > \\ 
        & \Phi\left(\frac{\sqrt{n-1}}{\sigma} \left(\beta + \eta \left(\frac{n^2 \Vert \Bar{\bm{u}}\Vert^2-2nn'\alpha\Vert \Bar{\bm{u}} \Vert_{0}}{n(n + n')} +\frac{n d \sigma^2}{(n-1)(n + n')}\right) \Psi(\bm{u}, t+k)\right) \right) -  \Phi\left(\frac{\sqrt{n-1}}{\sigma} \left( \beta \right) \right),
    \end{aligned}
\end{equation*}
reaches the minimum value when
\begin{equation*}
    \begin{aligned}
        \sum_{(\bm{u}^{'}, r') \in \mathcal{I}'} r r' \langle \frac{\bm{u}_{(t+k)}}{\Vert \bm{u}_{(t+k)} \Vert}, \bm{u}^{'}_{(0)} \rangle &= n' \sqrt{d} \alpha, \\
        \langle \frac{r\bm{u}_{(t+k)}}{\Vert \bm{u}_{(t+k)} \Vert}, \Bar{\bm{u}} \rangle &= \Vert \Bar{\bm{u}} \Vert,
    \end{aligned}
\end{equation*}
and $\mathbb{E} \left[ \Vert \bm{v}_{(0)}' \Vert^2 \right]$ reaches the minimum value $\frac{n^2 \Vert \Bar{\bm{u}} \Vert^2 -2nn'\alpha\Vert \Bar{\bm{u}} \Vert_{0}}{(n+n')^2} + \frac{n^2d\sigma^2}{(n-1)(n+n')^2}$.

Moreover,
\begin{equation*}
    \begin{aligned}
        & \Delta_{(t+k+1)}^{\mathrm{adv}}\mathbb{P}_{\bm{v}_{(0)} \sim \mathcal{N}(\Bar{\bm{u}}, \frac{\sigma^2}{n-1}I)}^{~\mathrm{adv}} [f_{\alpha}(\bm{u}, \bm{v}') \neq r \mid (\bm{u}, r)] \quad \leq \\
            & \quad \quad \quad \quad 2 \Phi \left(\frac{\sqrt{n-1}\eta}{2\sigma}\left(\frac{n^2 \Vert \Bar{\bm{u}} \Vert^2 + (n')^2 \alpha + 2nn'\alpha\Vert \Bar{\bm{u}} \Vert_{0}}{n(n + n')} +\frac{n d \sigma^2}{(n-1)(n + n')}\right) \Psi(\bm{u}, t+k)\right) - 1,
    \end{aligned}
\end{equation*}
equality holds when 
\begin{equation*}
    \begin{aligned}
        \mathbb{E} \left[ \Vert \bm{v}_{(0)}' \Vert^2 \right] &= \frac{n^2 \Vert \Bar{\bm{u}} \Vert^2 + (n')^2 \alpha + 2nn'\alpha\Vert \Bar{\bm{u}} \Vert_{0}}{(n+n')^2}  + \frac{n^2d\sigma^2}{(n-1)(n+n')^2}, \\
        \frac{1}{n}\sum_{(\bm{u}^{'}, r') \in \mathcal{I}'} r r' \langle \frac{\bm{u}_{(t+k)}}{\Vert \bm{u}_{(t+k)} \Vert}, \bm{u}^{'}_{(0)} \rangle + \langle \frac{r\bm{u}_{(t+k)}}{\Vert \bm{u}_{(t+k)} \Vert}, \Bar{\bm{u}} \rangle &= - \frac{1}{2} \eta \frac{n + n'}{n} \mathbb{E} \left[ \Vert \bm{v}_{(0)}' \Vert^2 \right] \Psi(\bm{u}, t+k).
    \end{aligned}
\end{equation*}

Hence, Theorem~\ref{the:adv_poison_train} is proved.
\end{proof}

\subsection{Proofs for Section~\ref{sec:method}}
\label{ap:proof_4}
Given any dot-product-based loss function $\mathcal{L}(\Theta)$, characterized by its dependency on the product of user and item embeddings, the gradients of user and item embeddings at the $t^{\mathrm{th}}$ epoch can be expressed as follows:
\begin{equation}
    \label{eq:phi}
    \begin{aligned}
        \nabla_{\bm{u}_{(t)}} \mathcal{L}(\bm{u}, \bm{v} | \Theta_{(t)}) &= \phi(r, \bm{u}_{(t)}, \bm{v}_{(t)})\bm{v}_{(t)}, \\
        \nabla_{\bm{v}_{(t)}} \mathcal{L}(\bm{u}, \bm{v} | \Theta_{(t)}) &= \psi(r, \bm{u}_{(t)}, \bm{v}_{(t)})\bm{u}_{(t)},
    \end{aligned}
\end{equation}
where $\phi(\cdot)$ and $\psi(\cdot)$ denote coefficient functions derived from $\mathcal{L}(\Theta)$, mapping from the embeddings' space to the scalar values.

Considering the proofs of Theorem~\ref{the:adv_train} and Theorem~\ref{the:adv_poison_train}, there is a coefficient $\gamma^{(\bm{u})}_{(t)}$ for the user $\bm{u}$. When $\gamma^{(\bm{u})}_{(t)} > 1$ is satisfied, the effectiveness of ACF can be guaranteed. Here, we derive the coefficient $\gamma^{(\bm{u})}_{(t)}$ in multi-item recommendation scenarios with dot-product-based loss through the following corollary.

\begin{corollary}
\label{cor:gamma}
    Assuming the incorporation of adversarial training as defined in Equation~\ref{eq:adv_delta} with dot-product-based loss function $\mathcal{L}(\Theta)$, and given the learning rate $\eta$, the adversarial training weight $\lambda$, and the perturbation scale $\epsilon$, the $\gamma^{(\bm{u})}_{(t)}$ for user $\bm{u}$ is given by:
    \begin{equation}
        \gamma^{(\bm{u})}_{(t)} =\left(1 -  \frac{\eta \lambda \epsilon}{\Vert \bm{u}_{t} \Vert} \sum_{\bm{v} \in \mathcal{N}_{\bm{u}}} |\psi(r, \bm{u}_{(t)}, \bm{v}_{(t)})| \right)^{-1},
    \end{equation}
    where $\mathcal{N}_{\bm{u}}$ is the item set that user $\bm{u}$ interacts with.
\end{corollary}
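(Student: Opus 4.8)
The plan is to mirror the derivation of the coefficient $\gamma^{(\bm{u})}_{(t)}$ carried out in the proof of Theorem~\ref{the:train} (specifically the manipulation leading to Equation~\ref{eq:update_adv}), but to execute it for a general dot-product-based loss over the multi-item neighborhood $\mathcal{N}_{\bm{u}}$. The first step I would record is the structural consequence of the ``dot-product'' hypothesis: since $\mathcal{L}$ depends on each pair only through $\langle\bm{u},\bm{v}\rangle$, the two coefficient functions in Equation~\ref{eq:phi} coincide, $\phi(r,\bm{u}_{(t)},\bm{v}_{(t)}) = \psi(r,\bm{u}_{(t)},\bm{v}_{(t)})$. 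I flag this at the outset because it is exactly the identity that will convert sign-dependent contributions into the absolute values $|\psi|$ appearing in the claimed formula.

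Next I would compute the adversarial perturbations by the same first-order Taylor argument used in Theorem~\ref{the:train}, applied per embedding. For each item $\bm{v}\in\mathcal{N}_{\bm{u}}$ the optimal item perturbation is $\Delta_{\bm{v}} = \epsilon\,\nabla_{\bm{v}}\mathcal{L}/\Vert\nabla_{\bm{v}}\mathcal{L}\Vert = \epsilon\,\mathrm{sgn}(\psi(r,\bm{u}_{(t)},\bm{v}_{(t)}))\,\bm{u}_{(t)}/\Vert\bm{u}_{(t)}\Vert$, which points along $\bm{u}_{(t)}$; this is the mechanism producing the $\gamma$-rescaling. By contrast, the user perturbation $\Delta_{\bm{u}}=\epsilon\,\mathrm{sgn}(\phi)\,\bm{v}_{(t)}/\Vert\bm{v}_{(t)}\Vert$ points along the items and enters the user update only inside the argument of $\phi$, so at first order (where $\phi$ is frozen at the unperturbed embeddings) it contributes nothing in the $\bm{u}_{(t)}$ direction.

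Then I would assemble the adversarial update $\bm{u}_{(t+1)} = \bm{u}_{(t)} - \eta\,\nabla_{\bm{u}}\mathcal{L}_{\mathrm{adv}}(\Theta_{(t)})$ with $\mathcal{L}_{\mathrm{adv}} = \mathcal{L} + \lambda\,\mathcal{L}(\Theta+\Delta)$ summed over $\bm{v}\in\mathcal{N}_{\bm{u}}$. Expanding $\nabla_{\bm{u}}\mathcal{L}(\Theta+\Delta) = \sum_{\bm{v}\in\mathcal{N}_{\bm{u}}}\phi\,(\bm{v}_{(t)}+\Delta_{\bm{v}})$ and keeping $\phi$ at its unperturbed value, the component along $\bm{u}_{(t)}$ is $-\eta\lambda\sum_{\bm{v}\in\mathcal{N}_{\bm{u}}}\phi\,\Delta_{\bm{v}} = -\eta\lambda\epsilon\sum_{\bm{v}\in\mathcal{N}_{\bm{u}}}\phi\,\mathrm{sgn}(\psi)\,\bm{u}_{(t)}/\Vert\bm{u}_{(t)}\Vert$. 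Invoking $\phi=\psi$ collapses $\phi\,\mathrm{sgn}(\psi)$ to $|\psi|$, so the coefficient multiplying $\bm{u}_{(t)}$ in the update is $1 - \frac{\eta\lambda\epsilon}{\Vert\bm{u}_{(t)}\Vert}\sum_{\bm{v}\in\mathcal{N}_{\bm{u}}}|\psi(r,\bm{u}_{(t)},\bm{v}_{(t)})|$, and reading off its reciprocal, exactly as $\gamma^{(\bm{u})}_{(t)}$ was defined in Theorem~\ref{the:train}, gives the stated expression.

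The main obstacle I anticipate is bookkeeping the sign structure cleanly: one must verify that it is the item-side perturbations $\Delta_{\bm{v}}$ (not $\Delta_{\bm{u}}$) that feed the $\bm{u}_{(t)}$ direction, and that the product $\phi\,\mathrm{sgn}(\psi)$ is nonnegative term-by-term precisely because of the dot-product symmetry $\phi=\psi$; without this identity the per-item contributions could partially cancel and the absolute-value form would fail. A secondary technical point is justifying the first-order freezing of $\phi$ uniformly across all items of $\mathcal{N}_{\bm{u}}$, which is the same approximation already adopted in the proofs of Theorems~\ref{the:train}--\ref{the:adv_poison_train} and is therefore legitimate to reuse here.
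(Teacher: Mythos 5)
Your proposal is correct and reaches the paper's formula, but the decisive step is carried out by a different mechanism than in the paper's proof. The paper expands the perturbed gradient $\nabla_{\bm{u}_{(t)}}\mathcal{L}(\bm{u},\bm{v}\mid\Theta_{(t)}+\Delta_{\mathrm{adv}})$ to first order in $\Delta$ via second derivatives: the cross term $\left(\nabla_{\bm{u}_{(t)}}\nabla_{\bm{v}_{(t)}}\mathcal{L}\right)^{\top}\Delta_{\bm{v}_{(t)}}$, with $\nabla_{\bm{u}_{(t)}}\nabla_{\bm{v}_{(t)}}\mathcal{L}=\bm{u}_{(t)}(\nabla_{\bm{u}_{(t)}}\psi)^{\top}+\psi\bm{I}$ and $\Delta_{\bm{v}_{(t)}}=\epsilon\,\nabla_{\bm{v}_{(t)}}\mathcal{L}/\Vert\nabla_{\bm{v}_{(t)}}\mathcal{L}\Vert$, yields a $\bm{u}_{(t)}$-direction coefficient of $\psi^{2}/\Vert\nabla_{\bm{v}_{(t)}}\mathcal{L}\Vert=\vert\psi\vert/\Vert\bm{u}_{(t)}\Vert$ per item, so nonnegativity of each summand is automatic from the square and no relation between $\phi$ and $\psi$ is ever invoked. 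You instead keep the perturbed gradient in its exact product form $\phi(\cdot)\,(\bm{v}_{(t)}+\Delta_{\bm{v}})$, freeze $\phi$, read off the $\bm{u}_{(t)}$-component as $\epsilon\,\phi\,\mathrm{sgn}(\psi)/\Vert\bm{u}_{(t)}\Vert$, and then use the identity $\phi=\psi$ to collapse $\phi\,\mathrm{sgn}(\psi)$ into $\vert\psi\vert$. That identity is the load-bearing element of your argument and is indeed valid under the corollary's hypothesis (a dot-product-based loss has the form $g(r,\langle\bm{u},\bm{v}\rangle)$, so both coefficient functions in Equation~\ref{eq:phi} equal $\partial g/\partial s$); the two routes are then forced to agree by equality of mixed partials, your $\phi\bm{I}$ diagonal matching the paper's $\psi\bm{I}$ diagonal. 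What your route buys is economy: no Hessian bookkeeping, and the rank-one $\nabla\phi$, $\nabla\psi$ corrections never need to be computed since, as you correctly note, they lie along $\bm{v}_{(t)}$ and cannot contaminate the shrinkage coefficient. What the paper's route buys is independence from the symmetry $\phi=\psi$: each term's sign is structural ($\psi^{2}\geq 0$), whereas your per-item nonnegativity is contingent on that identity, so in a written-up version you should promote it from a passing remark to an explicit one-line lemma before it is used.
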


\begin{proof}[\textbf{Proof of Corollary~\ref{cor:gamma}}]
Recall Equation~\ref{eq:adv_delta}. Given a dot-product-based loss function $\mathcal{L}(\Theta)$ within the framework of adversarial training:
\begin{equation*}
    \begin{aligned}
    \mathcal{L}_{\mathrm{ACF}}(\Theta) = & \mathcal{L}(\Theta) + \lambda \mathcal{L}(\Theta + \Delta^{\mathrm{adv}}), \\
    \mathrm{where} \quad \Delta^{\mathrm{adv}} = & \arg \max_{\Delta,\, \Vert \Delta \Vert \leq \epsilon} \mathcal{L}(\Theta + \Delta),
    \end{aligned}
\end{equation*}
where $\epsilon > 0$ defines the magnitude of perturbation, and $\lambda$ is the adversarial training weight.
Considering any pair $(\bm{u}, \bm{v})$, the perturbations can be computed as (similar to Equation~\ref{eq:gamma}):
\begin{equation*}
    \begin{aligned}
        \Delta_{\bm{u}_{(t)}} & = \epsilon \frac{\nabla_{\bm{u}_{(t)}} \mathcal{L}(\bm{u}, \bm{v} | \Theta_{(t)})}{\Vert \nabla_{\bm{u}_{(t)}} \mathcal{L}(\bm{u}, \bm{v} | \Theta_{(t)}) \Vert}, \\
        \Delta_{\bm{v}_{(t)}} & = \epsilon \frac{\nabla_{\bm{v}_{(t)}} \mathcal{L}(\bm{u}, \bm{v} | \Theta_{(t)})}{\Vert \nabla_{\bm{v}_{(t)}} \mathcal{L}(\bm{u}, \bm{v} | \Theta_{(t)}) \Vert}.
    \end{aligned}
\end{equation*}

The update equations for the embedding of user $\bm{u}$ at the $t^{\mathrm{th}}$ epoch under adversarial perturbations can be expressed as follows:
\begin{equation*}
    \begin{aligned}
        \bm{u}_{(t+1)} &= \bm{u}_{(t)} - \sum_{\bm{v} \in \mathcal{N}_{\bm{u}}} \left(  \eta \cdot \nabla_{\bm{u}_{(t)}} \mathcal{L}(\bm{u}, \bm{v} \mid \Theta_{(t)}) + \eta \lambda \nabla_{\bm{u}_{(t)}} \mathcal{L}(\bm{u}, \bm{v} \mid \Theta_{(t)} + \Delta_{\mathrm{adv}}) \right),
    \end{aligned}
\end{equation*}
where $\mathcal{N}_{\bm{u}}$ is the set of items that user $\bm{u}$ interacts with.

By employing the first-order Taylor expansion on $(\bm{u}_{(t)} + \Delta_{\bm{u}_{(t)}})$ and $(\bm{v}_{(t)} + \Delta_{\bm{v}_{(t)}})$ , we have:
\begin{equation*}
    \begin{aligned}
         \nabla_{\bm{u}_{(t)}} & \mathcal{L}(\bm{u}, \bm{v} | \Theta_{(t)} + \Delta_{\mathrm{adv}}) \\
         \approx 
         & \nabla_{\bm{u}_{(t)}} \left[ \mathcal{L}(\bm{u}, \bm{v} | \Theta_{(t)}) + \langle \Delta_{\bm{v}_{(t)}}, \nabla_{\bm{v}_{(t)}} \mathcal{L}(\bm{u}, \bm{v} | \Theta_{(t)}) \rangle + \langle \Delta_{\bm{u}_{(t)}}, \nabla_{\bm{u}_{(t)}} \mathcal{L}(\bm{u}, \bm{v} | \Theta_{(t)}) \rangle \right. \\ 
         & \quad \quad \quad \quad \quad \quad \left. + \langle \Delta_{\bm{v}_{(t)}},   \left(\nabla_{\bm{u}_{(t)}} \nabla_{\bm{v}_{(t)}} \mathcal{L}(\bm{u}, \bm{v} | \Theta_{(t)}) \right)^{\top} \Delta_{\bm{u}_{(t)}} \rangle \right] \\
         \approx & \nabla_{\bm{u}_{(t)}} \mathcal{L}(\bm{u}, \bm{v} | \Theta_{(t)}) +  \left(\nabla_{\bm{u}_{(t)}}\nabla_{\bm{v}_{(t)}} \mathcal{L}(\bm{u}, \bm{v} | \Theta_{(t)})\right)^{\top} \Delta_{\bm{v}_{(t)}}  + \left(\nabla_{\bm{u}_{(t)}}^2 \mathcal{L}(\bm{u}, \bm{v} | \Theta_{(t)}) \right)^{\top} \Delta_{\bm{u}_{(t)}}.
    \end{aligned}
\end{equation*}

Subsequently, the update mechanism for the user embedding, incorporating both direct and adversarial gradients, is computed as:
\begin{equation*}
    \begin{aligned}
        \bm{u}_{(t+1)} =& \bm{u}_{(t)} - \sum_{\bm{v} \in \mathcal{N}_{\bm{u}}} \left( \eta \cdot (1+\lambda)\nabla_{\bm{u}_{(t)}} \mathcal{L}(\bm{u}, \bm{v} | \Theta_{(t)}) + \eta \lambda \epsilon  \left( \nabla_{\bm{u}_{(t)}} \nabla_{\bm{v}_{(t)}} \mathcal{L}(\bm{u}, \bm{v} | \Theta_{(t)})\right)^{\top} \left( \frac{\nabla_{\bm{v}_{(t)}} \mathcal{L}(\bm{u}, \bm{v} | \Theta_{(t)})}{\Vert \nabla_{\bm{v}_{(t)}} \mathcal{L}(\bm{u}, \bm{v} | \Theta_{(t)}) \Vert} \right) \right. \\ 
        & \quad \quad \left. + \eta \lambda \epsilon \left( \nabla_{\bm{u}_{(t)}}^2 \mathcal{L}(\bm{u}, \bm{v} | \Theta_{(t)})\right)^{\top} \left( \frac{\nabla_{\bm{u}_{(t)}} \mathcal{L}(\bm{u}, \bm{v} | \Theta_{(t)})}{\Vert \nabla_{\bm{u}_{(t)}} \mathcal{L}(\bm{u}, \bm{v} | \Theta_{(t)}) \Vert} \right)  \right) \\
        =& \bm{u}_{(t)} - \sum_{\bm{v} \in \mathcal{N}_{\bm{u}}} \eta (1+\lambda)\phi(r, \bm{u}_{(t)}, \bm{v}_{(t)})\bm{v}_{(t)} \\
         & \quad \quad -   \sum_{\bm{v} \in \mathcal{N}_{\bm{u}}} \eta \lambda \epsilon \frac{ \psi(r, \bm{u}_{(t)}, \bm{v}_{(t)}) \cdot  \left( \bm{u}_{(t)} \left(\nabla_{\bm{u}_{(t)}}\psi(r, \bm{u}_{(t)}, \bm{v}_{(t)})\right)^{\top} + \psi(r, \bm{u}_{(t)}, \bm{v}_{(t)}) \cdot \bm{I}\right)^{\top}\bm{u}_{(t)}}{\Vert \nabla_{\bm{v}_{(t)}} \mathcal{L}(\bm{u}, \bm{v} | \Theta_{(t)}) \Vert} \\
         & \quad \quad -   \sum_{\bm{v} \in \mathcal{N}_{\bm{u}}} \eta \lambda \epsilon \frac{ \phi(r, \bm{u}_{(t)}, \bm{v}_{(t)}) \cdot  \left(\bm{v}_{(t)} \left(\nabla_{\bm{u}_{(t)}}\phi(r, \bm{u}_{(t)}, \bm{v}_{(t)})\right)^{\top}\right)^{\top}\bm{v}_{(t)}}{\Vert \nabla_{\bm{u}_{(t)}} \mathcal{L}(\bm{u}, \bm{v} | \Theta_{(t)}) \Vert} \\
        = & \bm{u}_{(t)}  - \sum_{\bm{v} \in \mathcal{N}_{\bm{u}}} \eta (1+\lambda)\phi(r, \bm{u}_{(t)}, \bm{v}_{(t)})\bm{v}_{(t)}  \\
        & \quad \quad -   
         \sum_{\bm{v} \in \mathcal{N}_{\bm{u}}} \eta \lambda \epsilon \frac{ \psi^2(r, \bm{u}_{(t)}, \bm{v}_{(t)}) }{\Vert \nabla_{\bm{v}_{(t)}} \mathcal{L}(\bm{u}, \bm{v} | \Theta_{(t)}) \Vert} \bm{u}_{(t)}\\
        & \quad \quad -  
        \sum_{\bm{v} \in \mathcal{N}_{\bm{u}}} \eta \lambda \epsilon \frac{ \psi(r, \bm{u}_{(t)}, \bm{v}_{(t)})  \Vert\bm{u}_{(t)}\Vert^2}{\Vert \nabla_{\bm{v}_{(t)}} \mathcal{L}(\bm{u}, \bm{v} | \Theta_{(t)}) \Vert} \nabla_{\bm{u}_{(t)}}\psi(r, \bm{u}_{(t)}, \bm{v}_{(t)})\\
        & \quad \quad -  
        \sum_{\bm{v} \in \mathcal{N}_{\bm{u}}} \eta \lambda \epsilon \frac{ \phi(r, \bm{u}_{(t)}, \bm{v}_{(t)})  \Vert\bm{v}_{(t)}\Vert^2}{\Vert \nabla_{\bm{u}_{(t)}} \mathcal{L}(\bm{u}, \bm{v} | \Theta_{(t)}) \Vert} \nabla_{\bm{u}_{(t)}}\phi(r, \bm{u}_{(t)}, \bm{v}_{(t)}).
    \end{aligned}
\end{equation*}

Considering Equation~\ref{eq:phi}, for a loss function $\mathcal{L}(\Theta)$ based on dot-products, the coefficients of gradients for user embedding $\bm{u}$ and item embedding $\bm{v}$, denoted as $\psi(\cdot)$ and $\phi(\cdot)$ respectively, are still functions based on the dot-product of user embedding $\bm{u}$ and item embedding $\bm{v}$. Consequently, the gradients of $\psi(\cdot)$ and $\phi(\cdot)$ with respect to user embedding $\bm{u}$ depend on item embedding $\bm{v}$. Specifically, $\nabla_{\bm{u}_{(t)}}\psi(r, \bm{u}_{(t)}, \bm{v}_{(t)}) = \xi(r, \bm{u}_{(t)}, \bm{v}_{(t)}) \bm{v}_{(t)}$ and $\nabla_{\bm{u}_{(t)}}\phi(r, \bm{u}_{(t)}, \bm{v}_{(t)}) = \xi^{'}(r, \bm{u}_{(t)}, \bm{v}_{(t)}) \bm{v}_{(t)}$. Thus, the updated expression for the user embedding $\bm{u}_{(t+1)}$ under adversarial training conditions is delineated as follows:

\begin{equation*}
    \begin{aligned}
        \bm{u}_{(t+1)} = & \left(1 - \sum_{\bm{v} \in \mathcal{N}_{\bm{u}}} \eta \lambda \epsilon \frac{ \psi^2(r, \bm{u}_{(t)}, \bm{v}_{(t)})}{\Vert \nabla_{\bm{v}_{(t)}} \mathcal{L}(\bm{u}, \bm{v} | \Theta_{(t)}) \Vert} \right)\bm{u}_{(t)}  \\
       & \quad \quad - \eta \sum_{\bm{v} \in \mathcal{N}_{\bm{u}}}\left( (1+\lambda)\phi(r, \bm{u}_{(t)}, \bm{v}_{(t)}) + \lambda \epsilon \frac{ \psi(r, \bm{u}_{(t)}, \bm{v}_{(t)}) \xi(r, \bm{u}_{(t)}, \bm{v}_{(t)}) \Vert\bm{u}_{(t)}\Vert^2}{\Vert \nabla_{\bm{v}_{(t)}} \mathcal{L}(\bm{u}, \bm{v} | \Theta_{(t)}) \Vert} \right. \\ 
       & \quad \quad \quad \quad\left. +  \lambda \epsilon \frac{ \phi(r, \bm{u}_{(t)}, \bm{v}_{(t)}) \xi^{'}(r, \bm{u}_{(t)}, \bm{v}_{(t)})  \Vert\bm{v}_{(t)}\Vert^2}{\Vert \nabla_{\bm{u}_{(t)}} \mathcal{L}(\bm{u}, \bm{v} | \Theta_{(t)}) \Vert}     \right) \bm{v}_{(t)}
    \end{aligned}
\end{equation*}

Following the aforementioned Equation~\ref{eq:adv_loss_p_t+1} and Equation~\ref{eq:adv_loss_p_t+k+1}, the $\gamma^{(\bm{u})}$ for the user $\bm{u}$ in the context of multi-item ACF with a dot-product loss function is given by:
\begin{equation}
    \begin{aligned}
    \gamma^{(\bm{u})}_{(t)} = & \left(1 - \sum_{\bm{v} \in \mathcal{N}_{\bm{u}}}\eta \lambda \epsilon \frac{ \psi^2(r, \bm{u}_{(t)}, \bm{v}_{(t)})}{\Vert \nabla_{\bm{v}_{(t)}} \mathcal{L}(\bm{u}, \bm{v} | \Theta_{(t)}) \Vert} \right)^{-1} \\
    = & \left(1 - \sum_{\bm{v} \in \mathcal{N}_{\bm{u}}}\eta \lambda \epsilon \frac{ |\psi(r, \bm{u}_{(t)}, \bm{v}_{(t)})|| \psi(r, \bm{u}_{(t)}, \bm{v}_{(t)})|}{\Vert \nabla_{\bm{v}_{(t)}} \mathcal{L}(\bm{u}, \bm{v} | \Theta_{(t)}) \Vert} \right)^{-1} \\
    = & \left(1 -  \frac{\eta \lambda \epsilon}{\Vert \bm{u}_{t} \Vert} \sum_{\bm{v} \in \mathcal{N}_{\bm{u}}} \vert \psi(r, \bm{u}_{(t)}, \bm{v}_{(t)}) \vert \right)^{-1}.
    \end{aligned}
\end{equation}

Therefore, Corollary~\ref{cor:gamma} is proved.
\end{proof}

\begin{proof}[\textbf{Proof of Corollary~\ref{cor:rank_gamma}}]
For any dot-product-based loss function $\mathcal{L}(\bm{u}, \bm{v}|\Theta)$, the coefficient functions in Equation~\ref{eq:phi} can be given by:
    \begin{equation*}
    \begin{aligned}
        \nabla_{\bm{u}_{(t)}} \mathcal{L}(\bm{u}_{(t)}, \bm{v}_{(t)}|\Theta) &= \phi(r, \bm{u}_{(t)},  \bm{v}_{(t)})\bm{v}_{(t)}, \\
        \nabla_{\bm{v}_{(t)}} \mathcal{L}(\bm{u}_{(t)}, \bm{v}_{(t)}|\Theta) &= \psi(r, \bm{u}_{(t)},  \bm{v}_{(t)})\bm{u}_{(t)}.
    \end{aligned}
\end{equation*}

Building upon Corollary~\ref{cor:gamma}, we can express $\gamma^{(\bm{u})}_{(t)}$ as:
\begin{equation*}
    \begin{aligned}
    \gamma^{(\bm{u})}_{(t)} =\left(1 -  \frac{\eta \lambda \epsilon}{\Vert \bm{u}_{t} \Vert} \sum_{\bm{v} \in \mathcal{N}_{\bm{u}}} \vert \psi(r, \bm{u}_{(t)}, \bm{v}_{(t)}) \vert \right)^{-1}.
    \end{aligned}
\end{equation*}

Considering the proofs of Theorem~\ref{the:adv_train} and Theorem~\ref{the:adv_poison_train}, under $0 < (\gamma^{(\bm{u})}_{(t)})^{-1} < 1$, we can guarantee the effectiveness of ACF. Therefore, it implies:
\begin{equation*}
    \begin{aligned}
    0 < \epsilon^{(\bm{u})}_{(t)} < \Vert\bm{u}_{(t)}\Vert \cdot \frac{1}{\sum_{\bm{v} \in \mathcal{N}_{\bm{u}}} \eta \lambda \vert \psi(r, \bm{u}_{(t)},  \bm{v}_{(t)})\vert}.
    \end{aligned}
\end{equation*}

In actual training, the maximum perturbation magnitudes will also be affected by other factors. From the perspective of Corollary~\ref{cor:gamma}, we can only conclude that the maximum perturbation magnitude $\epsilon^{(\bm{u})}_{(t), \max}$ for user $\bm{u}$ at epoch $t$ is positively related to $\Vert\bm{u}_{(t)}\Vert$.

Therefore, Corollary~\ref{cor:rank_gamma} is proved.
\end{proof}


\end{document}